%% file: main_arXiv.tex
\documentclass[11pt]{article}
\input{preamble_arXiv.tex}

\title{Soft GRAND under Channel Uncertainty:\\Minimax Posterior-Envelope Ordering and\\ Gallager-Exponent Preservation}

\author{%
Behrooz~Razeghi\textsuperscript{\orcidlink{0000-0001-9568-4166}}\\[2pt]
\small Harvard University, Cambridge, MA, USA\\
\small \texttt{behroozrazeghi@seas.harvard.edu},
\texttt{behrooz.razeghi@gmail.com}
}

\date{}

\begin{document}

\maketitle

\begin{abstract}
Soft Guessing Random Additive Noise Decoding queries corrections in an observation-dependent order. For a known channel, the matched order ranks corrections by nonincreasing conditional posterior probability. Under channel uncertainty, admissible channels may induce matched orders that assign different ranks to the same correction at the same observation. We instead use a posterior-envelope order, defined by the pointwise supremum of codebook-independent uniform-input correction posteriors. For fixed measurable posterior representatives and a measurable envelope, this order minimizes, at each observation, the worst-case logarithm of the rank--posterior product over admissible channel--correction pairs with positive posterior probability. For uniform-subset codebooks, this finite-block minimax property does not by itself determine the ensemble-average error exponent, which depends on the distribution of the realized correction rank. For i.i.d. correction--observation pairs with a finite correction alphabet whose joint distribution is dominated by the product of counting measure on that alphabet and a $\sigma$-finite measure on the observation space, Arimoto conditional R\'enyi entropy characterizes the matched rank-moment spectrum. An auxiliary posterior-envelope order preserves this spectrum on $[0,1]$ when its posterior family contains the corresponding conditional power tilts and has subexponential conditional Shtarkov complexity on observation sets whose complement probabilities decay exponentially faster than the reciprocal correction-space cardinality. Under additional differentiability and entropy nondegeneracy conditions, the auxiliary order attains the matched uniform-subset ensemble-average error exponent at every fixed normalized code rate strictly between zero and one. For uniform-input memoryless channels, this exponent equals Gallager's uniform-input random-coding exponent.
\end{abstract}

\keywords{
GRAND, universal decoding, conditional guesswork, minimax regret, error exponents.
}

\vspace{15pt}

\section{Introduction}
\label{sec:introduction}

Soft GRAND decodes by querying candidate corrections in an observation-dependent order. For a known channel, any nonincreasing ordering of the conditional correction posterior induces a likelihood ordering of the corresponding candidate codewords. Without abandonment, the first codebook hit is therefore an ML codeword \cite{solomon2020soft}. Under channel uncertainty, two admissible channels may reverse the posterior order of two corrections at the same observation. Their matched query permutations may consequently assign different ranks to the same correction and may return different codewords from the same fixed codebook. In our formulation, the decoder selects one measurable query permutation for each observation without knowing which admissible conditional posterior applies.

We distinguish the finite-block minimax criterion from the fixed-rate ensemble-average error-exponent criterion. At fixed blocklength, we consider worst-case posterior-normalized rank regret over the prescribed family of conditional correction posteriors. For the uniform-subset code ensemble at a fixed normalized code rate strictly between zero and one, the ensemble-average decoding error exponent depends on the distribution of the realized correction rank, including its right tail, through collisions with nontransmitted codewords. Pointwise minimax rank regret therefore does not by itself imply preservation of the matched uniform-subset ensemble-average error exponent.

\vspace{4pt}

Classical universal decoding studies decoding metrics that do not require the unknown channel parameter and compares their random-coding error exponents with those of matched decoding \cite{merhav1993gaussian,merhav2013universal,huleihel2015isi}. For discrete additive channels, Joudeh analyzes a universal GRAND variant that attains the random-coding error exponent without knowledge of the noise distribution \cite{joudeh2024grand}. Miyamoto and Yang construct noise-guessing decoders using noise-sequence probability estimators and give sufficient conditions under which the resulting decoders attain the matched random-coding exponent \cite{miyamoto2025finite,miyamoto2025universal}. Mismatched noise-guessing analyses characterize error and complexity exponents under prescribed sequence metrics, including power-tilted and universal metrics \cite{miyamoto2026mismatched}. ORB-type methods construct observation-dependent orders from reliability statistics under specified channel models \cite{li2026orbfinite,wan2026orbtype}. Other GRAND methods exploit specified correlation or finite-memory posterior structure, or update the query order under channel switching or drift \cite{razeghi2026lpgrand,razeghi2026tail,razeghi2026switchdrift}. The decoder instead selects one correction permutation for each observation while the conditional correction posterior ranges over a family whose members may induce different matched permutations. In contrast to the discrete additive settings of \cite{joudeh2024grand, miyamoto2025finite,miyamoto2025universal}, the observation $R^n$ may be continuous. The observation-wise minimax posterior-rank problem and preservation of the matched random-coding exponent are treated separately, with the latter based on conditional tilt closure and conditional Shtarkov complexity.

\vspace{4pt}

Ar{\i}kan bounds positive moments of conditional guesswork in terms of Arimoto conditional R\'enyi entropy; for i.i.d. source--side-information pairs, these bounds determine the corresponding asymptotic exponential rates \cite{arikan1996}. Li establishes a large-deviation principle for conditional guesswork \cite{li2019conditional}, while Girish \textit{et al.} derive explicit conditional tail exponents for i.i.d.\ sequences with element-wise side information \cite{girish2026conditional}. Guessing under source uncertainty and universal guessing formulations consider uncertainty in the underlying source distribution \cite{sundaresan2007uncertainty,merhavcohen2019}. These results do not formulate the observation-wise minimax selection of a deterministic correction permutation over a family of conditional posteriors considered here. Shtarkov normalized maximum likelihood minimizes worst-case log-loss regret over probability assignments \cite{shtarkov1987}. Liu \textit{et al.}
extend this framework to sequential probability assignment with contexts through contextual Shtarkov sums and contextual normalized maximum likelihood \cite{liu2024contextual}. The finite-block criterion instead optimizes a permutation under posterior-normalized rank regret.

\vspace{4pt}

The prescribed channel class determines the posterior family in the finite-block minimax criterion, whereas the decoder posterior family determines the product-posterior envelope used in the sufficient conditions for exponent preservation. If the posterior family induced by the prescribed channel class satisfies these conditions, the same family may be used as the decoder posterior family. More generally, the decoder posterior family may contain auxiliary posteriors that are not induced by channels in the prescribed class. The sufficient conditions do not establish that this enlargement is necessary.

\clearpage

We establish the following results.

\begin{enumerate}[leftmargin=*]

\item
For each admissible observation, the posterior-envelope order minimizes the worst-case posterior-normalized rank regret over admissible channel--correction pairs with positive posterior probability, and we characterize the exact pointwise minimax value. The corresponding conditional Shtarkov minimax log-loss regret exceeds this value by at most
$\log_a\mathsf H_{a^n}=\mathcal O(\log n)$, where $\mathsf H_N=\sum_{j=1}^N j^{-1}$. For common-support parametric posterior families satisfying a log-Lipschitz condition, finite posterior-grid mixtures bound both the posterior regret and the induced rank regret by the sum of a covering term and a Lipschitz approximation term. Polynomial covering numbers and uniformly bounded approximation terms therefore give $\mathcal O(\log n)$ rank regret. Under a conditional posterior asymptotic equipartition property (AEP), a posterior-universal assignment induces an order that attains the matched first-order query-rank exponent.

\item
For the uniform-subset code ensemble, we derive an exact finite-block rank--collision identity expressing the ensemble-average decoding error probability in terms of the distribution of the realized correction rank. Consequently, convergence in probability of the normalized logarithmic rank alone does not determine the fixed-rate ensemble-average error exponent. For i.i.d. correction--observation pairs with a finite correction alphabet and joint distribution dominated by counting measure on that alphabet times
a $\sigma$-finite measure on a standard Borel observation space, Arimoto conditional R'enyi entropy characterizes the matched positive rank-moment spectrum. When these pairs are induced by a uniform-input memoryless channel, the matched rank-moment spectrum, exposed-point right-tail bounds, and the rank--collision identity yield, under the stated differentiability and nondegeneracy conditions, the matched uniform-subset ensemble-average error exponent for every fixed rate $0<R_{\rm c}<1$ and every code-size sequence $M_n$ satisfying $\log_a M_n=nR_{\rm c}+o(n)$. This exponent equals Gallager's uniform-input random-coding exponent.

\item
We give sufficient conditions under which an auxiliary posterior-envelope order preserves the matched rank-moment spectrum for moment orders $\rho\in[0,1]$. Preservation holds when the decoder posterior family is $1$-tilt closed at the channel parameter and has subexponential conditional Shtarkov complexity on observation sets whose complement probabilities decay exponentially faster than the reciprocal correction-space cardinality. If the matched spectrum is continuously differentiable on $[0,1+\epsilon]$ for some $\epsilon>0$ and the one-letter conditional correction entropy is strictly between zero and one in base-$a$ units, then, for every fixed $0<R_{\rm c}<1$ and every code-size sequence satisfying $\log_a M_n=nR_{\rm c}+o(n)$, the auxiliary order attains the matched uniform-subset ensemble-average error exponent. For a uniform-input memoryless channel, this exponent equals Gallager's uniform-input random-coding exponent. The decoder posterior family may contain auxiliary posteriors that are not induced by channels in the prescribed channel class.

\item
For generalized-Gaussian BPSK with uncertain shape and scale, we verify the posterior-grid and exponent-preservation conditions. At fixed shape, changing the scale leaves the correction order unchanged, whereas changing the shape can reverse the relative order of two correction patterns for the same received block. Conditional power tilting maps $(\beta,\sigma) \longmapsto \left(\beta,\sigma(1+\rho)^{1/\beta}\right)$, so an auxiliary scale range can be chosen to contain all required tilts for $\rho\in[0,1]$. Finite posterior-grid mixtures give $\mathcal O(\log n)$ posterior regret and induced rank regret on suitable observation sets. These observation sets can be chosen with arbitrarily large base-2 exponential decay exponent for their complement probabilities, and hence to satisfy the bad-set condition required for exponent preservation. Separately, the product-posterior envelope over the corresponding tilt-containing decoder posterior family preserves the matched rank-moment spectrum on $[0,1]$ and, under the fixed-rate code-size scaling specified above, attains the matched uniform-subset ensemble-average error exponent.

\end{enumerate}

\section{Related Work}
\label{sec:prior}

For a discrete additive channel with known noise distribution, GRAND queries noise sequences in nonincreasing probability and, without abandonment, returns an ML codeword. For random codebooks, its error and complexity exponents were characterized in \cite{duffy2019capacity}. SGRAND uses soft observations to query candidate corrections in nonincreasing likelihood order for additive memoryless channels and, without abandonment, returns an ML codeword \cite{solomon2020soft}. SRGRAND and DSGRAND incorporate quantized reliability information into the query order \cite{duffy2021guessing,yuan2026discretized}. These methods construct the query order under a specified channel or reliability model. The soft observation is instead retained while the conditional correction posterior ranges over a prescribed family whose members may induce different matched orders at the same observation.

Ar{\i}kan bounds positive moments of optimal guesswork in terms of R\'enyi entropy \cite{arikan1996}, and Christiansen and Duffy establish a large-deviation principle for normalized logarithmic guesswork \cite{christiansen2012guesswork}. Sundaresan studies guessing under source uncertainty with side information and identifies strategies that minimize the supremum guessing redundancy over families of joint source distributions \cite{sundaresan2007uncertainty}. Merhav and Cohen construct universal randomized guessing strategies for unknown memoryless and finite-state sources, including settings with side information, derive their guessing exponents, and establish asymptotic optimality \cite{merhavcohen2019}. Li establishes a large-deviation principle for conditional guesswork \cite{li2019conditional}. For i.i.d. sequences with element-wise side information, Girish \textit{et al.} derive explicit conditional tail exponents using type counting and conditional tilted distributions \cite{girish2026conditional}. For memoryless finite-alphabet sources, Beirami \textit{et al.} show that two source distributions induce the same guessing order if and only if they belong to the same tilted family \cite{beirami2018tilting}. Salamatian \textit{et al.} characterize asymptotic mismatched guesswork using tilted and exponential families \cite{salamatian2019mismatched}, while Mariona \textit{et al.} formulate nonasymptotic mismatch through a variant of Kendall's tau distance between optimal guessing functions \cite{mariona2023mismatch}. These works study guesswork under specified source models or source uncertainty. The present decision problem instead selects, for each soft observation, one deterministic correction permutation against a family of conditional posteriors and evaluates its worst-case posterior-normalized rank regret.

Classical universal decoding constructs decoding metrics that do not require the unknown channel parameter and compares their random-coding error exponents with those of matched decoding over prescribed channel or metric classes \cite{merhav1993gaussian,federlapidoth1998,merhav2013universal,huleihel2015isi}. Universal channel coding for general output alphabets, including continuous
outputs, is studied in \cite{hayashi2018universal}. For memoryless discrete additive channels, Joudeh analyzes a universal GRAND variant that does not require knowledge of the noise distribution and attains the random-coding error exponent \cite{joudeh2024grand}. Miyamoto and Yang use noise-sequence probability estimators to construct deterministic and randomized universal noise-guessing decoders. For parametric discrete additive channels, they give sufficient conditions for random-coding strong universality \cite{miyamoto2025finite}; for finite-state additive channels, they establish random-coding universality \cite{miyamoto2025universal}. Mismatched noise-guessing analyses characterize error and complexity exponents under fixed, $\alpha$-tilted, and universal sequence metrics \cite{miyamoto2026mismatched}. Mariona \textit{et al.} derive finite-length coding bounds via guesswork \cite{mariona2024finite}, while Tan and Joudeh derive refined asymptotic results for constant-composition guessing decoders with abandonment \cite{tanjoudeh2025abandon}. For discrete additive channels, the guessing-based formulations order noise sequences, whereas the metric-based universal-decoding and general-output-alphabet channel-coding formulations do not use observation-dependent correction permutations. The decoder instead observes $R^n$, which may be continuous, and selects, for each observation, a deterministic correction permutation over a family of conditional posteriors. The finite-block criterion is pointwise minimax posterior-normalized rank regret, while exponent preservation is established separately under conditional tilt-closure and Shtarkov-complexity conditions.

ORBGRAND constructs an observation-dependent query order from the ranks of reliability magnitudes \cite{duffy2022ordered}. For antipodal signaling over additive white Gaussian noise, Liu \textit{et al.} derive an achievable rate for ORBGRAND with i.i.d.\ random codebooks and compare it with the channel capacity \cite{liu2022orbgrand}. CDF-ORBGRAND applies inverse-CDF companding to reliability ranks and achieves symmetric capacity for binary-input memoryless channels \cite{li2025rankcompanding}. Li and Zhang derive an ORBGRAND-specific random-coding union bound and a second-order achievable-rate expansion \cite{li2026orbfinite}. Wan and Zhang derive exact block-error, stopping-time, and average-test-count expressions and show that ordering error patterns by nonincreasing average guessing posterior is optimal over the error-pattern set under consideration \cite{wan2026orbtype}. These analyses use a specified channel model or induced reliability distribution; their ordering criteria differ from pointwise minimax optimization over a family of conditional posteriors.

GRAND-MO, ORBGRAND-AI, and SGRAND-ISI construct correction orders that exploit specified channel memory or correlation structure \cite{an2022keep,duffy2023orbgrandai,li2026isi}. GRAND-MO uses a Markov noise model, ORBGRAND-AI modifies reliability-based ordering to exploit channel correlation, and SGRAND-ISI constructs an ML-equivalent order for linear Gaussian intersymbol-interference channels. LP-GRAND gives exact likelihood-ordered enumeration for specified correlated Gaussian models, TC-GRAND computes posterior weights and tail masses under a specified finite-memory posterior, and the switching-and-drift formulation uses state-path mixtures for switching and pilot-based tracking for drift \cite{razeghi2026lpgrand,razeghi2026tail,razeghi2026switchdrift}. Wiame \textit{et al.} address fading-channel estimation errors by testing multiple channel-estimate candidates that provide soft inputs to ORBGRAND \cite{wiame2025channel}. These methods construct query orders from specified, estimated, or tracked channel or posterior information; their ordering criteria differ from pointwise minimax optimization over a prescribed family of conditional posteriors.

Sundaresan establishes bounds relating guessing functions to length functions satisfying Kraft's inequality \cite{sundaresan2007length}. When the Shtarkov normalizer is finite, normalized maximum likelihood minimizes worst-case log-loss regret over probability assignments \cite{shtarkov1987}. Liu \textit{et al.} extend this minimax-regret framework to sequential probability assignment with contexts through contextual Shtarkov sums and contextual normalized maximum likelihood \cite{liu2024contextual}. Jia \textit{et al.} derive minimax-regret bounds for sequential probability assignment with and without side information in terms of sequential square-root entropy \cite{jia2025minimax}. These probability-assignment formulations optimize probability assignments under log loss, whereas the finite-block problem optimizes a permutation of candidate corrections under posterior-normalized rank regret.

\section{Channel Model and Posterior-Ordered Soft GRAND}
\label{sec:model}

\subsection{Finite input, general soft observation}

Let $(\A,\oplusA)$ be a finite Abelian group with $|\A|=a\ge2$. All logarithms, entropies, and divergences are in base $a$ unless stated otherwise, and rates are measured in base-$a$ units per channel use. A length-$n$ codebook is a subset $\C_n\subseteq\A^n$ of size $M_n$. Messages are equiprobable, so, conditional on $\C_n$, the transmitted codeword $X^n$ is uniform on $\C_n$. Let $\Wcal$ denote the prescribed channel class. Fix a labeling $\A=\{a_1,\ldots,a_{|\A|}\}$. For every $n$, let $\prec_n$ be the induced lexicographic total order on $\A^n$. Whenever an ordering of corrections is induced by posterior, codelength, assignment, envelope, or score values, ties are resolved by $\prec_n$.

The receiver observes $R^n$ taking values in a standard Borel space $\Rcal_n$. We allow $R^n$ to contain all receiver-side information used by the decoder, including channel outputs, channel-state estimates, and pilot-derived statistics. Deterministic receiver preprocessing can be represented by a statistic $S^n=T_n(R^n)$. No conditional independence is assumed among the components of a composite observation $R^n$. When a one-letter memoryless model is considered later, $R$ denotes the one-letter receiver observation; $R_{\rm c}$ denotes the normalized code rate.

For each $W\in\Wcal$ and blocklength $n$, let $W_n(\mathrm d r^n\mid x^n)$ denote the corresponding stochastic kernel from $\A^n$ to $\Rcal_n$. For every $x^n\in\A^n$, assume that $W_n(\mathrm d r^n\mid x^n) = w_{W,n}(r^n\mid x^n)\,\mu_n(\mathrm d r^n)$, where $w_{W,n}(\cdot\mid x^n)$ is a selected measurable density and $\mu_n$ is a common $\sigma$-finite measure on $\Rcal_n$. These density representatives are fixed throughout and form part of the uncertainty model. Pointwise posterior envelopes are defined relative to these representatives. For an uncountable channel class, channel-dependent $\mu_n$-null sets need not be contained in one common $\mu_n$-null set; hence a pointwise envelope is not determined solely by the kernels' almost-everywhere equivalence classes. The global finite-block minimax criterion below uses the $\mu_n$-essential supremum and is unchanged if all selected representatives are modified on a common $\mu_n$-null set.

Whenever a pointwise posterior supremum over an uncountable class is used, we assume that the resulting envelope is measurable. This is automatic for finite or countable classes. If the parameter space is a separable metric space and, on a common observation domain, the posterior is measurable in the observation for each parameter and continuous in the parameter for each observation, then the envelope is measurable because its supremum equals the supremum over a countable dense parameter subset. The generalized-Gaussian family in Section~\ref{sec:signalexamples} satisfies these conditions.

A deterministic, codebook-independent reference decision is a measurable map $h_n\colon\Rcal_n\to\A^n$. It need not be matched to the unknown channel. Given $r^n$, every candidate word $x^n\in\A^n$ has the unique correction representation
\begin{equation}
e^n=x^n\ominusA h_n(r^n),
\qquad
x^n=h_n(r^n)\oplusA e^n.
\label{eq:bijection}
\end{equation}
For the transmitted codeword, define $E^n=X^n\ominusA h_n(R^n)$. For binary BPSK with $h_n$ given by a symbolwise hard-decision slicer, $E_i=1$ if and only if the slicer output differs from $X_i$.

\subsection{Uniform-input correction posterior}

To define a codebook-independent correction posterior, introduce an auxiliary uniform input $U^n\sim\mathrm{Unif}(\A^n)$, with $R^n$ generated according to $W_n(\cdot\mid U^n)$, and define $E_U^n=U^n\ominusA h_n(R^n)$. For a fixed channel $W$ and an observation $r^n$ for which the denominator below is positive, define
\begin{equation}
p_{W,n}(e^n\mid r^n) = \frac{w_{W,n}(r^n\mid h_n(r^n)\oplusA e^n)}{\sum_{u^n\in\A^n}w_{W,n}(r^n\mid h_n(r^n)\oplusA u^n)}.
\label{eq:posterior}
\end{equation}
This ratio is a version of the conditional pmf $\Prb\{E_U^n=e^n\mid R^n=r^n\}$ under the auxiliary uniform-input distribution. Since the denominator in \eqref{eq:posterior} is independent of $e^n$, the posterior probabilities and the likelihoods of the corresponding candidate words induce the same weak ordering.

\begin{proposition}[Matched posterior ordering yields ML decoding]
\label{prop:matchedml}
Fix $W$, an observation $r^n$ for which the denominator in \eqref{eq:posterior} is positive, a codebook $\C_n$, and the reference map $h_n$. Order $e^n\in\A^n$ in nonincreasing
$p_{W,n}(e^n\mid r^n)$ and query the candidate word $h_n(r^n)\oplusA e^n$ for membership in $\C_n$. The first codeword found is an ML codeword,
\begin{equation}
\hat x_{\rm ML}^n\in \argmax_{c^n\in\C_n}w_{W,n}(r^n\mid c^n).
\end{equation}
\end{proposition}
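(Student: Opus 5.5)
The plan is to exploit the bijection \eqref{eq:bijection} between corrections and candidate words at the fixed observation $r^n$, together with the fact that the denominator in \eqref{eq:posterior} does not depend on $e^n$. Write $Z(r^n)=\sum_{u^n\in\A^n}w_{W,n}(r^n\mid h_n(r^n)\oplusA u^n)>0$. For every $e^n$ we then have
\begin{equation*}
p_{W,n}(e^n\mid r^n)=\frac{w_{W,n}\bigl(r^n\mid h_n(r^n)\oplusA e^n\bigr)}{Z(r^n)} .
\end{equation*}
Multiplying by the positive constant $Z(r^n)$ preserves the weak order. Hence ordering corrections by nonincreasing posterior is the same as ordering the candidate words $x^n=h_n(r^n)\oplusA e^n$ by nonincreasing likelihood $w_{W,n}(r^n\mid x^n)$. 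The tie-breaking rule via $\prec_n$ only fixes the order inside classes of equal likelihood.

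The query map $e^n\mapsto h_n(r^n)\oplusA e^n$ is a bijection of $\A^n$, since group translation is invertible. So the query sequence visits every word of $\A^n$ exactly once. Because $\C_n$ is nonempty, some codeword is eventually queried, and the first hit $\hat x^n=h_n(r^n)\oplusA \hat e^n$ exists; without abandonment there is no early stop.

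Next, suppose some $c^n\in\C_n$ had $w_{W,n}(r^n\mid c^n)>w_{W,n}(r^n\mid \hat x^n)$. Its correction $c^n\ominusA h_n(r^n)$ would have strictly larger posterior than $\hat e^n$. It would therefore precede $\hat e^n$ in any nonincreasing ordering, whatever the tie-breaking. That means $c^n$ would have been queried first and found in $\C_n$, contradicting that $\hat x^n$ is the first hit. Hence $w_{W,n}(r^n\mid \hat x^n)\ge w_{W,n}(r^n\mid c^n)$ for all $c^n\in\C_n$, which is the ML property.

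There is no real obstacle. The only care points are the positivity of $Z(r^n)$, which is assumed, so the posterior is well defined, and noting that ties leave the maximizer set nonunique. This is why the statement uses $\in\argmax$.
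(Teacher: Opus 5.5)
Your proof is correct and takes essentially the same route as the paper. The common positive normalizer makes the posterior and likelihood weak orders coincide under the bijection \eqref{eq:bijection}, so the first codebook hit maximizes the likelihood over $\C_n$; your existence argument for the first hit and your contradiction step just spell out what the paper states in one line.
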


\begin{proof}
By \eqref{eq:bijection}, every $c^n\in\C_n$ corresponds to the unique correction $e^n=c^n\ominusA h_n(r^n)$. For fixed $r^n$, $p_{W,n}(e^n\mid r^n)$ equals $w_{W,n}(r^n\mid c^n)$ multiplied by the same positive normalizing factor for every $c^n\in\C_n$. Hence the posterior and likelihood weak orderings coincide, and the first queried correction producing a codeword maximizes the likelihood over $\C_n$.
\end{proof}

\begin{remark}[Posterior normalization]
For fixed $r^n$, multiplying all candidate likelihoods by the same positive factor does not change their ordering. The normalization in \eqref{eq:posterior} produces a probability mass function on the finite correction space $\A^n$ without changing this ordering. This normalization is used below in the definitions of Shtarkov normalization and log-loss regret.
\end{remark}

\begin{remark}[Metric realizations and ties]
More generally, any measurable codebook-independent score $s_n(x^n,r^n)$ induces a correction order by ordering $s_n(h_n(r^n)\oplusA e^n,r^n)$ in nonincreasing value, with a fixed deterministic tie-breaking rule. The first queried candidate belonging to $\C_n$ is then an element of $\argmax_{c^n\in\C_n}s_n(c^n,r^n)$. A metric decoder that maximizes the same score over $\C_n$ and uses the same candidate-word tie-breaking order agrees with this query decoder. Different tie-breaking rules may select different elements of the same argmax set.
\end{remark}

\begin{proposition}[Reference-decision invariance]
\label{prop:href}
Fix $W$ and let $h_n$ and $\widetilde h_n$ be two measurable, codebook-independent reference maps. For every observation $r^n$ for which \eqref{eq:posterior} is defined, the correction posteriors induced by $h_n$ and $\widetilde h_n$ differ only by a permutation of $\A^n$. Hence they have the same multiset of posterior probabilities. Under the auxiliary uniform-input distribution, define $E_{U,h}^n=U^n\ominusA h_n(R^n)$ and $E_{U,\widetilde h}^n=U^n\ominusA\widetilde h_n(R^n)$. Then
\begin{equation}
H(E_{U,h}^n\mid R^n) = H(E_{U,\widetilde h}^n\mid R^n) = H(U^n\mid R^n).
\label{eq:entinv}
\end{equation}
The matched correction orderings induced by the two reference maps therefore differ only by the corresponding relabeling, up to permutations within posterior-tie classes. This statement does not concern the codebook-restricted posterior for an arbitrary fixed deterministic codebook.
\end{proposition}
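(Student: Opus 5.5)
The plan is to compare the two posteriors through the candidate words they both describe. Fix $r^n$ for which \eqref{eq:posterior} is defined. The first observation is that this condition does not depend on the reference map: as $u^n$ ranges over $\A^n$, so does $h_n(r^n)\oplusA u^n$. Hence the denominator in \eqref{eq:posterior} equals $\sum_{x^n\in\A^n}w_{W,n}(r^n\mid x^n)$ for either map.

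Next, set $d^n=h_n(r^n)\ominusA\widetilde h_n(r^n)$ and define $\pi_{r^n}(e^n)=e^n\oplusA d^n$. This is a bijection of $\A^n$, since it is translation in a finite group. By \eqref{eq:bijection}, $\widetilde h_n(r^n)\oplusA\pi_{r^n}(e^n)=h_n(r^n)\oplusA e^n$. Because the two denominators agree, we get $\widetilde p_{W,n}(\pi_{r^n}(e^n)\mid r^n)=p_{W,n}(e^n\mid r^n)$ for every $e^n$. This gives the permutation statement and the equality of the multisets of posterior probabilities. The same relation shows that the matched orderings coincide after relabeling by $\pi_{r^n}$, up to the order within tie classes. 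That qualification is needed because the lexicographic tie-breaking by $\prec_n$ is not translation invariant.

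For the entropy identity \eqref{eq:entinv}, I will use that conditional entropy of a discrete variable given $R^n$ is $\int H(p(\cdot\mid r^n))\,P_{R^n}(\mathrm dr^n)$ for any regular version $p$ of the conditional pmf. The paper already identifies \eqref{eq:posterior} as such a version of the law of $E_{U,h}^n$ given $R^n$. Entropy of a pmf on a finite set is invariant under permutations, so the integrands for $h_n$ and $\widetilde h_n$ agree pointwise wherever \eqref{eq:posterior} is defined. The exceptional set, where the denominator vanishes, has $P_{R^n}$-probability zero, because the $R^n$-marginal has $\mu_n$-density $a^{-n}\sum_x w_{W,n}(r^n\mid x^n)$.

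The final equality with $H(U^n\mid R^n)$ comes from taking the trivial map $h_n\equiv 0$, or more directly from the fact that $U^n\mapsto U^n\ominusA h_n(R^n)$ is a bijection for each fixed value of $R^n$. This bijection preserves the conditional pmf up to relabeling, and $h_n(R^n)$ is $\sigma(R^n)$-measurable.

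The main obstacle is measure-theoretic: justifying that the pointwise formula is a valid version of the conditional law jointly measurable in $r^n$, so that the entropy integral is well defined. Measurability of $r^n\mapsto\pi_{r^n}$ follows from measurability of $h_n$ and $\widetilde h_n$ into the finite set $\A^n$. I would verify the version property by checking $\Prb\{E_U^n=e^n,R^n\in B\}=\int_B p_{W,n}(e^n\mid r)P_{R^n}(\mathrm dr)$ via Fubini over the finite sum.
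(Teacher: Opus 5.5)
Your proposal is correct and follows the paper's own argument. For fixed $r^n$, the two correction labelings are related by a bijection of $\A^n$, which you make explicit as translation by $h_n(r^n)\ominus\widetilde h_n(r^n)$. Hence the posteriors are permutations of one another, and the conditional entropy is unchanged because it averages permutation-invariant pointwise entropies. Beyond the paper's brief proof, you also check three things it leaves implicit: the normalizer does not depend on the reference map, lexicographic tie-breaking is not translation invariant (which explains the tie-class qualification), and the exceptional set is $P_{R^n}$-null.
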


\begin{proof}
For each fixed $r^n$, both maps $u^n\mapsto u^n\ominusA h_n(r^n)$ and $u^n\mapsto u^n\ominusA\widetilde h_n(r^n)$ are bijections on $\A^n$. Their composition therefore permutes the correction labels and the corresponding posterior probabilities. Moreover, each auxiliary correction is a bijective function of $U^n$ for fixed $R^n$, with inverse determined by $R^n$. Conditional entropy is therefore preserved, which gives \eqref{eq:entinv}.
\end{proof}

Therefore, the reference map need not depend on the unknown channel. Changing the reference map relabels the corrections but leaves the likelihood weak ordering of the corresponding candidate words unchanged.

\subsection{Universal Soft GRAND}

A deterministic, measurable, codebook-independent soft query rule is a family of bijections $\pi_n(r^n)\colon $ $\{1,\ldots,a^n\}\to\A^n$ such that $r^n\mapsto\pi_n(r^n)(j)$ is measurable for every $j\in\{1,\ldots,a^n\}$. Its rank function is
\begin{equation}
G_{\pi_n}(e^n\mid r^n)=\pi_n(r^n)^{-1}(e^n).
\label{eq:rankdef}
\end{equation}
Universal Soft GRAND queries
\begin{equation}
c_j^n(r^n)=h_n(r^n)\oplusA\pi_n(r^n)(j),
\qquad j=1,\ldots,a^n,
\end{equation}
and stops at the smallest $j$ for which $c_j^n(r^n)\in\C_n$. The query order may depend on the prescribed channel class $\Wcal$ and on the observation $r^n$, but is independent of the unknown channel $W\in\Wcal$ and of the codebook $\C_n$. The codebook enters only through the membership test.
For the transmitted codeword, define the correction rank
\begin{equation}
D_n=G_{\pi_n}(E^n\mid R^n).
\label{eq:Dn}
\end{equation}
For the auxiliary uniform-input experiment, define
\begin{equation}
D_{U,n}=G_{\pi_n}(E_U^n\mid R^n).
\label{eq:DUn}
\end{equation}

\begin{remark}[Distribution used for rank statements]
\label{rem:uniformbenchmark}
For a fixed codebook and observation, $p_{W,n}(\cdot\mid r^n)$ is a codebook-independent normalization of the candidate likelihoods. Its nonincreasing ordering therefore agrees with the likelihood weak ordering, and Proposition~\ref{prop:matchedml} yields an ML codeword. Distributional statements involving the correction posterior, posterior self-information, or query rank are evaluated under the auxiliary experiment with $U^n\sim\mathrm{Unif}(\A^n)$ and $R^n$ generated according to $W_n(\cdot\mid U^n)$, except where another distribution is specified. Under the uniform-subset ensemble used later, averaging over the random codebook and the uniform transmitted message makes $X^n$ uniform on $\A^n$. Consequently, the ensemble-averaged joint distribution of $(X^n,R^n)$ equals the auxiliary joint distribution of $(U^n,R^n)$. Because $h_n$ and $\pi_n$ are codebook-independent, the ensemble-averaged joint distribution of $(E^n,R^n,D_n)$ equals the joint distribution of $(E_U^n,R^n,D_{U,n})$ under the auxiliary experiment. For a fixed deterministic codebook, however, the conditional distribution of $X^n$ given $R^n$ is supported on $\C_n$, and the induced conditional distribution of $E^n$ need not equal $p_{W,n}(\cdot\mid R^n)$.
\end{remark}

\section{Conditional Query Rank and First-Order Universality}
\label{sec:firstorder}

\subsection{Conditional Posterior AEP}

For a fixed channel $W$, let $(U^n,R^n)$ have the joint distribution induced by $U^n\sim\mathrm{Unif}(\A^n)$ and $W_n$, and define $E_U^n=U^n\ominusA h_n(R^n)$. The set of observations at which the normalizer in \eqref{eq:posterior} vanishes has zero probability under the auxiliary output distribution. For the distributional statements in this section, extend $p_{W,n}(\cdot|r^n)$ measurably to this set, for example by assigning the uniform pmf on $\A^n$. This convention does not change the active-channel sets or pointwise posterior envelopes used in the finite-block minimax criterion. We write $\Prb_W$, $\E_W$, and $H_W$ for probability, expectation, and entropy under the auxiliary uniform-input distribution associated with $W$. Define the posterior self-information
\begin{equation}
\jmath_{W,n}(e^n,r^n) \coloneqq -\log p_{W,n}(e^n|r^n),
\label{eq:jpost}
\end{equation}
with the convention $-\log 0=+\infty$, and let $\jmath_{W,n}\coloneqq\jmath_{W,n}(E_U^n,R^n)$.

\begin{assumption}[Conditional posterior AEP]
\label{assump:aep}
There exists a constant $h(W)$ such that
\begin{equation}
\frac{1}{n}\jmath_{W,n}\xrightarrow{P}h(W).
\label{eq:aep}
\end{equation}
\end{assumption}

Assumption~\ref{assump:aep} does not require memorylessness. If $(E_U^n,R^n)$ are the length-$n$ marginals of a finite-alphabet stationary ergodic pair process, the assumption follows by applying the Shannon--McMillan--Breiman theorem to the joint and marginal processes \cite{coverthomas2006}. In the symbolwise memoryless setting considered below, it follows because the sample average of the i.i.d. posterior self-information terms converges in probability to their common expectation. Any $h(W)$ satisfying \eqref{eq:aep} necessarily lies in $[0,1]$. Since $\jmath_{W,n}\ge0$, any probability limit in \eqref{eq:aep} is nonnegative. Moreover, for every $\delta>0$,
\begin{equation}
\Prb_W\!\left\{\jmath_{W,n}>n(1+\delta)\right\} \le a^{-n\delta}.
\end{equation}
To obtain this bound, condition on $R^n=r^n$ and write $p(e^n)=p_{W,n}(e^n|r^n)$. Then
\begin{equation*}
\sum_{e^n:\,p(e^n)<a^{-n(1+\delta)}}p(e^n)
\le a^n a^{-n(1+\delta)} = a^{-n\delta}.
\end{equation*}
If $h(W)>1$, choosing $0<\delta<h(W)-1$ would make the probability on the left converge to one by \eqref{eq:aep}, contradicting the displayed bound. Hence $h(W)\le1$.

\subsection{Conditional Kraft assignments}

\begin{definition}[Conditional Kraft-admissible codelength]
A measurable function $L_n\colon\A^n\times\Rcal_n\to[0,+\infty]$ is conditionally Kraft-admissible if, for every $r^n\in\Rcal_n$,
\begin{equation}
\sum_{e^n\in\A^n}a^{-L_n(e^n|r^n)}\le1.
\label{eq:condkraft}
\end{equation}
It induces an order $\pi_n^L(r^n)$ by nondecreasing $L_n(e^n|r^n)$, with the prescribed deterministic tie-breaking rule.
\end{definition}

If $Q_n$ is a measurable conditional pmf on $\A^n$ given $\Rcal_n$, then $L_n(e^n|r^n)=-\log Q_n(e^n|r^n)$ satisfies \eqref{eq:condkraft} with equality, with the convention $-\log 0=+\infty$.

\begin{lemma}[Conditional Kraft rank bound]
\label{lem:kraftrank}
For every conditionally Kraft-admissible $L_n$,
\begin{equation}
G_{\pi_n^L}(e^n|r^n)\le a^{L_n(e^n|r^n)}
\label{eq:kraftrank}
\end{equation}
for all $(e^n,r^n)\in\A^n\times\Rcal_n$.
\end{lemma}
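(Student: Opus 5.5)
The plan is the standard counting argument behind Kraft-type rank bounds, carried out pointwise in the observation. Fix $r^n\in\Rcal_n$ and $e^n\in\A^n$, and abbreviate $L(\cdot)=L_n(\cdot|r^n)$. If $L(e^n)=+\infty$, the right-hand side of \eqref{eq:kraftrank} is $+\infty$ and there is nothing to prove. So we may assume $L(e^n)<\infty$.

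The order $\pi_n^L(r^n)$ sorts corrections by nondecreasing $L$ and breaks ties by $\prec_n$. Hence every correction $f^n$ with $G_{\pi_n^L}(f^n|r^n)\le G_{\pi_n^L}(e^n|r^n)$ satisfies $L(f^n)\le L(e^n)$: it precedes $e^n$ or equals it, and a preceding element either has strictly smaller codelength or ties with $e^n$. The rank $G=G_{\pi_n^L}(e^n|r^n)$ counts exactly these $G$ corrections. This is the only place where the tie-breaking rule enters, and the rule is harmless because tied elements still satisfy the weak inequality.

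We then apply the conditional Kraft inequality \eqref{eq:condkraft}, keeping only these $G$ terms, each of which is nonnegative:
\begin{equation*}
1\ \ge\ \sum_{f^n:\,G_{\pi_n^L}(f^n|r^n)\le G} a^{-L(f^n)}\ \ge\ G\,a^{-L(e^n)}.
\end{equation*}
The second inequality holds because each retained term is at least $a^{-L(e^n)}$, which follows from $L(f^n)\le L(e^n)$ and $a\ge2$. Rearranging gives $G\le a^{L(e^n)}$, which is \eqref{eq:kraftrank}.

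No real obstacle is expected. The only points needing care are the $+\infty$ convention for $L$ and the observation that the bound holds for each $r^n$ separately, since Kraft admissibility is imposed for every $r^n$. Measurability plays no role in this pointwise inequality.
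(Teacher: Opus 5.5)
Your proposal is correct and follows the paper's proof: both handle the case $L_n(e^n|r^n)=+\infty$ separately, use the fact that every correction ranked at or before $e^n$ has codelength at most $L_n(e^n|r^n)$, and then apply the conditional Kraft inequality for the fixed $r^n$. The only cosmetic difference is that the paper bounds the size of the whole set $B=\{u^n: L_n(u^n|r^n)\le L_n(e^n|r^n)\}$ and notes $G_{\pi_n^L}(e^n|r^n)\le |B|$, whereas you sum directly over the first $G_{\pi_n^L}(e^n|r^n)$ ranked corrections.
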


\begin{proof}
Fix $(e^n,r^n)\in\A^n\times\Rcal_n$. For this fixed $r^n$, the following is the length-function argument for guessing~\cite{sundaresan2007length} applied to $u^n\mapsto L_n(u^n|r^n)$. If $L_n(e^n|r^n)=+\infty$, then \eqref{eq:kraftrank} is immediate. Otherwise, let $B=\left\{u^n\in\A^n\,\middle|\,L_n(u^n|r^n)\le L_n(e^n|r^n)\right\}$. By \eqref{eq:condkraft}, $1 \ge \sum_{u^n\in B}a^{-L_n(u^n|r^n)} \ge |B|a^{-L_n(e^n|r^n)}$, and hence $|B|\le a^{L_n(e^n|r^n)}$. Since $\pi_n^L(r^n)$ orders corrections by nondecreasing codelength, for every $u^n\in\A^n$,
\[
G_{\pi_n^L}(u^n|r^n) \le G_{\pi_n^L}(e^n|r^n)
\quad\Longrightarrow\quad
L_n(u^n|r^n)\le L_n(e^n|r^n),
\]
and therefore $u^n\in B$. Thus $G_{\pi_n^L}(e^n|r^n) \le |B| \le a^{L_n(e^n|r^n)}$, which proves \eqref{eq:kraftrank}.
\end{proof}

\begin{definition}[Posterior-universal conditional assignment]
\label{def:posterioruniv}
For a fixed $W$, a sequence of measurable conditional pmfs $\{Q_n\}$ is posterior-universal at $W$ if
\begin{equation}
\frac{1}{n} \log\frac{p_{W,n}(E_U^n|R^n)}{Q_n(E_U^n|R^n)}
\xrightarrow{P}0
\label{eq:postuniv}
\end{equation}
under the auxiliary uniform-input distribution associated with $W$. It is posterior-universal over $\Wcal$ if \eqref{eq:postuniv} holds for every $W\in\Wcal$. The logarithmic ratio is $+\infty$ whenever $Q_n(E_U^n|R^n)=0<p_{W,n}(E_U^n|R^n)$.
\end{definition}

\begin{lemma}[Negative-tail bound for posterior regret]
\label{lem:regretnegativetail}
Let $R^n$ have an arbitrary distribution, let $p_n(\cdot|r^n)$ and $Q_n(\cdot|r^n)$ be conditional pmfs on $\A^n$, and let $Z^n|R^n=r^n\sim p_n(\cdot|r^n)$. For every $t>0$,
\begin{equation}
\Prb\!\left\{ \log\frac{p_n(Z^n|R^n)}{Q_n(Z^n|R^n)}<-t \right\}
\le a^{-t}.
\label{eq:regretnegativetail}
\end{equation}
The ratio is interpreted as $+\infty$ when $Q_n(Z^n|R^n)=0<p_n(Z^n|R^n)$; no absolute-continuity assumption of $p_n(\cdot|r^n)$ with respect to $Q_n(\cdot|r^n)$ is required.
\end{lemma}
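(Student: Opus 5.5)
The plan is to condition on the observation and apply a Markov-type argument to the likelihood ratio $Q_n/p_n$ under $p_n$, then average over $R^n$. Fix $r^n$ and write $p(z)=p_n(z|r^n)$ and $Q(z)=Q_n(z|r^n)$ for $z\in\A^n$. Let
\begin{equation*}
B_t(r^n)=\bigl\{z\in\A^n : p(z)>0,\ \log\bigl(p(z)/Q(z)\bigr)<-t\bigr\}.
\end{equation*}
Points with $p(z)=0$ carry no conditional probability, so they can be excluded. Points with $Q(z)=0<p(z)$ have ratio $+\infty$ by the stated convention, so they never lie in $B_t(r^n)$. This is where no absolute-continuity assumption is needed: only points with $Q(z)>0$ can contribute to the event.

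On $B_t(r^n)$ we have $p(z)<a^{-t}Q(z)$. Summing over this set gives
\begin{equation*}
\Prb\{Z^n\in B_t(r^n)\mid R^n=r^n\}=\sum_{z\in B_t(r^n)}p(z)\le a^{-t}\sum_{z\in B_t(r^n)}Q(z)\le a^{-t},
\end{equation*}
since $Q(\cdot|r^n)$ is a pmf. This step is the discrete analogue of the bound $\E_p[Q/p]\le 1$.

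Finally, integrate this conditional bound against the arbitrary law of $R^n$. The set $\{(z,r^n): \log(p_n(z|r^n)/Q_n(z|r^n))<-t\}$ is measurable because $p_n$ and $Q_n$ are measurable conditional pmfs. Therefore the unconditional probability equals the expectation of the conditional probability, which is at most $a^{-t}$. This yields \eqref{eq:regretnegativetail}.

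No step is a real obstacle. The only point requiring care is the bookkeeping of the zero and infinite conventions together with measurability, both handled above. The argument is essentially the one used for the $h(W)\le1$ bound in the conditional posterior AEP discussion.
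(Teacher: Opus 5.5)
Your proposal is correct and follows essentially the same argument as the paper: condition on $R^n=r^n$, bound the $p_n$-mass of the set where $p_n(\cdot|r^n)<a^{-t}Q_n(\cdot|r^n)$ by $a^{-t}$ times its $Q_n$-mass (hence by $a^{-t}$), then average over $R^n$. Your explicit treatment of the $p_n=0$ and $Q_n=0<p_n$ cases just spells out what the paper's strict-inequality set handles implicitly.
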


\begin{proof}
Condition on $R^n=r^n$ and define
\[
A_t(r^n) = \left\{ e^n\in\A^n \,\middle|\, p_n(e^n|r^n)<a^{-t}Q_n(e^n|r^n) \right\}.
\]
Then
\begin{equation}
\sum_{e^n\in A_t(r^n)}p_n(e^n|r^n) \le a^{-t} \sum_{e^n\in A_t(r^n)}Q_n(e^n|r^n)
\le a^{-t}.
\end{equation}
Averaging over $R^n$ gives \eqref{eq:regretnegativetail}.
\end{proof}

Consequently, to establish \eqref{eq:postuniv}, it is sufficient to control the positive tail of the normalized posterior regret, because, for every $\epsilon>0$,
\begin{equation}
\Prb_W\!\left\{ \frac{1}{n} \log\frac{p_{W,n}(E_U^n|R^n)}{Q_n(E_U^n|R^n)} <-\epsilon \right\}
\le a^{-n\epsilon}.
\end{equation}
Under Assumption~\ref{assump:aep}, posterior universality at $W$ therefore implies
\begin{equation}
-\frac{1}{n}\log Q_n(E_U^n|R^n) \xrightarrow{P}h(W).
\label{eq:Qlengthconv}
\end{equation}

\subsection{Converse for observation-dependent orderings}

For a sequence of real-valued random variables $\{Z_n\}$, define
\[
\pliminf_{n\to\infty} Z_n \coloneqq \sup\left\{z\in\mathbb R: \Prb\{Z_n<z\}\to0\right\},
\qquad
\plimsup_{n\to\infty} Z_n \coloneqq \inf\left\{z\in\mathbb R: \Prb\{Z_n>z\}\to0\right\}.
\]

\begin{lemma}[Lower bound on query-rank exponent]
\label{lem:converse}
Suppose Assumption~\ref{assump:aep} holds for $W$. For every deterministic, measurable, codebook-independent observation-dependent ordering sequence $\boldsymbol\pi=\{\pi_n\}$,
\begin{equation}
\pliminf_{n\to\infty}\frac{1}{n}\log G_{\pi_n}(E_U^n|R^n) \ge h(W).
\label{eq:rankconverse}
\end{equation}
\end{lemma}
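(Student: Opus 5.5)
We compare the rank of the true correction with its posterior probability, conditionally on each observation. The key elementary fact is that, for a fixed $r^n$ and any bijection $\pi_n(r^n)$, the corrections with rank at most $K$ form a set of cardinality at most $K$. Hence their total posterior mass is at most $K$ times the largest posterior value in that set.

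Fix $\delta>0$ and set $K_n=\lfloor a^{n(h(W)-2\delta)}\rfloor$. We split the event $\{G_{\pi_n}(E_U^n|R^n)\le K_n\}$ according to whether the posterior self-information $\jmath_{W,n}$ is large or small. This gives
\[
\Prb_W\{G_{\pi_n}\le K_n\}\le \Prb_W\{G_{\pi_n}\le K_n,\ \jmath_{W,n}\ge n(h(W)-\delta)\}+\Prb_W\{\jmath_{W,n}<n(h(W)-\delta)\}.
\]
The second term tends to zero by Assumption~\ref{assump:aep}. For the first term, we condition on $R^n=r^n$ and use the conditional law $p_{W,n}(\cdot|r^n)$ of $E_U^n$ under the auxiliary experiment, taking the measurable extension on the null set where the normalizer vanishes. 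The conditional probability is then a sum of $p_{W,n}(e^n|r^n)$ over those $e^n$ with $G_{\pi_n}(e^n|r^n)\le K_n$ and $p_{W,n}(e^n|r^n)\le a^{-n(h(W)-\delta)}$. There are at most $K_n$ such $e^n$, so the conditional probability is at most
\[
K_n a^{-n(h(W)-\delta)}\le a^{-n\delta}.
\]
This bound is uniform in $r^n$, so averaging over $R^n$ preserves it.

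Measurability of the events involved follows from the measurability of $r^n\mapsto\pi_n(r^n)(j)$ for each $j$ and of the posterior representative. It follows that $\Prb_W\{\frac1n\log G_{\pi_n}(E_U^n|R^n)\le h(W)-2\delta\}\to0$. By the definition of $\pliminf$, this yields $\pliminf\ge h(W)-2\delta$ for every $\delta>0$, which proves \eqref{eq:rankconverse}.

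The case $h(W)=0$ needs no argument, since $G_{\pi_n}\ge1$ gives a nonnegative log-rank. The only delicate point is the conditioning step: the counting bound must be applied under the auxiliary uniform-input law, where $p_{W,n}(\cdot|r^n)$ is a genuine version of the conditional pmf of $E_U^n$ given $R^n$. It is not applied for a fixed codebook, in line with Remark~\ref{rem:uniformbenchmark}. No other obstacle is expected.
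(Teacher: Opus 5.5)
Your proof is correct and follows essentially the same route as the paper. The paper also bounds the posterior mass of the first $\lceil a^{n(h(W)-\delta)}\rceil$ corrections by two pieces: the mass of the high-posterior set $\{p_{W,n}>a^{-n(h(W)-\eta)}\}$, which vanishes by the AEP, plus the rank count times the threshold. Its parameter pair $(\delta,\eta)$ plays the role of your $(2\delta,\delta)$, and your event split on $\jmath_{W,n}$ is the same decomposition.
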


\begin{proof}
If $h(W)=0$, the result follows from $G_{\pi_n}(E_U^n|R^n)\ge1$. Suppose $h(W)>0$. Fix $0<\delta<h(W)$ and choose $0<\eta<\delta$. Let
\[
 t_{n,\delta}=\ceil{a^{n(h(W)-\delta)}},
\]
and let $A_n(r^n)$ denote the first $t_{n,\delta}$ corrections under $\pi_n(r^n)$. Define
\[
B_n(r^n) = \left\{ e^n\in\A^n \,\middle|\, p_{W,n}(e^n|r^n)>a^{-n(h(W)-\eta)} \right\}.
\]
By Assumption~\ref{assump:aep},
\begin{equation}
\Prb_W\{E_U^n\in B_n(R^n)\} = \Prb_W\{\jmath_{W,n}<n(h(W)-\eta)\} \to0.
\end{equation}
For every $r^n$,
\begin{align}
\sum_{e^n\in A_n(r^n)}p_{W,n}(e^n|r^n) \le
\sum_{e^n\in B_n(r^n)}p_{W,n}(e^n|r^n) + t_{n,\delta} a^{-n(h(W)-\eta)}.
\end{align}
Since $t_{n,\delta}=a^{n(h(W)-\delta)+o(n)}$, averaging over $R^n$ gives
\begin{align}
\Prb_W\{D_{U,n}\le t_{n,\delta}\} \le \Prb_W\{E_U^n\in B_n(R^n)\} + a^{-n(\delta-\eta)+o(n)} \to0.
\end{align}
Hence
\[
\pliminf_{n\to\infty} \frac{1}{n}\log D_{U,n}  \ge h(W)-\delta.
\]
Letting $\delta\downarrow0$ proves \eqref{eq:rankconverse}.
\end{proof}

\begin{theorem}[First-order query-rank exponent]
\label{thm:exactrank}
Suppose Assumption~\ref{assump:aep} holds for $W$ and $\{Q_n\}$ is posterior-universal at $W$. Let $\pi_n^Q$ order corrections in nonincreasing $Q_n(\cdot|r^n)$, using the prescribed deterministic tie-breaking rule. Then
\begin{equation}
\frac{1}{n}\log G_{\pi_n^Q}(E_U^n|R^n) \xrightarrow{P}h(W).
\label{eq:exactrank}
\end{equation}
\end{theorem}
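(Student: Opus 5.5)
Convergence in probability to $h(W)$ means that for every $\epsilon>0$ both $\Prb_W\{\frac1n\log G_{\pi_n^Q}(E_U^n|R^n)<h(W)-\epsilon\}\to0$ and $\Prb_W\{\frac1n\log G_{\pi_n^Q}(E_U^n|R^n)>h(W)+\epsilon\}\to0$. We prove these two tails separately.

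\emph{Lower tail.} The ordering $\pi_n^Q$ is deterministic, measurable, and codebook-independent. It is measurable because $Q_n$ is measurable and ties are broken by the fixed order $\prec_n$. Lemma~\ref{lem:converse} therefore applies directly and gives $\pliminf_{n\to\infty}\frac1n\log G_{\pi_n^Q}(E_U^n|R^n)\ge h(W)$. This is exactly the statement that the lower-tail probability vanishes for every $\epsilon>0$.

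\emph{Upper tail.} Set $L_n(e^n|r^n)=-\log Q_n(e^n|r^n)$, with $-\log0=+\infty$. Since $Q_n(\cdot|r^n)$ is a pmf, $L_n$ satisfies \eqref{eq:condkraft} with equality and is conditionally Kraft-admissible. Ordering by nondecreasing $L_n$ is the same as ordering by nonincreasing $Q_n$, with the same tie-breaking rule, so $\pi_n^L=\pi_n^Q$. Lemma~\ref{lem:kraftrank} then gives, pointwise,
\[
\frac1n\log G_{\pi_n^Q}(E_U^n|R^n)\le -\frac1n\log Q_n(E_U^n|R^n).
\]
By \eqref{eq:Qlengthconv}, which follows from posterior universality together with Assumption~\ref{assump:aep}, the right side converges in probability to $h(W)$. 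Hence $\Prb_W\{\frac1n\log G_{\pi_n^Q}>h(W)+\epsilon\}\le\Prb_W\{-\frac1n\log Q_n(E_U^n|R^n)>h(W)+\epsilon\}\to0$.

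We must also rule out the event $Q_n(E_U^n|R^n)=0$, where the bound reads $G\le+\infty$ and is vacuous. On this event the regret in \eqref{eq:postuniv} is $+\infty$ whenever $p_{W,n}(E_U^n|R^n)>0$. The event $p_{W,n}(E_U^n|R^n)=0$ itself has probability zero, since $E_U^n$ is drawn from $p_{W,n}(\cdot|R^n)$. So posterior universality forces $\Prb_W\{Q_n(E_U^n|R^n)=0\}\to0$, and the upper-tail bound holds. Combining the two tails proves \eqref{eq:exactrank}.

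All the substantive work sits in earlier results: Lemma~\ref{lem:converse}, Lemma~\ref{lem:kraftrank}, and \eqref{eq:Qlengthconv}, the last of which relies on the negative-tail bound of Lemma~\ref{lem:regretnegativetail}. The only points needing care are the identification $\pi_n^L=\pi_n^Q$, including tie-breaking, and the handling of zero-probability assignments, which is the step I expect to need checking.
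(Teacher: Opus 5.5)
Your proposal is correct and follows the paper's proof: the upper tail comes from Lemma~\ref{lem:kraftrank} with $L_n=-\log Q_n$ combined with \eqref{eq:Qlengthconv}, and the lower tail comes from Lemma~\ref{lem:converse}. Your separate handling of the event $Q_n(E_U^n|R^n)=0$ is correct but redundant. On that event $-\frac1n\log Q_n(E_U^n|R^n)=+\infty$, so \eqref{eq:Qlengthconv} already forces its probability to vanish, and the extended-real inequality from Lemma~\ref{lem:kraftrank} covers it automatically.
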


\begin{proof}
With $L_n=-\log Q_n$, Lemma~\ref{lem:kraftrank} and \eqref{eq:Qlengthconv} give
\[
\plimsup_{n\to\infty} \frac{1}{n}\log G_{\pi_n^Q}(E_U^n|R^n) \le h(W).
\]
Lemma~\ref{lem:converse} gives
\[
\pliminf_{n\to\infty} \frac{1}{n}\log G_{\pi_n^Q}(E_U^n|R^n) \ge h(W).
\]
\end{proof}

\begin{remark}[Information-spectrum form]
\label{rem:infospectrum}
The AEP is not required for the underlying spectral rank bounds. Define
\begin{align}
\underline h(W) &= \pliminf_{n\to\infty} -\frac1n\log p_{W,n}(E_U^n|R^n),\\
\overline h(W) &= \plimsup_{n\to\infty} -\frac1n\log p_{W,n}(E_U^n|R^n).
\end{align}
For every deterministic, measurable, codebook-independent ordering sequence $\boldsymbol\pi=\{\pi_n\}$, the proof of Lemma~\ref{lem:converse} gives
\begin{equation}
\pliminf_{n\to\infty}\frac1n \log G_{\pi_n}(E_U^n|R^n) \ge \underline h(W).
\end{equation}
If a sequence of measurable conditional pmfs $\{Q_n\}$ satisfies
\begin{equation}
\plimsup_{n\to\infty}\frac1n \log\frac{p_{W,n}(E_U^n|R^n)}{Q_n(E_U^n|R^n)}
\le0,
\end{equation}
then Lemma~\ref{lem:kraftrank} gives
\begin{equation}
\plimsup_{n\to\infty}\frac1n \log G_{\pi_n^Q}(E_U^n|R^n) \le \overline h(W).
\end{equation}
\end{remark}

\begin{definition}[Minimax soft query-rank exponent]
For a class $\Wcal$ for which Assumption~\ref{assump:aep} holds for every $W\in\Wcal$, define
\begin{equation}
q^\star(\Wcal) = \inf_{\boldsymbol\pi} \sup_{W\in\Wcal} \plimsup_{n\to\infty} \frac{1}{n}\log
G_{\pi_n}(E_U^n|R^n),
\label{eq:qstar}
\end{equation}
where the infimum is over deterministic, measurable, codebook-independent ordering sequences $\boldsymbol\pi=\{\pi_n\}$ that may depend on $\Wcal$ but not on the unknown $W\in\Wcal$.
\end{definition}

\begin{theorem}[Minimax first-order soft query-rank exponent]
\label{thm:minimax}
If there exists a sequence $\{Q_n\}$ that is posterior-universal over $\Wcal$, then
\begin{equation}
q^\star(\Wcal)=\sup_{W\in\Wcal}h(W),
\label{eq:minimax}
\end{equation}
and the ordering sequence induced by $\{Q_n\}$ attains the infimum in \eqref{eq:qstar}.
\end{theorem}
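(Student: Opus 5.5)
The plan is to establish the two inequalities separately: achievability via Theorem~\ref{thm:exactrank} and the converse via Lemma~\ref{lem:converse}. Throughout, assume as in the definition of $q^\star(\Wcal)$ that Assumption~\ref{assump:aep} holds for every $W\in\Wcal$.

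\emph{Achievability.} Let $\{Q_n\}$ be posterior-universal over $\Wcal$, and let $\boldsymbol\pi^Q=\{\pi_n^Q\}$ order corrections by nonincreasing $Q_n(\cdot|r^n)$ with the prescribed tie-breaking. I first check that $\boldsymbol\pi^Q$ is admissible in \eqref{eq:qstar}.
\begin{itemize}
\item It is deterministic and codebook-independent.
\item It depends on $\Wcal$ only through the choice of $\{Q_n\}$, not on the unknown $W$.
\item It is measurable. Each $r^n\mapsto\pi_n^Q(r^n)(j)$ is determined by finitely many measurable comparisons among the values $Q_n(e^n|r^n)$, $e^n\in\A^n$, together with the fixed order $\prec_n$. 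Hence each event $\{\pi_n^Q(r^n)(j)=e^n\}$ is a finite Boolean combination of measurable sets.
\end{itemize}
Now fix $W\in\Wcal$. Since $\{Q_n\}$ is posterior-universal at $W$ and Assumption~\ref{assump:aep} holds, Theorem~\ref{thm:exactrank} gives convergence in probability of $\frac1n\log G_{\pi_n^Q}(E_U^n|R^n)$ to $h(W)$. In particular the $\plimsup$ equals $h(W)$. Taking the supremum over $W$ gives $q^\star(\Wcal)\le\sup_{W}\plimsup_n\frac1n\log G_{\pi_n^Q}=\sup_{W\in\Wcal}h(W)$.

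\emph{Converse.} Let $\boldsymbol\pi$ be any admissible ordering sequence and fix $W\in\Wcal$. Lemma~\ref{lem:converse} gives $\pliminf_n\frac1n\log G_{\pi_n}(E_U^n|R^n)\ge h(W)$. It remains to show that the $\plimsup$ is at least the $\pliminf$ for any sequence $Z_n$. Suppose $z>\plimsup Z_n$ and $z'<\pliminf Z_n$ with $z<z'$. Then both $\Prb\{Z_n>z\}\to0$ and $\Prb\{Z_n<z'\}\to0$. These two events cover the whole space, so their probabilities sum to at least one, a contradiction. Hence $\plimsup\ge\pliminf$.

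One edge case needs care: a $\plimsup$ of $-\infty$ could arise from the $\inf$ over an empty or unbounded set. This cannot happen here because $\frac1n\log G\in[0,1]$, so both quantities are finite. It follows that $\sup_W\plimsup_n\frac1n\log G_{\pi_n}\ge\sup_W h(W)$, and taking the infimum over $\boldsymbol\pi$ gives $q^\star(\Wcal)\ge\sup_W h(W)$.

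Combining the two bounds yields \eqref{eq:minimax}. The achievability computation shows that $\boldsymbol\pi^Q$ attains the infimum.

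The only potentially delicate step is the admissibility (measurability) of $\boldsymbol\pi^Q$. This reduces to the measurability of $Q_n$ in $r^n$, which is part of Definition~\ref{def:posterioruniv}, and to the finiteness of $\A^n$. Everything else is a direct invocation of the earlier results.
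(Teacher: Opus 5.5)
Your proposal is correct and follows essentially the same route as the paper: the converse comes from Lemma~\ref{lem:converse}, and achievability comes from Theorem~\ref{thm:exactrank} applied to the common posterior-universal sequence. Your explicit checks that the p-lim sup dominates the p-lim inf and that the induced order $\boldsymbol\pi^Q$ is measurable, hence admissible, are details the paper's short proof leaves implicit.
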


\begin{proof}
For every admissible ordering sequence and every $W\in\Wcal$, Lemma~\ref{lem:converse} gives
\[
\plimsup_{n\to\infty} \frac{1}{n}\log G_{\pi_n}(E_U^n|R^n) \ge h(W).
\]
Hence $q^\star(\Wcal)\ge\sup_{W\in\Wcal}h(W)$. For the ordering induced by
the common sequence $\{Q_n\}$, Theorem~\ref{thm:exactrank} gives equality
for every $W\in\Wcal$, proving the reverse inequality.
\end{proof}

In particular, if $\Wcal$ contains a channel $W$ for which $R^n$ is independent of $U^n$ for every $n$, then $p_{W,n}(E_U^n|R^n)=a^{-n}$ almost surely and hence $h(W)=1$. Lemma~\ref{lem:converse} and the bound $G_{\pi_n}\le a^n$ then give $q^\star(\Wcal)=1$. More generally, the converse implies that $q^\star(\Wcal)<1$ requires $\sup_{W\in\Wcal}h(W)<1$.

\subsection{Memoryless channels and soft-information compression}

Consider a stationary memoryless channel $W$ on a standard Borel one-letter observation space $\Rcal$ such that $W(\mathrm d r|x)=w_W(r|x)\,\mu(\mathrm d r)$, $x\in\A$, for a $\sigma$-finite measure $\mu$ on $\Rcal$. Take $\Rcal_n=\Rcal^n$, $\mu_n=\mu^{\otimes n}$, and $w_{W,n}(r^n|x^n) = \prod_{i=1}^n w_W(r_i|x_i)$. 
Let the reference decision be symbolwise, $h_n(r^n)=(h(r_1),\ldots,h(r_n))$. Let $(X,R)$ denote the one-letter uniform-input channel pair, with $X\sim\mathrm{Unif}(\A)$, and define $E=X\ominusA h(R)$. Then the auxiliary pairs $(E_{U,i},R_i)$ are i.i.d. copies of $(E,R)$. If $p_W(e|r)=\Prb_W\{E=e|R=r\}$ is any fixed measurable version of the one-letter correction posterior, then, for $P_{W,R^n}$-almost every $r^n$,
\begin{equation}
p_{W,n}(e^n|r^n) = \prod_{i=1}^n p_W(e_i|r_i),
\qquad e^n\in\A^n.
\end{equation}
We write $H_{\mathrm U,W}$ and $I_{\mathrm U,W}$ for entropy and mutual information under this one-letter uniform-input distribution. The sample average of the i.i.d. posterior self-information terms converges in probability to $H_{\mathrm U,W}(E|R)$. Therefore,
\begin{equation}
h(W) = H_{\mathrm U,W}(E|R) = H_{\mathrm U,W}(X|R).
\label{eq:hwmemoryless}
\end{equation}

\begin{corollary}[Soft-information query exponent]
\label{cor:softentropy}
For a memoryless channel $W$ under uniform input, any assignment sequence that is posterior-universal at $W$ induces an order satisfying
\begin{equation}
\frac{1}{n}\log G_{\pi_n^Q}(E_U^n|R^n) \xrightarrow{P} H_{\mathrm U,W}(X|R).
\end{equation}
Moreover, every deterministic, measurable, codebook-independent observation-dependent ordering sequence $\boldsymbol\pi=\{\pi_n\}$ satisfies
\begin{equation}
\pliminf_{n\to\infty} \frac{1}{n}\log G_{\pi_n}(E_U^n|R^n)
\ge H_{\mathrm U,W}(X|R).
\end{equation}
\end{corollary}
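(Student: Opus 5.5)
The corollary specializes Theorem~\ref{thm:exactrank} and Lemma~\ref{lem:converse} to the memoryless setting. The only work is to check that Assumption~\ref{assump:aep} holds with $h(W)=H_{\mathrm U,W}(X|R)$.

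\emph{Step 1: product form of the posterior.} Under the memoryless model, the auxiliary pairs $(E_{U,i},R_i)$ are i.i.d.\ copies of $(E,R)$. For $P_{W,R^n}$-almost every $r^n$ we have $p_{W,n}(e^n|r^n)=\prod_i p_W(e_i|r_i)$. Hence, almost surely,
\[
\jmath_{W,n}=\sum_{i=1}^n\bigl(-\log p_W(E_{U,i}|R_i)\bigr).
\]
The measurable extension of $p_{W,n}$ on the null set where the normalizer vanishes does not affect this identity almost surely.

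\emph{Step 2: integrability and the weak law.} Each summand is nonnegative. Conditioning on $R=r$, its mean is the Shannon entropy of the pmf $p_W(\cdot|r)$ on $\A$, which is at most $\log a=1$. So the summands are i.i.d.\ with finite mean $H_{\mathrm U,W}(E|R)\le1$. The summand is $+\infty$ only on a null set, because $p_W(E|R)>0$ almost surely. The weak law of large numbers then gives $\frac1n\jmath_{W,n}\xrightarrow{P}H_{\mathrm U,W}(E|R)$, which is Assumption~\ref{assump:aep} with $h(W)=H_{\mathrm U,W}(E|R)$.

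\emph{Step 3: identify the entropy.} For fixed $R=r$, the map $x\mapsto x\ominusA h(r)$ is a bijection of $\A$. Conditional entropy is invariant under bijections that depend only on the conditioning variable, which is the one-letter analogue of the argument for \eqref{eq:entinv} in Proposition~\ref{prop:href}. Therefore $H_{\mathrm U,W}(E|R)=H_{\mathrm U,W}(X|R)$, giving \eqref{eq:hwmemoryless}.

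\emph{Step 4: conclude.} With Assumption~\ref{assump:aep} verified, the first claim follows from Theorem~\ref{thm:exactrank} applied to any $\{Q_n\}$ that is posterior-universal at $W$. The second claim is exactly Lemma~\ref{lem:converse}.

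The only step requiring any care is Step~2. One must check that the summands are finite almost surely, so that the weak law applies to real-valued variables, and that their mean is finite. The bound $H(p_W(\cdot|r))\le1$ pointwise makes the mean finite without extra moment assumptions, so no genuine obstacle is expected.
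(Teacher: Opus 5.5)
Your proposal is correct and follows the paper's own route. The paper establishes \eqref{eq:hwmemoryless} from the product form of the posterior and the weak law for the i.i.d.\ self-information terms, and then obtains the corollary from Theorem~\ref{thm:exactrank} and Lemma~\ref{lem:converse}; your integrability check (summands finite almost surely, conditional mean at most $1$) supplies a detail the paper leaves implicit.
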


\begin{corollary}[Cost of soft-information compression]
\label{cor:quantization}
Consider a memoryless channel $W$ under uniform input. Let $S=T(R)$ be a deterministic receiver statistic, applied symbolwise so that
$S^n=(T(R_1),\ldots,T(R_n))$, and suppose that $h(R)$ is $S$-measurable. Among all deterministic, codebook-independent ordering sequences measurable with respect to $S^n$, the minimum first-order query-rank exponent is $H_{\mathrm U,W}(X|S)$. For fixed $W$, this minimum is attained by the matched order induced by $P_W(E^n|S^n)$. Hence replacing $R$ by $S$ increases the minimum exponent by
\begin{align}
H_{\mathrm U,W}(X|S)-H_{\mathrm U,W}(X|R)
&= I_{\mathrm U,W}(X;R|S)\\
&= I_{\mathrm U,W}(X;R)-I_{\mathrm U,W}(X;S)
\ge0.
\label{eq:quantloss}
\end{align}
For a class $\Wcal$ of memoryless channels, a sufficient condition for a single $S$-based ordering sequence to attain $H_{\mathrm U,W}(X|S)$ for every $W\in\Wcal$ is that the family of conditional correction posteriors induced by $W\in\Wcal$ given $S^n$ admits a common posterior-universal assignment sequence, with $S^n$ in place of $R^n$ in Definition~\ref{def:posterioruniv}.
\end{corollary}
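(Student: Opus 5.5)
The statement to prove is Corollary~\ref{cor:quantization}. I would reduce it to Corollary~\ref{cor:softentropy}, and hence to Lemma~\ref{lem:converse} and Theorem~\ref{thm:exactrank}, by treating $S$ as a new one-letter observation.

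\textbf{Step 1: $S$ is a memoryless observation.} Under the uniform input, the pairs $(X_i,S_i)=(X_i,T(R_i))$ are i.i.d. The induced kernel $x\mapsto P_W(S\in\cdot\mid x)$ is a stationary memoryless channel on the measurable range space of $T$. Because $h(R)$ is $S$-measurable, we can write $h(R)=\tilde h(S)$ for a measurable $\tilde h$; this is Doob--Dynkin, which is valid since $\A$ is finite. Then $E=X\ominusA\tilde h(S)$ is a symbolwise reference correction based on $S$.

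I expect this step to be the main obstacle. The corollary states $T$ only as a "deterministic statistic", so I must check that the image space can be taken standard Borel and that a common dominating measure exists. For the latter, take $\mu_S=\sum_{x}P_W(S\in\cdot\mid x)$, which dominates every input-conditional law of $S$; this gives the density representation required in Section~\ref{sec:model}. Reading $T$ as a measurable map into a standard Borel space settles the former. A definite $S$-measurable reference map $\tilde h$ then also exists.

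\textbf{Step 2: the entropy equality.} With Step 1 in place, the discussion before \eqref{eq:hwmemoryless} applies verbatim with $S$ in place of $R$. Assumption~\ref{assump:aep} holds with $h=H_{\mathrm U,W}(E|S)=H_{\mathrm U,W}(X|S)$. The second equality holds because $E$ is a bijective function of $X$ for fixed $S$, as in Proposition~\ref{prop:href}.

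\textbf{Step 3: converse.} Lemma~\ref{lem:converse}, applied to the $S$-channel, gives
\[
\pliminf_{n\to\infty}\tfrac1n\log G\ge H_{\mathrm U,W}(X|S)
\]
for every deterministic codebook-independent $S^n$-measurable ordering.

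\textbf{Step 4: achievability.} Take the matched posterior $Q_n=P_W(E^n|S^n)$. Its regret against itself is identically zero, so it is trivially posterior-universal at $W$. Theorem~\ref{thm:exactrank} then gives convergence in probability to $H_{\mathrm U,W}(X|S)$. This establishes that the minimum exponent equals $H_{\mathrm U,W}(X|S)$ and that it is attained.

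\textbf{Step 5: the identity \eqref{eq:quantloss}.} Since $S$ is a function of $R$, we have $H(X|R)=H(X|R,S)$. Hence
\[
H(X|S)-H(X|R)=H(X|S)-H(X|R,S)=I(X;R|S),
\]
which is nonnegative. The chain rule gives $I(X;R)=I(X;R,S)=I(X;S)+I(X;R|S)$, which yields the last equality. For general observation spaces I would justify these identities via the Markov chain $X\to R\to S$ and the standard definitions of conditional mutual information.

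\textbf{Step 6: the class statement.} Apply Theorem~\ref{thm:minimax}, or Theorem~\ref{thm:exactrank} separately for each $W\in\Wcal$, to the $S$-channels with the common posterior-universal assignment. Each $W$ then attains $H_{\mathrm U,W}(X|S)$ through a single ordering.
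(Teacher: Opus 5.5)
Your proposal is correct and follows essentially the argument the paper intends; the paper states this corollary without a written proof. Viewing $X\mapsto S$ as a symbolwise memoryless channel with the $S$-measurable reference map, Lemma~\ref{lem:converse} and Theorem~\ref{thm:exactrank} (as in Corollary~\ref{cor:softentropy}) give the minimum exponent $H_{\mathrm U,W}(X|S)$, attained by the matched posterior, and the identity follows from $H(X|R)=H(X|R,S)$ and the chain rule. Your explicit Doob--Dynkin factorization $h(R)=\tilde h(S)$ and the dominating measure $\sum_x P_W(S\in\cdot\mid x)$ supply details the paper leaves unstated.
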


\subsection{One-shot posterior-rank bounds and sufficient statistics}
\label{subsec:oneshot}

Finite-block bounds relate query rank to posterior self-information. Recall $\jmath_{W,n}$ from \eqref{eq:jpost}, and let $\mathsf H_N=\sum_{j=1}^N j^{-1}$.

\begin{theorem}[One-shot posterior-rank bounds]
\label{thm:oneshotrank}
Fix $W$ and $n$. 
\begin{enumerate}[label=(\roman*),leftmargin=*]
\item 
For every deterministic, measurable, codebook-independent observation-dependent order $\pi_n$ and every $t>0$,
\begin{align}
\Prb_W\!\left\{ \log G_{\pi_n}(E_U^n|R^n) \le \jmath_{W,n}-t \right\}
\le \mathsf H_{a^n}a^{-t}.
\label{eq:oneshotlower}
\end{align}
\item 
Let $Q_n$ be a measurable conditional pmf and define
\begin{align}
r_n(W,Q_n) = \esssup_{r^n\in\Rcal_n}\; \max_{\substack{e^n\in\A^n\\ p_{W,n}(e^n|r^n)>0}}
\log \frac{p_{W,n}(e^n|r^n)}{Q_n(e^n|r^n)},
\label{eq:rnWQ}
\end{align}
where the essential supremum is with respect to the auxiliary uniform-input distribution of $R^n$, and the ratio is $+\infty$ when $Q_n(e^n|r^n)=0<p_{W,n}(e^n|r^n)$. For the order induced by
nonincreasing $Q_n(\cdot|r^n)$,
\begin{equation}
\log G_{\pi_n^Q}(E_U^n|R^n) \le \jmath_{W,n}+r_n(W,Q_n)
\label{eq:oneshotupper}
\end{equation}
almost surely.
\item 
For every $\delta\in(0,1)$,
\begin{equation}
\jmath_{W,n} - \log\frac{\mathsf H_{a^n}}{\delta}
\le \log G_{\pi_n^Q}(E_U^n|R^n)
\le \jmath_{W,n}+r_n(W,Q_n)
\label{eq:oneshotsandwich}
\end{equation}
with probability at least $1-\delta$.
\end{enumerate}
\end{theorem}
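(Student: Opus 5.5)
Part (i) is proved conditionally on $R^n=r^n$ and then averaged over $R^n$. Fix $r^n$ and abbreviate $p(e^n)=p_{W,n}(e^n\mid r^n)$ and $G(e^n)=G_{\pi_n}(e^n\mid r^n)$; since $\pi_n(r^n)$ is a bijection, $G(\cdot)$ takes each value $j\in\{1,\ldots,a^n\}$ exactly once. The event in \eqref{eq:oneshotlower} is $\{e^n:\ G(e^n)\le a^{-t}/p(e^n)\}$, i.e.\ $\{e^n:\ p(e^n)\le a^{-t}/G(e^n)\}$, with the convention that $p(e^n)=0$ gives $\jmath=+\infty$, so such $e^n$ lie in the event but contribute nothing to its mass. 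Hence
\[
\sum_{e^n:\,G(e^n)\le a^{-t}/p(e^n)}p(e^n)\le\sum_{e^n}\frac{a^{-t}}{G(e^n)}=a^{-t}\sum_{j=1}^{a^n}\frac1j=\mathsf H_{a^n}a^{-t}.
\]
The bound holds for every $r^n$, including the measurable uniform extension on the null set, so integrating against the law of $R^n$ gives \eqref{eq:oneshotlower}. Measurability of the event follows from the measurability of $r^n\mapsto\pi_n(r^n)(j)$ and of the posterior.

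Part (ii) follows from Lemma~\ref{lem:kraftrank} applied with $L_n=-\log Q_n$, which gives $\log G_{\pi_n^Q}(e^n\mid r^n)\le -\log Q_n(e^n\mid r^n)$ for every pair $(e^n,r^n)$. Off an $R^n$-null set, every $e^n$ with $p_{W,n}(e^n\mid r^n)>0$ satisfies $-\log Q_n(e^n\mid r^n)\le -\log p_{W,n}(e^n\mid r^n)+r_n(W,Q_n)$ by the definition of the essential supremum in \eqref{eq:rnWQ}. Since $p_{W,n}(E_U^n\mid R^n)>0$ almost surely, \eqref{eq:oneshotupper} holds almost surely. If $r_n=+\infty$ the bound is trivial.

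Part (iii) takes $t=\log(\mathsf H_{a^n}/\delta)$ in part (i), so the lower-bound failure event has probability at most $\delta$. The upper bound holds almost surely by part (ii), and a union bound gives \eqref{eq:oneshotsandwich} with probability at least $1-\delta$.

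The only delicate point is bookkeeping rather than a genuine obstacle: the infinite-self-information convention, the null set where the normalizer vanishes, and the almost-sure qualifier in part (ii) coming from the essential supremum. The key idea is the harmonic-sum counting step in part (i).
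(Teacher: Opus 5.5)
Your proposal is correct and follows the paper's proof essentially step for step. Part (i) uses the same conditional harmonic-sum counting bound; part (ii) applies Lemma~\ref{lem:kraftrank} with $L_n=-\log Q_n$ together with the essential-supremum definition; part (iii) takes $t=\log(\mathsf H_{a^n}/\delta)$, which is positive since $\mathsf H_{a^n}\ge 1>\delta$, and combines it with (ii).
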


\begin{proof}
For (i), fix $r^n\in\Rcal_n$ and write $p(e^n)=p_{W,n}(e^n|r^n)$ and $G(e^n)=G_{\pi_n}(e^n|r^n)$. The argument is the conditional version of the usual finite-alphabet rank-counting bound for guessing ~\cite{sundaresan2007length}. The event in \eqref{eq:oneshotlower} is equivalent to $G(e^n)p(e^n)\le a^{-t}$. Since $G$ is a bijection from $\A^n$ to $\{1,\ldots,a^n\}$,
\begin{align}
\sum_{e^n:\,G(e^n)p(e^n)\le a^{-t}}p(e^n) &\le
a^{-t} \sum_{e^n:\,G(e^n)p(e^n)\le a^{-t}} \frac{1}{G(e^n)} \\
& \le a^{-t}\mathsf H_{a^n}.
\end{align}
Integrating the conditional bound with respect to the distribution of $R^n$ proves (i).

For (ii), Lemma~\ref{lem:kraftrank} applied to $L_n=-\log Q_n$ gives $G_{\pi_n^Q}(e^n|r^n)Q_n(e^n|r^n)\le1$. Hence, whenever $p_{W,n}(e^n|r^n)>~\!0$,
\[
\log G_{\pi_n^Q}(e^n|r^n) 
\le -\log p_{W,n}(e^n|r^n) + \log \frac{p_{W,n}(e^n|r^n)}{Q_n(e^n|r^n)}.
\]
Equation~\eqref{eq:oneshotupper} follows from \eqref{eq:rnWQ}. Part (iii) follows from (i) with $t=\log(\mathsf H_{a^n}/\delta)$ and from (ii).
\end{proof}

For any deterministic, measurable, codebook-independent observation-dependent order $\pi_n$, define
\[
\ell_n(e^n|r^n) = \log G_{\pi_n}(e^n|r^n)+\log\mathsf H_{a^n}.
\]
For every $r^n$, $\sum_{e^n\in\A^n}a^{-\ell_n(e^n|r^n)}=1$. Thus $a^{-\ell_n(\cdot|r^n)}$ is a conditional pmf. The conditional cross-entropy inequality gives
\begin{equation}
\E_W\!\left[ \log G_{\pi_n}(E_U^n|R^n) \right]
\ge H_W(E_U^n|R^n)-\log\mathsf H_{a^n}.
\label{eq:meanranklower}
\end{equation}
For the order $\pi_n^Q$, Lemma~\ref{lem:kraftrank} with $L_n=-\log Q_n$ gives
\begin{equation}
\E_W\!\left[ \log G_{\pi_n^Q}(E_U^n|R^n) \right]
\le \E_W[-\log Q_n(E_U^n|R^n)].
\end{equation}
Hence
\begin{align}
H_W(E_U^n|R^n)-\log\mathsf H_{a^n} \le
\E_W\!\left[ \log G_{\pi_n^Q}(E_U^n|R^n) \right]
\le \E_W[-\log Q_n(E_U^n|R^n)].
\end{align}
Since $\log\mathsf H_{a^n}=\mathcal O(\log n)=o(n)$, the lower-bound correction is sublinear in $n$.


\begin{theorem}[Posterior-preserving receiver statistics]
\label{thm:sufficiency}
Consider a memoryless channel $W$ under the one-letter uniform-input distribution defined above. Let $S=T(R)$ be a deterministic receiver statistic and suppose the symbolwise reference decision $h(R)$ is $S$-measurable. The minimum first-order query-rank exponent based on $S$ equals that based on $R$ if and only if
\begin{equation}
\Prb_W\{X=x\mid R\}=\Prb_W\{X=x\mid S\} \quad \mathrm{a.s.}\quad \forall x\in\A,
\label{eq:suffposterior}
\end{equation}
where the conditional probabilities are evaluated under the uniform-input distribution. Equivalently, $I_{\mathrm U,W}(X;R|S)=0$. If \eqref{eq:suffposterior} holds for every $W\in\Wcal$, then, for every $W\in\Wcal$,
\begin{align}
H_{\mathrm U,W}(X|S)&=H_{\mathrm U,W}(X|R),\\
I_{\mathrm U,W}(X;S)&=I_{\mathrm U,W}(X;R).
\end{align}
\end{theorem}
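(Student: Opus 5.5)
By Corollary~\ref{cor:quantization}, the minimum first-order query-rank exponent based on $S$ is $H_{\mathrm U,W}(X|S)$, and the one based on $R$ is $H_{\mathrm U,W}(X|R)$. The plan is to reduce the claimed equivalence to one identity: the two exponents are equal if and only if $I_{\mathrm U,W}(X;R|S)=0$, by \eqref{eq:quantloss}. To apply the corollary to $S$, note that $h(R)$ is $S$-measurable. Hence the correction $E=X\ominusA h(R)$ and the auxiliary experiment with $S^n$ in place of $R^n$ satisfy the same memoryless hypotheses, and \eqref{eq:hwmemoryless} gives $h=H_{\mathrm U,W}(X|S)$ for $S$-based orderings. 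Bounded entropies on the finite alphabet $\A$ make all quantities finite, so the subtraction in \eqref{eq:quantloss} is legitimate.

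The key step is to show that $I_{\mathrm U,W}(X;R|S)=0$ is equivalent to \eqref{eq:suffposterior}. Since $S=T(R)$ is $R$-measurable, the chain rule gives
\[
I(X;R|S)=H(X|S)-H(X|R,S)=H(X|S)-H(X|R),
\]
and this equals $\E\bigl[D\bigl(P_{X|R}\,\|\,P_{X|S}\bigr)\bigr]$. To verify the divergence form, write $\Prb\{X=x|S\}=\E[\Prb\{X=x|R\}\mid S]$ by the tower property. Then
\[
\E\bigl[-\log P_{X|S}(X|S)\bigr]-\E\bigl[-\log P_{X|R}(X|R)\bigr]
\]
is the expected divergence between the pmfs $P_{X|R}(\cdot|R)$ and $P_{X|S}(\cdot|S)$ on $\A$. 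Because $\A$ is finite, each term is bounded, so no integrability issues arise on the general standard Borel space.

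Given this, the expected divergence is zero if and only if the divergence vanishes almost surely. This holds if and only if $P_{X|R}(\cdot|R)=P_{X|S}(\cdot|S)$ almost surely, by Gibbs' inequality with equality, which is exactly \eqref{eq:suffposterior} for every $x\in\A$. The one point I expect to need care is choosing regular conditional probability versions so that the pointwise divergence is well defined. Standard Borelness of $\Rcal$ gives regular versions, and different versions differ only on null sets, so the a.s.\ statements are unaffected.

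For the final claim, if \eqref{eq:suffposterior} holds for every $W\in\Wcal$, then $I_{\mathrm U,W}(X;R|S)=0$ for each $W$. This gives $H_{\mathrm U,W}(X|S)=H_{\mathrm U,W}(X|R)$. Since $H_{\mathrm U}(X)=1$ is the same under uniform input, $I=H(X)-H(X|\cdot)$ yields $I_{\mathrm U,W}(X;S)=I_{\mathrm U,W}(X;R)$, matching the second line of \eqref{eq:quantloss}.

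The main obstacle is mostly the measure-theoretic bookkeeping in the second step. The rest is an application of Corollary~\ref{cor:quantization} together with the standard identity above.
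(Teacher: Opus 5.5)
Your proof is correct and follows the paper's route: Corollary~\ref{cor:quantization} reduces equality of the two exponents to $I_{\mathrm U,W}(X;R|S)=0$, and the final identities follow from $H_{\mathrm U,W}(X)=1$. The only difference is in the middle step. The paper simply invokes the equivalence of $I_{\mathrm U,W}(X;R|S)=0$ with the Markov chain $X-S-R$, whereas you derive \eqref{eq:suffposterior} directly from the expected-divergence representation and the equality case of Gibbs' inequality, which is a valid expansion of the same step.
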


\begin{proof}
Corollary~\ref{cor:quantization} gives $H_{\mathrm U,W}(X|S)-H_{\mathrm U,W}(X|R) = I_{\mathrm U,W}(X;R|S)$. Since $S$ is a deterministic function of $R$, the right-hand side is zero if
and only if $X-S-R$ is a Markov chain, which is equivalent to \eqref{eq:suffposterior}. The final identities follow for each $W\in\Wcal$.
\end{proof}

More generally, if $S$ takes at most $2^b$ values, then $I_{\mathrm U,W}(X;S) \le H_{\mathrm U,W}(S) \le \frac{b}{\log_2 a}$. Since $X$ is uniform on $\A$,
\begin{equation}
H_{\mathrm U,W}(X|S) \ge \left(1-\frac{b}{\log_2 a}\right)^+.
\label{eq:bbitbound}
\end{equation}

\section{Finite-Block Minimax Posterior-Normalized Rank Regret}
\label{sec:regret}

Multiplying a rank by a factor $a^{o(n)}$ changes its normalized logarithm by $o(1)$. We therefore consider the following finite-block criterion. Define
\begin{equation}
Z_{W,n}(r^n) = \sum_{u^n\in\A^n} w_{W,n}\!\left( r^n\middle| h_n(r^n)\oplusA u^n \right),
\end{equation}
and, for each observation,
\begin{equation}
\Wcal_n(r^n) = \left\{ W\in\Wcal \,\middle|\, Z_{W,n}(r^n)>0 \right\}.
\end{equation}
Let $\Rcal_n^+ = \left\{ r^n\in\Rcal_n \,\middle|\, \Wcal_n(r^n)\neq\varnothing \right\}$. We call $r^n\in\Rcal_n^+$ an admissible observation. For $r^n\in\Rcal_n^+$ and $W\in\Wcal_n(r^n)$, $p_{W,n}(\cdot|r^n)$ is defined by \eqref{eq:posterior}.
Assume that $\Rcal_n^+$ is measurable. For each $r^n\in\Rcal_n^+$, define the posterior envelope
\begin{equation}
m_n(e^n|r^n) = \sup_{W\in\Wcal_n(r^n)} p_{W,n}(e^n|r^n),
\label{eq:env}
\end{equation}
and assume that $r^n\mapsto m_n(e^n|r^n)$ is measurable on $\Rcal_n^+$ for every $e^n\in\A^n$. Let $m_{n,(1)}(r^n)\ge\cdots\ge m_{n,(a^n)}(r^n)$ denote the envelope values in nonincreasing order. Any conditional assignment or ordering rule defined on $\Rcal_n^+$ is extended by a fixed conditional pmf or fixed permutation, respectively, on $\Rcal_n\setminus\Rcal_n^+$.

\begin{definition}[Posterior-normalized rank regret]
For a deterministic, measurable, codebook-independent order $\pi_n$, define
\begin{align}
\mathfrak r_n(\pi_n,\Wcal) = \esssup_{r^n\in\Rcal_n^+} \sup_{W\in\Wcal_n(r^n)} \max_{\substack{e^n\in\A^n\\
p_{W,n}(e^n|r^n)>0}} \log\!\left(  G_{\pi_n}(e^n|r^n) p_{W,n}(e^n|r^n) \right).
\label{eq:rho}
\end{align}
Essential suprema over $r^n\in\Rcal_n^+$ are with respect to the restriction of $\mu_n$ to $\Rcal_n^+$. The minimax value is
\begin{equation}
\mathfrak r_n^\star(\Wcal) = \inf_{\pi_n}\mathfrak r_n(\pi_n,\Wcal),
\end{equation}
where the infimum is over deterministic, measurable, codebook-independent orders.
\end{definition}

For $p_{W,n}(e^n|r^n)>0$, the product $G_{\pi_n}(e^n|r^n)p_{W,n}(e^n|r^n)$ is the ratio of the query rank to the reciprocal posterior probability. The posterior-normalized rank regret may be negative. In particular, for some posterior distributions and query permutations, $G_{\pi_n}(e^n|r^n)p_{W,n}(e^n|r^n)<1$ for every correction with positive posterior probability.
For a fixed $r^n\in\Rcal_n^+$, suppose that
\[
\sup_{W\in\Wcal_n(r^n)} \max_{\substack{e^n\in\A^n\\p_{W,n}(e^n|r^n)>0}} \log\!\left( G_{\pi_n}(e^n|r^n)p_{W,n}(e^n|r^n) \right) \le b.
\]
Then, for every $W\in\Wcal_n(r^n)$ and every $e^n$ such that $p_{W,n}(e^n|r^n)>0$,
\[
G_{\pi_n}(e^n|r^n) \le \frac{a^b}{p_{W,n}(e^n|r^n)}.
\]
For GRAND without abandonment, the number of queries until the first codeword hit is no greater than the rank of the realized correction. Hence the criterion in \eqref{eq:rho} provides a uniform finite-block upper bound on the query count relative to the reciprocal posterior probability. This bound does not imply preservation of the matched error exponent.

\begin{theorem}[Exact minimax posterior-rank ordering]
\label{thm:exactregret}
For each $r^n\in\Rcal_n^+$, define
\begin{align}
\mathfrak r_n^\star(r^n) &= \min_{\pi_n(r^n)} \sup_{W\in\Wcal_n(r^n)}
\max_{\substack{e^n\in\A^n\\p_{W,n}(e^n|r^n)>0}}
\log \left( G_{\pi_n}(e^n|r^n)p_{W,n}(e^n|r^n) \right),
\end{align}
where the minimum is over permutations of $\A^n$. Every nonincreasing ordering of $m_n(\cdot|r^n)$ attains this minimum, and
\begin{equation}
\mathfrak r_n^\star(r^n) = \log\max_{1\le j\le a^n}j\,m_{n,(j)}(r^n).
\label{eq:rhoobs}
\end{equation}
With the prescribed deterministic tie-breaking rule, the resulting observation-dependent envelope order is measurable. Consequently,
\begin{equation}
\mathfrak r_n^\star(\Wcal) = \esssup_{r^n\in\Rcal_n^+} \mathfrak r_n^\star(r^n).
\label{eq:rhoglobal}
\end{equation}
\end{theorem}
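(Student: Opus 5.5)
The plan is to fix an admissible observation $r^n\in\Rcal_n^+$ and first reduce the inner double supremum to a statement about the envelope alone. For a fixed permutation with rank function $G(\cdot)=G_{\pi_n}(\cdot|r^n)$, I will show
\[
\sup_{W\in\Wcal_n(r^n)}\max_{e^n:\,p_{W,n}(e^n|r^n)>0} G(e^n)\,p_{W,n}(e^n|r^n) \;=\; \max_{e^n:\,m_n(e^n|r^n)>0} G(e^n)\,m_n(e^n|r^n).
\]
This holds because, for each fixed $e^n$, the supremum over $W$ of $G(e^n)p_{W,n}(e^n|r^n)$ equals $G(e^n)m_n(e^n|r^n)$. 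Corrections with $p_{W,n}(e^n|r^n)=0$ for every $W$ have $m_n(e^n|r^n)=0$, so excluding them does not change the value. Interchanging the supremum over $W$ with the finite maximum over $e^n$ is legitimate since the maximum is over a finite set. The pointwise problem therefore becomes
\[
\min_{G} \max_{e^n} G(e^n)\,m_n(e^n|r^n)
\]
over bijections $G\colon\A^n\to\{1,\ldots,a^n\}$, with the convention that terms with $m_n=0$ contribute $0$. These terms do not affect the maximum, because at least one envelope value is positive when $\Wcal_n(r^n)\ne\varnothing$.

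Next I will solve this assignment problem by a rearrangement/exchange argument. For the achievability direction, any nonincreasing ordering of $m_n(\cdot|r^n)$ assigns rank $j$ to a correction with envelope value $m_{n,(j)}(r^n)$. Its objective is therefore exactly $\max_j j\,m_{n,(j)}(r^n)$. For the converse, fix any bijection $G$ and any $j$. The $j$ corrections with the largest envelope values, namely those carrying $m_{n,(1)},\ldots,m_{n,(j)}$, occupy $j$ distinct ranks, so at least one of them has rank $\ge j$. Its envelope value is $\ge m_{n,(j)}(r^n)$, so $\max_{e^n}G(e^n)m_n(e^n|r^n)\ge j\,m_{n,(j)}(r^n)$. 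Taking the maximum over $j$ and then logarithms gives \eqref{eq:rhoobs}, and it shows that every nonincreasing envelope ordering attains the minimum, in particular the one with ties broken by $\prec_n$.

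For measurability, I will use the assumption that each $r^n\mapsto m_n(e^n|r^n)$ is measurable on $\Rcal_n^+$. The event that the tie-broken envelope order places $e^n$ at position $j$ is determined by finitely many comparisons $m_n(u^n|r^n)\ge m_n(v^n|r^n)$ or $>$, together with the fixed order $\prec_n$. Hence $\{r^n:\pi_n(r^n)(j)=e^n\}$ is a finite Boolean combination of measurable sets. The same argument shows that $r^n\mapsto\max_j j\,m_{n,(j)}(r^n)$ is measurable, since the order statistics of finitely many measurable functions are measurable. On $\Rcal_n\setminus\Rcal_n^+$ the fixed permutation is used.

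Finally, for \eqref{eq:rhoglobal}, I will argue in two directions. For any measurable order $\pi_n$, the pointwise value at each $r^n$ is at least $\mathfrak r_n^\star(r^n)$, so taking the essential supremum gives $\mathfrak r_n(\pi_n,\Wcal)\ge\esssup\mathfrak r_n^\star(r^n)$. The measurable envelope order achieves $\mathfrak r_n^\star(r^n)$ at every admissible $r^n$, so its regret equals $\esssup\mathfrak r_n^\star$, and the infimum is attained. The only delicate point I expect is this measurability step, namely verifying that the pointwise minimizer can be chosen jointly measurable so that the infimum over measurable orders is not strictly larger than the essential supremum of pointwise minima. The deterministic tie-breaking makes this straightforward.
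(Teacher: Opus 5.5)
Your proposal is correct and follows the paper's proof essentially step for step. You reduce the double supremum to $\max_{e^n} G_{\pi_n}(e^n|r^n)\,m_n(e^n|r^n)$, get achievability from the nonincreasing-envelope order and the converse from the fact that among the $j$ corrections with largest envelope values one must have rank at least $j$, and then pass to the essential supremum in both directions; your explicit handling of the $m_n=0$ terms and of measurability via finitely many pairwise comparisons of measurable envelope values supplies details the paper only asserts.
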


\begin{proof}
Fix $r^n\in\Rcal_n^+$. Since the logarithm is strictly increasing, it suffices to minimize the corresponding product criterion. For any permutation $\pi_n(r^n)$,
\begin{align}
\sup_{W\in\Wcal_n(r^n)} \max_{\substack{e^n\in\A^n\\p_{W,n}(e^n|r^n)>0}}
G_{\pi_n}(e^n|r^n)p_{W,n}(e^n|r^n)
= \max_{e^n\in\A^n} G_{\pi_n}(e^n|r^n)m_n(e^n|r^n).
\end{align}
Let $m_{(j)}=m_{n,(j)}(r^n)$. Under a nonincreasing-envelope order, the envelope value at rank $j$ is $m_{(j)}$. Hence
\begin{equation}
\max_{e^n\in\A^n} G_{\pi_n}(e^n|r^n)m_n(e^n|r^n) = \max_{1\le j\le a^n}j\,m_{(j)}.
\end{equation}
Conversely, consider any permutation and fix $j$. Among the $j$ corrections with largest envelope values, at least one has rank at least $j$. Its envelope value is at least $m_{(j)}$. Therefore,
\begin{equation}
\max_{e^n\in\A^n} G_{\pi_n}(e^n|r^n)m_n(e^n|r^n)
\ge j\,m_{(j)}.
\end{equation}
Maximizing over $j$ proves \eqref{eq:rhoobs}. For every deterministic, measurable, codebook-independent order,
\begin{equation}
\mathfrak r_n(\pi_n,\Wcal) \ge \esssup_{r^n\in\Rcal_n^+} \mathfrak r_n^\star(r^n).
\end{equation}
The measurable nonincreasing-envelope order attains $\mathfrak r_n^\star(r^n)$ at every $r^n\in\Rcal_n^+$. Taking the infimum over such orders proves \eqref{eq:rhoglobal}.
\end{proof}

\begin{corollary}[Query-count bound]
\label{cor:minimaxquerycount}
Let $Q_n^{\rm dec}$ denote the number of GRAND queries until termination. Fix $r^n\in\Rcal_n^+$, and let $\pi_n(r^n)$ be any nonincreasing ordering of $m_n(\cdot|r^n)$. For every $W\in\Wcal_n(r^n)$ and every $e^n\in\A^n$ such that $p_{W,n}(e^n|r^n)>0$, on a decoding realization with observation $r^n$ and correction $e^n$,
\[
Q_n^{\rm dec} \le G_{\pi_n}(e^n|r^n) \le \frac{a^{\mathfrak r_n^\star(r^n)}}{p_{W,n}(e^n|r^n)}.
\]
Conversely, for every permutation $\pi_n(r^n)$ and every $b<\mathfrak r_n^\star(r^n)$, there exist $W\in\Wcal_n(r^n)$ and $e^n\in\A^n$ with $p_{W,n}(e^n|r^n)>0$ such that $G_{\pi_n}(e^n|r^n)p_{W,n}(e^n|r^n)>a^b$.
\end{corollary}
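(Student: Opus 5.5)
The corollary is a direct consequence of Theorem~\ref{thm:exactregret} combined with the elementary observation that GRAND without abandonment stops no later than the rank of the realized correction. I will prove the two directions separately, working throughout at the fixed admissible observation $r^n\in\Rcal_n^+$.

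\textbf{Upper bound.} Fix $W\in\Wcal_n(r^n)$ and a correction $e^n$ with $p_{W,n}(e^n|r^n)>0$, and consider a realization with observation $r^n$ and correction $e^n$.

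The transmitted codeword equals $h_n(r^n)\oplusA e^n$, which lies in $\C_n$. It is queried at position $G_{\pi_n}(e^n|r^n)$, so the search terminates at or before that position. This gives $Q_n^{\rm dec}\le G_{\pi_n}(e^n|r^n)$.

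For the second inequality, use the identity from the proof of Theorem~\ref{thm:exactregret}. For any permutation, the supremum over active $(W,e^n)$ of $G_{\pi_n}(e^n|r^n)\,p_{W,n}(e^n|r^n)$ equals $\max_{e^n}G_{\pi_n}(e^n|r^n)\,m_n(e^n|r^n)$. For a nonincreasing-envelope order, this maximum is $\max_j j\,m_{n,(j)}(r^n)=a^{\mathfrak r_n^\star(r^n)}$ by \eqref{eq:rhoobs}. Hence, for the particular pair $(W,e^n)$,
\[
G_{\pi_n}(e^n|r^n)\,p_{W,n}(e^n|r^n)\le a^{\mathfrak r_n^\star(r^n)}.
\]
Dividing by the positive quantity $p_{W,n}(e^n|r^n)$ gives the stated bound.

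\textbf{Converse.} Fix any permutation $\pi_n(r^n)$ and any $b<\mathfrak r_n^\star(r^n)$. By the minimality in Theorem~\ref{thm:exactregret}, the value
\[
\sup_{W\in\Wcal_n(r^n)}\;\max_{e^n:\,p_{W,n}(e^n|r^n)>0}\log\bigl(G_{\pi_n}(e^n|r^n)\,p_{W,n}(e^n|r^n)\bigr)
\]
is at least $\mathfrak r_n^\star(r^n)>b$. By the definition of the supremum, there is some $W\in\Wcal_n(r^n)$ whose inner maximum exceeds $b$. The inner maximum runs over finitely many corrections, so it is attained at some $e^n$ with $p_{W,n}(e^n|r^n)>0$. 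That pair satisfies $G_{\pi_n}(e^n|r^n)\,p_{W,n}(e^n|r^n)>a^b$.

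The only point needing care is that the supremum over $W$ may not be attained. This is why the converse is stated with a strict margin $b<\mathfrak r_n^\star(r^n)$ rather than with $\mathfrak r_n^\star(r^n)$ itself. With that margin, the argument is immediate.
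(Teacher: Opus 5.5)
Your proof is correct and follows the argument the paper leaves implicit, since it states the corollary without a separate proof. The bound $Q_n^{\rm dec}\le G_{\pi_n}(e^n|r^n)$ comes from the transmitted codeword lying in $\C_n$, the second inequality from the envelope order attaining $\mathfrak r_n^\star(r^n)$ in Theorem~\ref{thm:exactregret}, and the converse from the minimality of $\mathfrak r_n^\star(r^n)$ over permutations, with your strict margin $b$ handling the possibly unattained supremum over $W$.
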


\subsection{Posterior-envelope NML and Shtarkov regret}

For each $r^n\in\Rcal_n^+$, define the conditional Shtarkov normalizer
\begin{equation}
C_n(r^n) = \sum_{e^n\in\A^n}m_n(e^n|r^n).
\label{eq:Csht}
\end{equation}
Since $\Wcal_n(r^n)\neq\varnothing$, $1\le C_n(r^n)\le a^n$. Define the posterior-envelope NML assignment
\begin{equation}
Q_n^{\rm peNML}(e^n|r^n) = \frac{m_n(e^n|r^n)}{C_n(r^n)}.
\label{eq:penml}
\end{equation}
By the assumed envelope measurability, $Q_n^{\rm peNML}$ is a measurable conditional pmf on $\Rcal_n^+$ and is extended as specified above outside $\Rcal_n^+$. For fixed $r^n\in\Rcal_n^+$, it is the Shtarkov NML assignment for the active posterior family $\{p_{W,n}(\cdot|r^n):W\in\Wcal_n(r^n)\}$ \cite{shtarkov1987}.

\begin{theorem}[Posterior-rank regret and Shtarkov regret]
\label{thm:shtbridge}
Recall that $\mathsf H_N=\sum_{j=1}^N j^{-1}$. For every $r^n\in\Rcal_n^+$,
\begin{equation}
0 \le \log C_n(r^n)- \mathfrak r_n^\star(r^n) \le \log\mathsf H_{a^n}.
\label{eq:shtbridgeobs}
\end{equation}
Define
\begin{equation}
\bar r_n(\Wcal) = \esssup_{r^n\in\Rcal_n^+}\log C_n(r^n).
\end{equation}
Then $\bar r_n(\Wcal)$ is the global conditional Shtarkov minimax log-loss regret:
\begin{align}
\bar r_n(\Wcal) &= \inf_{Q_n} \esssup_{r^n\in\Rcal_n^+} \sup_{W\in\Wcal_n(r^n)} \max_{\substack{e^n\in\A^n\\p_{W,n}(e^n|r^n)>0}} \log\frac{p_{W,n}(e^n|r^n)}{Q_n(e^n|r^n)},
\label{eq:shtarkovglobal}
\end{align}
where the infimum is over measurable conditional pmfs on $\A^n$ given $\Rcal_n$, and the logarithmic ratio is $+\infty$ when $Q_n(e^n|r^n)=0<p_{W,n}(e^n|r^n)$. The infimum is attained by $Q_n^{\rm peNML}$. Moreover,
\begin{equation}
0 \le \bar r_n(\Wcal)- \mathfrak r_n^\star(\Wcal) \le \log\mathsf H_{a^n}.
\label{eq:shtbridgeglobal}
\end{equation}
The order induced by $Q_n^{\rm peNML}$ is a minimax order in Theorem~\ref{thm:exactregret}.
\end{theorem}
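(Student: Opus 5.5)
The plan is to prove the pointwise sandwich \eqref{eq:shtbridgeobs} first, then identify $\bar r_n(\Wcal)$ as the Shtarkov minimax value, and finally combine the two with Theorem~\ref{thm:exactregret} to get \eqref{eq:shtbridgeglobal}.

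\emph{Pointwise sandwich.} Fix $r^n\in\Rcal_n^+$ and write $m_{(j)}=m_{n,(j)}(r^n)$, so that $C_n(r^n)=\sum_j m_{(j)}$.

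For the lower bound, note that since $m_{(j)}$ is nonincreasing in $j$, every $j$ satisfies
\[
j\,m_{(j)}\le\sum_{i\le j}m_{(i)}\le C_n(r^n).
\]
By \eqref{eq:rhoobs}, this gives $\mathfrak r_n^\star(r^n)\le\log C_n(r^n)$.

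For the upper bound, set $K=\max_j j\,m_{(j)}=a^{\mathfrak r_n^\star(r^n)}$. Then $m_{(j)}\le K/j$ for every $j$, so
\[
C_n(r^n)\le K\,\mathsf H_{a^n},
\]
which is the right-hand inequality of \eqref{eq:shtbridgeobs}.

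\emph{Shtarkov identity.} Fix a measurable conditional pmf $Q_n$ and an admissible $r^n$.

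First, the inner supremum over active channel--correction pairs equals $\max_{e^n} m_n(e^n|r^n)/Q_n(e^n|r^n)$. This holds because for each $e^n$ with $m_n(e^n|r^n)>0$ the supremum over $W\in\Wcal_n(r^n)$ of $p_{W,n}/Q_n$ is $m_n/Q_n$ (it is $+\infty$ if $Q_n=0$). Corrections with $m_n=0$ have $p_{W,n}=0$ for all active $W$ and are therefore excluded.

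Second, since $\sum_{e^n}Q_n(e^n|r^n)=1$, some $e^n$ has $m_n(e^n|r^n)/Q_n(e^n|r^n)\ge C_n(r^n)$. Otherwise $m_n<C_nQ_n$ for every $e^n$, and summing over $e^n$ gives $C_n<C_n$, a contradiction. The pointwise regret is therefore at least $\log C_n(r^n)$. Taking the essential supremum over $r^n$ bounds every $Q_n$ below by $\bar r_n(\Wcal)$.

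Third, $Q_n^{\rm peNML}$ makes the ratio identically equal to $C_n(r^n)$, so it attains $\bar r_n(\Wcal)$. Its measurability follows from the assumed measurability of the envelope, and its extension off $\Rcal_n^+$ plays no role because the essential supremum is taken over $\Rcal_n^+$. This proves \eqref{eq:shtarkovglobal} together with the attainment claim.

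\emph{Global inequality and minimax order.} By \eqref{eq:rhoglobal}, $\mathfrak r_n^\star(\Wcal)=\esssup_{r^n\in\Rcal_n^+}\mathfrak r_n^\star(r^n)$. Taking the essential supremum in \eqref{eq:shtbridgeobs} gives both
\[
\mathfrak r_n^\star(\Wcal)\le\bar r_n(\Wcal)
\quad\text{and}\quad
\bar r_n(\Wcal)\le\mathfrak r_n^\star(\Wcal)+\log\mathsf H_{a^n},
\]
which is \eqref{eq:shtbridgeglobal}. The only point needing care is that essential suprema are monotone and commute with adding the constant $\log\mathsf H_{a^n}$; this is routine.

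Finally, $Q_n^{\rm peNML}(\cdot|r^n)$ is a positive multiple of $m_n(\cdot|r^n)$ and uses the same tie-breaking rule $\prec_n$. Its induced order is therefore a nonincreasing-envelope order, which is minimax by Theorem~\ref{thm:exactregret}.

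The main obstacle, though mild, is the handling of zero-envelope corrections and the $+\infty$ conventions in the Shtarkov identity, so that the supremum over active channels is exactly $m_n/Q_n$.
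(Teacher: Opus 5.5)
Your proposal is correct and follows essentially the same route as the paper: the pointwise sandwich via $j\,m_{(j)}\le C_n(r^n)$ and $m_{(j)}\le K/j$, then the reduction of the pointwise Shtarkov regret to $\max_{m_n>0} m_n/Q_n$ with attainment by $Q_n^{\rm peNML}$, then essential suprema combined with Theorem~\ref{thm:exactregret}. One refinement: run your counting step only over $\{e^n: m_n(e^n|r^n)>0\}$, as the paper does via $C_n(r^n)\le[\max_{m_n>0} m_n/Q_n]\sum_{m_n>0}Q_n$, because summing over all $e^n$ can produce a witness with $m_n=Q_n=0$, which is excluded from the maximum and so proves nothing.
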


\begin{proof}
Fix $r^n\in\Rcal_n^+$ and write $m_{(j)}=m_{n,(j)}(r^n)$ and $K=\max_{1\le j\le a^n}j\,m_{(j)}$. Since $m_{(j)}\le K/j$,
\begin{equation}
C_n(r^n) = \sum_{j=1}^{a^n}m_{(j)} \le  K\mathsf H_{a^n}.
\end{equation}
Conversely, for every $j$,
\begin{equation}
j\,m_{(j)} \le \sum_{i=1}^j m_{(i)} \le C_n(r^n).
\end{equation}
Using \eqref{eq:rhoobs} gives \eqref{eq:shtbridgeobs}.
For any conditional pmf $Q_n$ and fixed $r^n\in\Rcal_n^+$,
\begin{align}
\sup_{W\in\Wcal_n(r^n)} \max_{\substack{e^n\in\A^n\\p_{W,n}(e^n|r^n)>0}} \frac{p_{W,n}(e^n|r^n)}{Q_n(e^n|r^n)} = \max_{\substack{e^n\in\A^n\\m_n(e^n|r^n)>0}} \frac{m_n(e^n|r^n)}{Q_n(e^n|r^n)},
\label{eq:pointwiseloglossratio}
\end{align}
with either side interpreted as $+\infty$ if a denominator vanishes while the corresponding numerator is positive. Moreover,
\begin{align}
C_n(r^n) &= \sum_{e^n:m_n(e^n|r^n)>0}m_n(e^n|r^n)\\
&\le \left[ \max_{\substack{e^n\in\A^n\\m_n(e^n|r^n)>0}} \frac{m_n(e^n|r^n)}{Q_n(e^n|r^n)} \right]
\sum_{e^n:m_n(e^n|r^n)>0}Q_n(e^n|r^n). 
\end{align}
Hence, by \eqref{eq:pointwiseloglossratio}, the pointwise worst-case log-loss regret is at least $\log C_n(r^n)$. Equality is attained by $Q_n^{\rm peNML}$, for which $m_n(e^n|r^n)/Q_n^{\rm peNML}(e^n|r^n)=C_n(r^n)$ whenever $m_n(e^n|r^n)>0$. Taking the $\mu_n$-essential supremum and then the infimum over measurable conditional pmfs proves \eqref{eq:shtarkovglobal}.
Finally, Theorem~\ref{thm:exactregret} gives $\mathfrak r_n^\star(\Wcal) =\esssup_{r^n\in\Rcal_n^+} \mathfrak r_n^\star(r^n)$. Taking essential suprema in \eqref{eq:shtbridgeobs} gives \eqref{eq:shtbridgeglobal}. Since $Q_n^{\rm peNML}(\cdot|r^n)$ is proportional to $m_n(\cdot|r^n)$, its induced order is a nonincreasing-envelope order and is therefore minimax by Theorem~\ref{thm:exactregret}.
\end{proof}

Since $\log\mathsf H_{a^n}=\mathcal O(\log n)=o(n)$, \eqref{eq:shtbridgeglobal} implies
\begin{equation}
\bar r_n(\Wcal)=o(n)
\quad\Longleftrightarrow\quad
\mathfrak r_n^\star(\Wcal)=o(n).
\end{equation}

\begin{proposition}[Posterior log-loss regret and induced rank regret]
\label{prop:domination}
Let $Q_n$ be a measurable conditional pmf on $\A^n$ given $\Rcal_n$, and let $\pi_n^Q$ order corrections by nonincreasing $Q_n(\cdot|r^n)$ with the prescribed deterministic tie-breaking rule. Define the worst-case posterior log-loss regret
\begin{align}
R_n(Q_n,\Wcal) = \esssup_{r^n\in\Rcal_n^+} \sup_{W\in\Wcal_n(r^n)}
\max_{\substack{e^n\in\A^n\\ p_{W,n}(e^n|r^n)>0}}
\log \frac{p_{W,n}(e^n|r^n)}{Q_n(e^n|r^n)},
\label{eq:domreg}
\end{align}
where the logarithmic ratio is $+\infty$ when $Q_n(e^n|r^n)=0<p_{W,n}(e^n|r^n)$. Then
\begin{equation}
\mathfrak r_n(\pi_n^Q,\Wcal) \le R_n(Q_n,\Wcal).
\label{eq:domrank}
\end{equation}
In particular, \eqref{eq:shtarkovglobal} gives $\inf_{Q_n}R_n(Q_n,\Wcal)=\bar r_n(\Wcal)$.
\end{proposition}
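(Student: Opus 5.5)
The plan is to prove \eqref{eq:domrank} pointwise, separately for each admissible observation and each active channel--correction pair, and then take suprema. The pointwise tool is Lemma~\ref{lem:kraftrank} applied to $L_n=-\log Q_n$. This codelength satisfies \eqref{eq:condkraft} with equality for every $r^n$, and it induces exactly the order $\pi_n^Q$, since nondecreasing $-\log Q_n$ is the same as nonincreasing $Q_n$ under the same tie-breaking rule. The lemma therefore gives
\[
G_{\pi_n^Q}(e^n|r^n)\,Q_n(e^n|r^n)\le 1
\]
for all $(e^n,r^n)$, where the case $Q_n(e^n|r^n)=0$ is trivial because $L_n=+\infty$.

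Next, fix $r^n\in\Rcal_n^+$, $W\in\Wcal_n(r^n)$ and $e^n$ with $p_{W,n}(e^n|r^n)>0$. If $Q_n(e^n|r^n)=0$, the corresponding log-loss ratio is $+\infty$ and the inequality is trivial. Otherwise we write
\[
\log\!\left(G_{\pi_n^Q}(e^n|r^n)p_{W,n}(e^n|r^n)\right)=\log\!\left(G_{\pi_n^Q}(e^n|r^n)Q_n(e^n|r^n)\right)+\log\frac{p_{W,n}(e^n|r^n)}{Q_n(e^n|r^n)} .
\]
By the previous step, this is at most $\log\bigl(p_{W,n}(e^n|r^n)/Q_n(e^n|r^n)\bigr)$. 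This is the same manipulation as in part (ii) of Theorem~\ref{thm:oneshotrank}, now made uniform over the active family.

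We then take the maximum over $e^n$ with positive posterior, the supremum over $W\in\Wcal_n(r^n)$, and finally the $\mu_n$-essential supremum over $\Rcal_n^+$. Each of these operations is monotone, so the pointwise inequality passes to \eqref{eq:domrank}. Both sides in \eqref{eq:rho} and \eqref{eq:domreg} use the same index sets and the same essential supremum with respect to $\mu_n$ restricted to $\Rcal_n^+$, so no null-set issues arise. The final claim, $\inf_{Q_n}R_n(Q_n,\Wcal)=\bar r_n(\Wcal)$, is a restatement of \eqref{eq:shtarkovglobal}, because $R_n(Q_n,\Wcal)$ is exactly the objective minimized there.

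I do not expect a real obstacle. The only point needing care is the bookkeeping for zero posterior or zero assignment values and the matching of the measure used for the essential suprema, which the conventions above handle.
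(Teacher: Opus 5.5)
Your proposal is correct and follows essentially the paper's route. The paper proves the same pointwise bound $G_{\pi_n^Q}(e^n|r^n)Q_n(e^n|r^n)\le 1$, using directly the counting argument that you instead obtain by citing Lemma~\ref{lem:kraftrank} with $L_n=-\log Q_n$; it then deduces $G_{\pi_n^Q}(e^n|r^n)p_{W,n}(e^n|r^n)\le p_{W,n}(e^n|r^n)/Q_n(e^n|r^n)$ (handling $Q_n(e^n|r^n)=0$ in the extended-real sense), takes the defining suprema, and reads the final identity off \eqref{eq:shtarkovglobal}.
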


\begin{proof}
Fix $r^n\in\Rcal_n^+$, $W\in\Wcal_n(r^n)$, and $e^n\in\A^n$ with $p_{W,n}(e^n|r^n)>0$. If $Q_n(e^n|r^n)=0$, the required pointwise inequality holds in the extended-real sense. Otherwise, the $G_{\pi_n^Q}(e^n|r^n)$ corrections of rank no larger than that of $e^n$ each have $Q_n$-probability at least $Q_n(e^n|r^n)$. Therefore, $G_{\pi_n^Q}(e^n|r^n)Q_n(e^n|r^n)\le1$, and hence
\begin{equation}
G_{\pi_n^Q}(e^n|r^n)p_{W,n}(e^n|r^n) \le \frac{p_{W,n}(e^n|r^n)}{Q_n(e^n|r^n)}.
\end{equation}
Taking logarithms and then the defining suprema proves \eqref{eq:domrank}.
\end{proof}

\begin{proposition}[Conditional cross-entropy bound for expected logarithmic query rank]
\label{prop:crossentropy}
Fix $W$ and let $Q_n$ be a measurable conditional pmf on $\A^n$ given $\Rcal_n$. Under the auxiliary uniform-input distribution associated with $W$, define
\begin{equation}
\mathcal D_n(W\Vert Q_n) = \E_W\!\left[ \log \frac{p_{W,n}(E_U^n|R^n)}{Q_n(E_U^n|R^n)} \right]
\in[0,+\infty],
\label{eq:conditionalD}
\end{equation}
where the expectation is $+\infty$ when $Q_n(E_U^n|R^n)=0$ on an event of positive probability. This quantity is the conditional Kullback--Leibler divergence from the correction posterior to $Q_n$, averaged over $R^n$. Then
\begin{align}
\E_W\!\left[ \log G_{\pi_n^Q}(E_U^n|R^n) \right]
&\le \E_W\!\left[ -\log Q_n(E_U^n|R^n) \right]\\
&= H_W(E_U^n|R^n) + \mathcal D_n(W\Vert Q_n).
\label{eq:crossentropy}
\end{align}
Equivalently,
\begin{align}
\E_W\!\left[ \log\!\left( G_{\pi_n^Q}(E_U^n|R^n) p_{W,n}(E_U^n|R^n) \right) \right] \le
\mathcal D_n(W\Vert Q_n).
\label{eq:avgregret}
\end{align}
\end{proposition}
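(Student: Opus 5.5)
The plan is to reduce the expected-rank bound to a pointwise inequality that holds at every $(e^n,r^n)$, and then average it under the auxiliary uniform-input distribution. Apply Lemma~\ref{lem:kraftrank} with $L_n=-\log Q_n$. This codelength is conditionally Kraft-admissible with equality, and its induced order is exactly $\pi_n^Q$ (nondecreasing $L_n$ is the same as nonincreasing $Q_n$, with the same tie-breaking rule $\prec_n$). The lemma then gives $\log G_{\pi_n^Q}(e^n|r^n)\le -\log Q_n(e^n|r^n)$ for every $(e^n,r^n)$, with the right side equal to $+\infty$ when $Q_n(e^n|r^n)=0$. Since $\log G_{\pi_n^Q}\in[0,n]$ is bounded, its expectation is finite. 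The right side is a nonnegative extended-real random variable, so taking $\E_W$ yields the first inequality in \eqref{eq:crossentropy} with no integrability issues.

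For the identity $\E_W[-\log Q_n(E_U^n|R^n)]=H_W(E_U^n|R^n)+\mathcal D_n(W\Vert Q_n)$, first condition on $R^n=r^n$. Under the auxiliary distribution, $E_U^n$ given $R^n=r^n$ has pmf $p_{W,n}(\cdot|r^n)$ (Remark~\ref{rem:uniformbenchmark}), using the measurable extension on the null set where the normalizer vanishes. Pointwise, for $p_{W,n}(e^n|r^n)>0$,
\[
-\log Q_n(e^n|r^n)=-\log p_{W,n}(e^n|r^n)+\log\frac{p_{W,n}(e^n|r^n)}{Q_n(e^n|r^n)} .
\]
The first term has finite conditional expectation $H(p_{W,n}(\cdot|r^n))\le n$. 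Hence adding expectations is legitimate even when the second term has infinite expectation. The conditional expectation of the second term is the finite-alphabet KL divergence $D(p_{W,n}(\cdot|r^n)\Vert Q_n(\cdot|r^n))\in[0,\infty]$, which is nonnegative by Gibbs' inequality. Integrating over $R^n$ then gives the identity together with $\mathcal D_n\in[0,+\infty]$. The expectation in \eqref{eq:conditionalD} is well defined because its negative part is controlled: Lemma~\ref{lem:regretnegativetail} gives exponential negative tails, or more simply, conditionally the integrand's negative part is summable on a finite alphabet.

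For \eqref{eq:avgregret}, subtract $H_W(E_U^n|R^n)=\E_W[-\log p_{W,n}(E_U^n|R^n)]$, which is finite and at most $n$, from both sides, using $\log(G p)=\log G+\log p$ almost surely. Here $p_{W,n}(E_U^n|R^n)>0$ almost surely, since corrections of zero posterior mass occur with probability zero. The only step requiring care is the extended-real arithmetic when $\mathcal D_n=+\infty$. In that case \eqref{eq:avgregret} is trivial, so it suffices to treat the case $\mathcal D_n<\infty$, where every term is finite and the subtraction is valid.
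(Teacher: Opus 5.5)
Your proof is correct and follows essentially the same route as the paper. Both rest on the pointwise bound $G_{\pi_n^Q}Q_n\le 1$, which the paper states directly and you obtain via Lemma~\ref{lem:kraftrank} with $L_n=-\log Q_n$; both then take expectations, split $-\log Q_n$ into posterior self-information plus the log-ratio, and subtract the finite entropy. Your extra bookkeeping on integrability and on the case $\mathcal D_n=+\infty$ spells out details the paper leaves implicit.
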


\begin{proof}
The nonincreasing $Q_n$ order satisfies
\begin{equation}
G_{\pi_n^Q}(e^n|r^n)Q_n(e^n|r^n)\le1
\end{equation}
whenever $Q_n(e^n|r^n)>0$; when $Q_n(e^n|r^n)=0$, the corresponding logarithmic inequality holds in the extended-real sense. Hence
\begin{equation}
\log G_{\pi_n^Q}(E_U^n|R^n) \le -\log Q_n(E_U^n|R^n)
\end{equation}
almost surely. Taking expectations gives the first inequality in \eqref{eq:crossentropy}. The second equality follows from \eqref{eq:conditionalD} and $H_W(E_U^n|R^n) = \E_W[-\log p_{W,n}(E_U^n|R^n)]$. Subtracting this entropy identity gives \eqref{eq:avgregret}.
\end{proof}

\subsection{Conditional-assignment constructions}
\label{subsec:constructQ}

The preceding bounds apply to several conditional assignments $Q_n$.

\emph{- Posterior-envelope NML.}
The assignment in \eqref{eq:penml} induces the nonincreasing posterior-envelope order and therefore attains the finite-block minimax value in Theorem~\ref{thm:exactregret}. The conditional Shtarkov minimax log-loss regret exceeds the minimax posterior-normalized rank regret by at most $\log\mathsf H_{a^n}$ by Theorem~\ref{thm:shtbridge}. The blockwise definition in \eqref{eq:penml} does not provide a recursive procedure for generating the query order without evaluating the required posterior-envelope values.

\emph{- Finite mixtures.}
For a finite channel class, Corollary~\ref{cor:mixture} gives a posterior mixture satisfying a blocklength-independent bound on $R_n(Q_n,\Wcal)$.

\emph{- Posterior mixtures.}
For a parametric family $\{W_\theta\}_{\theta\in\Theta}$ whose conditional posteriors are defined on a common measurable observation domain, write $p_{\theta,n}=p_{W_\theta,n}$. Assume that $(\theta,r^n)\mapsto p_{\theta,n}(e^n|r^n)$ is measurable for every $e^n\in\A^n$. Let $\Pi$ be a probability measure on $\Theta$ and, on this common domain, define $Q_n^\Pi(e^n|r^n) = \int_\Theta p_{\theta,n}(e^n|r^n)\,\Pi(\mathrm d\theta)$. Extend $Q_n^\Pi$ by a fixed conditional pmf outside the common domain. This is a fixed-weight mixture of conditional posteriors and need not equal the posterior induced by the corresponding mixture of channels. The regret and expected-rank bounds in Propositions~\ref{prop:domination} and~\ref{prop:crossentropy} apply directly to $Q_n^\Pi$.

\emph{- Parametric conditional models.}
Let ${Q_{n,\phi}}$ be a family of measurable conditional pmfs. For fixed $W$, suppose a minimizer of $\E_W[-\log Q_{n,\phi}(E_U^n\mid R^n)]$ over $\phi$ exists. By Proposition~\ref{prop:crossentropy}, any such minimizer minimizes the corresponding cross-entropy upper bound on $\E_W[\log G_{\pi_n^{Q_{n,\phi}}}(E_U^n\mid R^n)]$. This does not imply that the same minimizer minimizes the expected logarithmic rank.

\emph{- Plug-in channel estimates.}
Under the joint measurability condition stated above, suppose $\hat\theta_n=\hat\theta_n(R^n)$ is a measurable estimator and define $Q_n(e^n\mid r^n) =p_{\hat\theta_n(r^n),n}(e^n\mid r^n)$. The pointwise posterior log-ratio is
\begin{equation*}
\log \frac{p_{\theta,n}(e^n\mid r^n)}{p_{\hat\theta_n(r^n),n}(e^n\mid r^n)}.
\end{equation*}
A parameter-estimation error bound does not by itself bound this log-ratio; such a bound additionally requires a regularity condition relating parameter distance to posterior log-ratios.

\begin{corollary}[Mixture posterior for a finite channel class]
\label{cor:mixture}

Let $\Wcal=\{W_1,\ldots,W_K\}$ and choose weights $\alpha_j>0$ with $\sum_{j=1}^K\alpha_j=1$. For $r^n\in\Rcal_n^+$, define
\[
\mathcal I_n(r^n) = \{j\in\{1,\ldots,K\}:W_j\in\Wcal_n(r^n)\},
\qquad A_n(r^n) = \sum_{j\in\mathcal I_n(r^n)}\alpha_j.
\]
Since $r^n\in\Rcal_n^+$, $A_n(r^n)>0$. Define
\begin{equation}
Q_n^{\rm mix}(e^n|r^n) = \frac{1}{A_n(r^n)} \sum_{j\in\mathcal I_n(r^n)}
\alpha_j p_{W_j,n}(e^n|r^n).
\label{eq:mix}
\end{equation}
Because $Z_{W_j,n}$ is measurable and $p_{W_j,n}(\cdot|r^n)$ is measurable on $\{Z_{W_j,n}>0\}$, $A_n$ and $Q_n^{\rm mix}$ are measurable on $\Rcal_n^+$. Let $\pi_n^{\rm mix} \coloneqq \pi_n^{Q_n^{\rm mix}}$. Then, with $\alpha_{\min}=\min_{1\le j\le K}\alpha_j$,
\begin{align}
R_n(Q_n^{\rm mix},\Wcal) &\le \log\frac{1}{\alpha_{\min}},\\
\mathfrak r_n(\pi_n^{\rm mix},\Wcal) &\le \log\frac{1}{\alpha_{\min}}.
\end{align}
For equal weights, both upper bounds become $\log K$.
\end{corollary}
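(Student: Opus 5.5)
The plan is to bound the pointwise log-ratio $\log\bigl(p_{W,n}(e^n|r^n)/Q_n^{\rm mix}(e^n|r^n)\bigr)$ uniformly over admissible observations and active channels. The rank-regret bound then follows immediately from Proposition~\ref{prop:domination}, which gives $\mathfrak r_n(\pi_n^{\rm mix},\Wcal)\le R_n(Q_n^{\rm mix},\Wcal)$. So the work reduces to the first inequality, together with a brief check that $Q_n^{\rm mix}$ is a legitimate measurable conditional pmf.

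\emph{Well-definedness.} Fix $r^n\in\Rcal_n^+$. The index set $\mathcal I_n(r^n)$ is nonempty, so $A_n(r^n)\ge\alpha_{\min}>0$. Each $p_{W_j,n}(\cdot|r^n)$ with $j\in\mathcal I_n(r^n)$ is a pmf on $\A^n$, so \eqref{eq:mix} is a convex combination of pmfs and is itself a pmf. For measurability, I would partition $\Rcal_n^+$ into the finitely many measurable pieces on which $\mathcal I_n(r^n)$ equals a fixed subset $S\subseteq\{1,\ldots,K\}$. These pieces are finite intersections of the sets $\{Z_{W_j,n}>0\}$ and their complements. On each piece, $Q_n^{\rm mix}$ is a fixed finite combination of measurable functions.

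\emph{Pointwise bound.} Fix $W\in\Wcal_n(r^n)$; then $W=W_k$ for some $k\in\mathcal I_n(r^n)$. Take $e^n$ with $p_{W_k,n}(e^n|r^n)>0$. Since every term in \eqref{eq:mix} is nonnegative, we can drop all terms except the $k$-th and use $A_n(r^n)\le1$:
\[
Q_n^{\rm mix}(e^n|r^n)\ \ge\ \frac{\alpha_k}{A_n(r^n)}\,p_{W_k,n}(e^n|r^n)\ \ge\ \alpha_{\min}\,p_{W_k,n}(e^n|r^n)>0 .
\]
In particular $Q_n^{\rm mix}$ is never zero where the active posterior is positive, so the convention assigning $+\infty$ to a zero denominator never applies. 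Taking logarithms gives $\log\bigl(p_{W_k,n}/Q_n^{\rm mix}\bigr)\le\log(1/\alpha_{\min})$.

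\emph{Conclusion.} This bound holds for every $r^n\in\Rcal_n^+$, not just almost every one. Taking the maximum over $e^n$, the supremum over $W\in\Wcal_n(r^n)$, and the essential supremum over $r^n$ yields $R_n(Q_n^{\rm mix},\Wcal)\le\log(1/\alpha_{\min})$. Proposition~\ref{prop:domination} then transfers this bound to $\mathfrak r_n(\pi_n^{\rm mix},\Wcal)$. With equal weights, $\alpha_{\min}=1/K$, so both bounds become $\log K$.

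The only real subtlety is normalizing by $A_n(r^n)$ rather than by $1$. This renormalization only increases $Q_n^{\rm mix}$, which helps the bound, so I expect no genuine obstacle. The measurability bookkeeping on the partition of $\Rcal_n^+$ is the one step that needs care.
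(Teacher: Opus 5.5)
Your proposal is correct and follows the paper's proof essentially verbatim. You drop all but the $k$-th mixture term, use $A_n(r^n)\le 1$ to get $Q_n^{\rm mix}\ge\alpha_{\min}p_{W_k,n}$, take the required suprema, and invoke Proposition~\ref{prop:domination} for the rank regret. Your measurability check via the finite partition of $\Rcal_n^+$ by the sets $\{Z_{W_j,n}>0\}$ is a harmless elaboration of what the paper states in one line.
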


\begin{proof}
Fix $r^n\in\Rcal_n^+$ and $j\in\mathcal I_n(r^n)$. Since $A_n(r^n)\le1$,
\begin{equation}
Q_n^{\rm mix}(e^n|r^n) \ge \frac{\alpha_j}{A_n(r^n)} p_{W_j,n}(e^n|r^n)
\ge \alpha_j p_{W_j,n}(e^n|r^n).
\end{equation}
Hence
\[
\log \frac{p_{W_j,n}(e^n|r^n)}{Q_n^{\rm mix}(e^n|r^n)}
\le \log\frac{1}{\alpha_j} \le \log\frac{1}{\alpha_{\min}}.
\]
Taking the required suprema proves the bound on $R_n(Q_n^{\rm mix},\Wcal)$. Proposition~\ref{prop:domination} gives the corresponding bound on posterior-normalized rank regret.
\end{proof}

For infinite parametric families, the following result gives covering-number bounds.

\begin{theorem}[Covering-number bounds for Shtarkov, posterior, and rank regret]
\label{thm:coverregret}
Let $\Wcal=\{W_\theta\mid\theta\in\Theta\}$, where $(\Theta,d_\Theta)$ is a compact metric space. Assume that, for every $n$ and every $r^n\in\Rcal_n^+$, $\Wcal_n(r^n)=\Wcal$, and that there exists $\Scal_n(r^n)\subseteq\A^n$ such that $\{e^n\in\A^n:p_{\theta,n}(e^n|r^n)>0\} = \Scal_n(r^n)$ for every $\theta\in\Theta$. Suppose that, for each $n$, there exists $\mathcal L_n<\infty$ such that
\begin{equation}
\left| \log p_{\theta,n}(e^n|r^n) - \log p_{\theta',n}(e^n|r^n) \right|
\le \mathcal L_n d_\Theta(\theta,\theta')
\label{eq:postlip}
\end{equation}
for all $\theta,\theta'\in\Theta$, all $r^n\in\Rcal_n^+$, and all $e^n\in\Scal_n(r^n)$.
Let $N(\Theta,d_\Theta,\epsilon)$ denote the covering number of $\Theta$. Then, for every $\epsilon_n>0$,
\begin{equation}
\bar r_n(\Wcal) \le \mathcal L_n\epsilon_n + \log N(\Theta,d_\Theta,\epsilon_n).
\label{eq:coverpenml}
\end{equation}
Moreover, for any $\epsilon_n$-net $\{\theta_1,\ldots,\theta_{K_n}\}$, define
\begin{equation}
Q_n^{\rm grid}(e^n|r^n) = \frac{1}{K_n} \sum_{j=1}^{K_n} p_{\theta_j,n}(e^n|r^n).
\label{eq:gridmix}
\end{equation}
Let $\pi_n^{\rm grid}\coloneqq\pi_n^{Q_n^{\rm grid}}$. Then
\begin{align}
R_n(Q_n^{\rm grid},\Wcal) &\le \mathcal L_n\epsilon_n+\log K_n,
\label{eq:gridregret}\\
\mathfrak r_n(\pi_n^{\rm grid},\Wcal) &\le \mathcal L_n\epsilon_n+\log K_n.
\label{eq:gridrankregret}
\end{align}
\end{theorem}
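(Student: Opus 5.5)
The plan is to prove the grid-mixture bounds \eqref{eq:gridregret}--\eqref{eq:gridrankregret} first, because they are pointwise and elementary. The Shtarkov bound \eqref{eq:coverpenml} then follows from the minimax characterization in Theorem~\ref{thm:shtbridge}.

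Fix $n$ and $\epsilon_n>0$, and let $\{\theta_1,\ldots,\theta_{K_n}\}$ be an $\epsilon_n$-net. Compactness guarantees a finite net, and one of size $K_n=N(\Theta,d_\Theta,\epsilon_n)$ exists. The assumption $\Wcal_n(r^n)=\Wcal$ means every $p_{\theta,n}(\cdot|r^n)$, and hence $Q_n^{\rm grid}$, is defined on $\Rcal_n^+$. Measurability of $Q_n^{\rm grid}$ is immediate as a finite average of measurable posteriors.

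Fix $r^n\in\Rcal_n^+$, $\theta\in\Theta$, and $e^n$ with $p_{\theta,n}(e^n|r^n)>0$, so that $e^n\in\Scal_n(r^n)$. Choose $j$ with $d_\Theta(\theta,\theta_j)\le\epsilon_n$. The common-support assumption gives $p_{\theta_j,n}(e^n|r^n)>0$, so the log-Lipschitz condition \eqref{eq:postlip} applies and yields
\[
p_{\theta_j,n}(e^n|r^n)\ge a^{-\mathcal L_n\epsilon_n}p_{\theta,n}(e^n|r^n).
\]
Since the mixture dominates each of its components scaled by the weight $1/K_n$, we get
\[
Q_n^{\rm grid}(e^n|r^n)\ge K_n^{-1}p_{\theta_j,n}(e^n|r^n)\ge K_n^{-1}a^{-\mathcal L_n\epsilon_n}p_{\theta,n}(e^n|r^n).
\]
Taking logarithms bounds the pointwise log-ratio by $\mathcal L_n\epsilon_n+\log K_n$. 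This bound is uniform in $r^n$, $\theta$ and $e^n$, so passing to the suprema and the essential supremum in \eqref{eq:domreg} gives \eqref{eq:gridregret}. Proposition~\ref{prop:domination} then transfers the same bound to the induced order, giving \eqref{eq:gridrankregret}.

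For \eqref{eq:coverpenml}, Theorem~\ref{thm:shtbridge} gives $\bar r_n(\Wcal)=\inf_{Q_n}R_n(Q_n,\Wcal)\le R_n(Q_n^{\rm grid},\Wcal)$. Using a minimal net, $K_n=N(\Theta,d_\Theta,\epsilon_n)$, yields the claimed bound.

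The argument is mostly bookkeeping. The main points to check are that common support is what makes the Lipschitz bound applicable at the grid point $\theta_j$, and that $Q_n^{\rm grid}$ is a legitimate measurable conditional pmf on $\Rcal_n^+$, extended outside $\Rcal_n^+$ as the paper prescribes.
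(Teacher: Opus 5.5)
Your proof is correct. The grid-mixture part, $Q_n^{\rm grid}\ge K_n^{-1}a^{-\mathcal L_n\epsilon_n}p_{\theta,n}$ followed by Proposition~\ref{prop:domination}, is exactly the paper's argument for \eqref{eq:gridregret} and \eqref{eq:gridrankregret}.

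For the Shtarkov bound \eqref{eq:coverpenml} you take a slightly different route from the paper.

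\emph{Your route.} You deduce it from the minimax identity \eqref{eq:shtarkovglobal}. That gives $\bar r_n(\Wcal)=\inf_{Q_n}R_n(Q_n,\Wcal)\le R_n(Q_n^{\rm grid},\Wcal)$, applied to a minimal net.

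\emph{The paper's route.} It bounds the Shtarkov sum directly. It uses $m_n(e^n|r^n)\le a^{\mathcal L_n\epsilon_n}\max_j p_{\theta_j,n}(e^n|r^n)\le a^{\mathcal L_n\epsilon_n}\sum_j p_{\theta_j,n}(e^n|r^n)$. Summing over $e^n$ then gives $C_n(r^n)\le a^{\mathcal L_n\epsilon_n}K_n$ at every admissible $r^n$.

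The two routes rest on the same inequality. The lower-bound half of Theorem~\ref{thm:shtbridge} is $C_n(r^n)\le\bigl[\max_{e^n}m_n(e^n|r^n)/Q_n(e^n|r^n)\bigr]\sum_{e^n}Q_n(e^n|r^n)$. Specializing it to $Q_n=Q_n^{\rm grid}$ reproduces the paper's computation.

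\emph{What each route buys.} The paper's version is self-contained and does not invoke the minimax characterization. It also yields an everywhere pointwise bound on $\log C_n(r^n)$. As you wrote it, your version yields only the essential-supremum statement. Your version is more economical, since it reuses the grid computation. Its only extra requirement is that $Q_n^{\rm grid}$, extended by a fixed pmf outside $\Rcal_n^+$, be an admissible measurable conditional pmf in the infimum, and you checked this.
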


\begin{proof}
Fix $\epsilon_n>0$ and choose an $\epsilon_n$-net $\{\theta_1,\ldots,\theta_{K_n}\}$ with $K_n=N(\Theta,d_\Theta,\epsilon_n)$. For every $\theta\in\Theta$, choose $\theta_j$ satisfying $d_\Theta(\theta,\theta_j)\le\epsilon_n$. For $r^n\in\Rcal_n^+$ and $e^n\in\Scal_n(r^n)$, \eqref{eq:postlip} gives
\begin{equation}
p_{\theta,n}(e^n|r^n) \le a^{\mathcal L_n\epsilon_n} p_{\theta_j,n}(e^n|r^n).
\end{equation}
Hence
\begin{align}
m_n(e^n|r^n) \le a^{\mathcal L_n\epsilon_n} \max_{1\le j\le K_n} p_{\theta_j,n}(e^n|r^n)
\le a^{\mathcal L_n\epsilon_n} \sum_{j=1}^{K_n} p_{\theta_j,n}(e^n|r^n).
\end{align}
The same inequality holds trivially for $e^n\notin\Scal_n(r^n)$ because all posterior probabilities then vanish. Summing over $e^n$ gives
\begin{equation}
C_n(r^n) \le a^{\mathcal L_n\epsilon_n}K_n.
\end{equation}
Taking logarithms and the $\mu_n$-essential supremum over $r^n\in\Rcal_n^+$ proves \eqref{eq:coverpenml}.
For an arbitrary $\epsilon_n$-net of cardinality $K_n$, the same choice of $\theta_j$ gives
\begin{equation}
Q_n^{\rm grid}(e^n|r^n) \ge \frac{1}{K_n} p_{\theta_j,n}(e^n|r^n)
\ge \frac{a^{- \mathcal L_n\epsilon_n}}{K_n} p_{\theta,n}(e^n|r^n).
\end{equation}
Taking the required suprema proves \eqref{eq:gridregret}; Proposition~\ref{prop:domination} gives \eqref{eq:gridrankregret}.
\end{proof}

\begin{corollary}[Logarithmic regret for finite-dimensional classes]
\label{cor:paramlogregret}
Under the assumptions of Theorem~\ref{thm:coverregret}, suppose in addition that $\Theta\subset\mathbb R^q$, $d_\Theta(\theta,\theta')=\|\theta-\theta'\|$, and, for some $A\ge1$,
\begin{equation}
N(\Theta,d_\Theta,\epsilon) \le \left(\frac{A}{\epsilon}\right)^q, \qquad 0<\epsilon\le1.
\end{equation}
Suppose also that \eqref{eq:postlip} holds with $\mathcal L_n\le nL$. For $\epsilon_n=1/n$, choose an $\epsilon_n$-net with $K_n\le(An)^q$. Then
\begin{equation}
\begin{aligned}
\bar r_n(\Wcal) &\le L+q\log(An),\\
R_n(Q_n^{\rm grid},\Wcal) &\le L+q\log(An),\\
\mathfrak r_n(\pi_n^{\rm grid},\Wcal) &\le L+q\log(An).
\end{aligned}
\label{eq:paramlogregret}
\end{equation}
Consequently, $\bar r_n(\Wcal)=\mathcal O(\log n)$, $R_n(Q_n^{\rm grid},\Wcal)=\mathcal O(\log n)$, and $\mathfrak r_n(\pi_n^{\rm grid},\Wcal)=\mathcal O(\log n)$, and all three quantities are $o(n)$.
\end{corollary}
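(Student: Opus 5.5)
The plan is to apply Theorem~\ref{thm:coverregret} directly with the prescribed choice $\epsilon_n=1/n$ and then bound the two terms $\mathcal L_n\epsilon_n$ and $\log K_n$ separately. All hypotheses of Theorem~\ref{thm:coverregret} are assumed. So \eqref{eq:coverpenml}, \eqref{eq:gridregret}, and \eqref{eq:gridrankregret} hold for every $\epsilon_n>0$ and for every $\epsilon_n$-net of cardinality $K_n$.

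First, for $n\ge1$ we have $\epsilon_n=1/n\in(0,1]$, so the covering hypothesis applies. It gives
\[
N(\Theta,d_\Theta,1/n)\le (An)^q .
\]
Since the covering number is attained by some net, we may fix an $\epsilon_n$-net with $K_n=N(\Theta,d_\Theta,1/n)\le(An)^q$. Second, the Lipschitz term satisfies
\[
\mathcal L_n\epsilon_n\le nL\cdot\tfrac1n=L .
\]
Third, $\log K_n\le q\log(An)$, and $\log N(\Theta,d_\Theta,\epsilon_n)$ obeys the same bound. Substituting these into \eqref{eq:coverpenml}, \eqref{eq:gridregret}, and \eqref{eq:gridrankregret} yields the three inequalities in \eqref{eq:paramlogregret}. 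The rank-regret bound needs no separate argument, because \eqref{eq:gridrankregret} is already part of Theorem~\ref{thm:coverregret} through Proposition~\ref{prop:domination}.

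For the asymptotic statements, $L+q\log(An)=q\log n+(L+q\log A)=\mathcal O(\log n)$, and $\log n=o(n)$. There is one further point: $\mathfrak r_n$ and $R_n$ could in principle be very negative, but an $\mathcal O(\log n)$ upper bound is all the statement asks for. Alternatively, $\bar r_n\ge0$ since $C_n\ge1$.

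The only care point is that the bound on the net cardinality must hold for the specific net used in $Q_n^{\rm grid}$. Choosing a minimal net handles this, and there is no real obstacle.
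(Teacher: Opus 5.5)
Your proposal is correct and matches the paper's argument. The paper gives no separate proof because the corollary follows by substituting $\epsilon_n=1/n$ into \eqref{eq:coverpenml}, \eqref{eq:gridregret}, and \eqref{eq:gridrankregret}, using $\mathcal L_n\epsilon_n\le L$ and $\log K_n\le q\log(An)$. Your aside about possible negativity is harmless. In fact $R_n(Q_n^{\rm grid},\Wcal)\ge\bar r_n(\Wcal)\ge0$ by \eqref{eq:shtarkovglobal}, and $\mathfrak r_n(\pi_n^{\rm grid},\Wcal)\ge-\log\mathsf H_{a^n}$ by Theorem~\ref{thm:shtbridge}, so the $\mathcal O(\log n)$ bounds are actually two-sided.
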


\begin{remark}
\label{rem:coverregretrestricted}
Theorem~\ref{thm:coverregret} requires the log-Lipschitz condition \eqref{eq:postlip} uniformly over all admissible observations. For the generalized-Gaussian model in Section~\ref{sec:signalexamples}, this condition is not uniform over the unbounded observation space. Theorem~\ref{thm:gguniversal} therefore derives the corresponding covering bound on high-probability observation sets and bounds the probabilities of their complements separately.
\end{remark}

\section{Uniform-Subset Random Coding and First-Order Reliability}
\label{sec:coding}

\subsection{Uniform-subset random codebooks}

Fix $W\in\Wcal$ and $1\le M_n\le a^n$. Draw $\C_n$ uniformly from the $M_n$-subsets of $\A^n$. Conditional on $\C_n$, draw $X^n$ uniformly from $\C_n$, and generate $R^n$ according to $W_n(\cdot|X^n)$. Let $\pi_n(R^n)$ be a deterministic, measurable, codebook-independent observation-dependent ordering, and recall that $D_n=G_{\pi_n}(E^n|R^n)$. Let $\widehat X_{\pi_n}^n$ denote the output of GRAND under this ordering, and let
\[
\Pe(\C_n,W;\pi_n) = \Prb\{\widehat X_{\pi_n}^n\neq X^n\mid \C_n\}
\]
denote its block error probability conditional on $\C_n$, where the probability is over the uniform transmitted codeword and the channel output. By Remark~\ref{rem:uniformbenchmark}, after averaging over the random codebook and the uniform transmitted message, the joint distribution of $(E^n,R^n,D_n)$ equals the auxiliary joint distribution of $(E_U^n,R^n,D_{U,n})$.

\begin{theorem}[Exact finite-block rank--collision identity]
\label{thm:hyper}
For GRAND without abandonment under the uniform-subset ensemble,
\begin{equation}
\E_{\C_n}\!\left[\Pe(\C_n,W;\pi_n)\right] = \E_W\!\left[ 1- \frac{\binom{a^n-D_{U,n}}{M_n-1}}{\binom{a^n-1}{M_n-1}} \right].
\label{eq:hyper}
\end{equation}
Moreover, under the joint uniform-subset experiment,
\begin{equation}
\Prb\!\left\{ \widehat X_{\pi_n}^n=X^n \,\middle|\, X^n,R^n \right\} =
\frac{\binom{a^n-D_n}{M_n-1}}{\binom{a^n-1}{M_n-1}}
\quad \mathrm{a.s.}
\label{eq:hypercond}
\end{equation}
with the convention that $\binom{m}{k}=0$ for $m<k$.
\end{theorem}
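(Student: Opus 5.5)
We prove the conditional identity \eqref{eq:hypercond} first and then average it, using Remark~\ref{rem:uniformbenchmark} to pass from the uniform-subset experiment to the auxiliary experiment.

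\textbf{Reformulating the experiment.} Under the uniform-subset ensemble, first draw $\C_n$ and then $X^n$ uniformly from $\C_n$. This is equivalent to drawing $X^n\sim\mathrm{Unif}(\A^n)$ and then choosing the remaining $M_n-1$ codewords $\C_n\setminus\{X^n\}$ uniformly among the $(M_n-1)$-subsets of $\A^n\setminus\{X^n\}$. The equivalence holds because each pair $(\C_n,x^n)$ with $x^n\in\C_n$ has probability $1/\bigl(\binom{a^n}{M_n}M_n\bigr)$ under both descriptions. The channel output $R^n$ is generated from $X^n$ alone through $W_n$. Hence, conditional on $(X^n,R^n)$, the set $\C_n\setminus\{X^n\}$ is still a uniformly random $(M_n-1)$-subset of $\A^n\setminus\{X^n\}$.

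\textbf{Decoding event.} GRAND without abandonment queries $h_n(R^n)\oplusA\pi_n(R^n)(j)$ for $j=1,2,\ldots$. The transmitted codeword is queried at step $D_n=G_{\pi_n}(E^n|R^n)$, since $X^n=h_n(R^n)\oplusA E^n$, and it always lies in $\C_n$. Therefore the decoder returns $X^n$ if and only if none of the $D_n-1$ words queried earlier belongs to $\C_n$.

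These earlier words are distinct and differ from $X^n$ because $\pi_n(R^n)$ is a bijection. They are determined by $R^n$ alone, since $h_n$ and $\pi_n$ are codebook-independent. So the success event is that the uniform $(M_n-1)$-subset of $\A^n\setminus\{X^n\}$ avoids a fixed set of size $D_n-1$. Its conditional probability is the hypergeometric ratio
\[
\binom{a^n-1-(D_n-1)}{M_n-1}\Big/\binom{a^n-1}{M_n-1}=\binom{a^n-D_n}{M_n-1}\Big/\binom{a^n-1}{M_n-1},
\]
which is zero when $a^n-D_n<M_n-1$. This gives \eqref{eq:hypercond}.

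\textbf{Averaging.} Taking expectations gives $\E_{\C_n}[\Pe(\C_n,W;\pi_n)]=\E\bigl[1-\binom{a^n-D_n}{M_n-1}/\binom{a^n-1}{M_n-1}\bigr]$. By Remark~\ref{rem:uniformbenchmark}, the ensemble-averaged law of $D_n$ equals the law of $D_{U,n}$ under the auxiliary experiment, which yields \eqref{eq:hyper}.

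The main care point is the measure-theoretic conditioning on a general standard Borel $R^n$. I would handle it by noting that $(X^n,R^n)$ and $\C_n\setminus\{X^n\}$ are conditionally independent given $X^n$. The conditional probability can then be computed by a direct Fubini argument, using that $D_n$ is a measurable function of $(X^n,R^n)$.
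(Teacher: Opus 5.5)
Your proposal is correct and follows essentially the same argument as the paper. You condition on $(X^n,R^n)$, use that the other $M_n-1$ codewords form a uniform $(M_n-1)$-subset of $\A^n\setminus\{X^n\}$ independent of $R^n$, identify success with this subset avoiding the $D_n-1$ earlier queries, and average via Remark~\ref{rem:uniformbenchmark}; the only differences are that you count the avoided candidate words directly instead of mapping them to corrections via $c^n\mapsto c^n\ominusA h_n(r^n)$, and that you verify the sampling equivalence explicitly.
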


\begin{proof}
Conditional on $X^n=x^n$, the other $M_n-1$ codewords form a uniformly distributed $(M_n-1)$-subset of $\A^n\setminus\{x^n\}$. Since $R^n$ is conditionally independent of $\C_n$ given $X^n$, the same conditional distribution holds given $(X^n,R^n)=(x^n,r^n)$.
Let $e^n=x^n\ominusA h_n(r^n)$. The map $c^n\mapsto c^n\ominusA h_n(r^n)$ is a bijection from
$\A^n\setminus\{x^n\}$ onto $\A^n\setminus\{e^n\}$. Hence the corrections corresponding to the other codewords form a uniformly distributed $(M_n-1)$-subset of the $a^n-1$ corrections other than $e^n$.
If $D_n=d$, exactly $a^n-d$ corrections occur after $e^n$ in the query order. The decoder returns $X^n$ if and only if all $M_n-1$ corrections corresponding to the other codewords occur after $e^n$. This proves \eqref{eq:hypercond}. Averaging over $(X^n,R^n)$ and using Remark~\ref{rem:uniformbenchmark} gives \eqref{eq:hyper}.
\end{proof}

Conditionally on $D_n=d$, the number of competing-codeword corrections among the $d-1$ preceding corrections has a hypergeometric distribution with population size $a^n-1$, $d-1$ marked elements, and sample size $M_n-1$. Hence \eqref{eq:hypercond} is the corresponding zero collision probability, and \eqref{eq:hyper} follows by averaging its complement.

\begin{corollary}[Rank union bound]
\label{cor:rankunion}
Under the uniform-subset ensemble,
\begin{equation}
\E_{\C_n} \left[\Pe(\C_n,W;\pi_n)\right] \le \E_W\!\left[ \min\!\left\{ 1, \frac{(M_n-1)(D_{U,n}-1)}{a^n-1} \right\} \right].
\label{eq:rankunion}
\end{equation}
If $\pi_n=\pi_n^L$ is induced by a conditionally Kraft-admissible codelength $L_n$, then
\begin{equation}
\E_{\C_n}\!\left[\Pe(\C_n,W;\pi_n^L)\right] \le \E_W\!\left[ \min\!\left\{
1, \frac{(M_n-1) \left(a^{L_n(E_U^n|R^n)}-1\right)}{a^n-1} \right\} \right].
\label{eq:lengthunion}
\end{equation}
\end{corollary}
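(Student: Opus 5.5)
The plan is to bound the hypergeometric collision probability in Theorem~\ref{thm:hyper} by a union bound, pointwise in the rank, and then average. Throughout, $d = D_{U,n}$ denotes the realized rank.

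\textbf{Step 1: pointwise bound on the collision probability.}
Condition on $D_n=d$. By \eqref{eq:hypercond}, the error probability equals the probability that at least one of the $M_n-1$ competing corrections lands among the $d-1$ corrections queried before the true one. These competing corrections form a uniform $(M_n-1)$-subset of the $a^n-1$ corrections other than the true correction. So each of the $d-1$ preceding positions is occupied by a competitor with probability $(M_n-1)/(a^n-1)$.

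Equivalently, the expected hypergeometric count is $(d-1)(M_n-1)/(a^n-1)$. The union bound, or Markov's inequality applied to this count, then gives
\[
1-\frac{\binom{a^n-d}{M_n-1}}{\binom{a^n-1}{M_n-1}}\le \frac{(M_n-1)(d-1)}{a^n-1}.
\]
Since the left side is a probability, it is also at most $1$. Hence it is bounded by the minimum of the two quantities.

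\textbf{Step 2: averaging.}
Substitute $d=D_{U,n}$ and take expectations using \eqref{eq:hyper}. This yields \eqref{eq:rankunion}.

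\textbf{Step 3: the Kraft-induced order.}
For $\pi_n=\pi_n^L$, Lemma~\ref{lem:kraftrank} gives $D_{U,n}\le a^{L_n(E_U^n|R^n)}$ pointwise. The map $d\mapsto\min\{1,(M_n-1)(d-1)/(a^n-1)\}$ is nondecreasing, so substituting this bound inside the expectation gives \eqref{eq:lengthunion}. When $L_n=+\infty$, both sides are trivially $\le 1$, so the extended-real case causes no difficulty.

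\textbf{Main obstacle.}
There is no real obstacle. The only points to check are conventions. When $d-1>a^n-M_n$, so that $\binom{a^n-d}{M_n-1}=0$, the min-with-one handles it. The cases $M_n=1$ and $d=1$ give zero on both sides.
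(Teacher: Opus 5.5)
Your proposal is correct and matches the paper's proof: you bound the collision probability by a union bound over the $d-1$ preceding corrections, each a competitor with probability $(M_n-1)/(a^n-1)$, cap it at one, and average via Theorem~\ref{thm:hyper} and Remark~\ref{rem:uniformbenchmark}. For the Kraft-induced order, you likewise combine Lemma~\ref{lem:kraftrank} with the fact that the integrand is nondecreasing in the rank.
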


\begin{proof}
For a realized correction of rank $d$, an error occurs if at least one of the $M_n-1$ other-codeword corrections lies among the $d-1$ preceding corrections. Conditional on the transmitted word and observation, each of these $d-1$ corrections belongs to the uniformly distributed $(M_n-1)$-subset of the remaining $a^n-1$ corrections with probability $(M_n-1)/(a^n-1)$. The union bound, followed by Remark~\ref{rem:uniformbenchmark}, gives \eqref{eq:rankunion}.
If $\pi_n=\pi_n^L$, Lemma~\ref{lem:kraftrank} gives $D_{U,n} \le a^{L_n(E_U^n|R^n)}$. Since the function inside the minimum in \eqref{eq:rankunion} is nondecreasing in the rank, \eqref{eq:lengthunion} follows.
\end{proof}

\begin{theorem}[First-order achievability for a fixed channel]
\label{thm:ach}
Fix $W$ satisfying Assumption~\ref{assump:aep}, and let $\{L_n\}$ be a sequence of conditionally Kraft-admissible codelengths with induced orders $\{\pi_n^L\}$. Suppose
\begin{equation}
\frac{1}{n}L_n(E_U^n|R^n) \xrightarrow{P} h(W)
\quad\text{under }\Prb_W.
\label{eq:lengthh}
\end{equation}
For a fixed $R_{\rm c}\in[0,1]$, let $M_n=\left\lfloor a^{nR_{\rm c}}\right\rfloor$. If $R_{\rm c}<1-h(W)$, then
\begin{equation}
\E_{\C_n} \left[ \Pe(\C_n,W;\pi_n^L) \right] \longrightarrow 0.
\end{equation}
\end{theorem}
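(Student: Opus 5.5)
The plan is to start from the length-based union bound \eqref{eq:lengthunion} in Corollary~\ref{cor:rankunion} and split the expectation according to whether the normalized codelength is close to $h(W)$. Since $R_{\rm c}<1-h(W)$, there is a $\delta>0$ with $R_{\rm c}+h(W)+2\delta<1$. Define the good event
\[
\mathcal G_n=\left\{ L_n(E_U^n|R^n)\le n(h(W)+\delta) \right\}.
\]
By hypothesis \eqref{eq:lengthh}, $\Prb_W(\mathcal G_n^c)\to0$.

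On $\mathcal G_n$, the expression inside the minimum in \eqref{eq:lengthunion} is bounded using $M_n-1\le a^{nR_{\rm c}}$ and $a^{L_n}-1\le a^{n(h(W)+\delta)}$. This gives
\[
\frac{(M_n-1)\left(a^{L_n}-1\right)}{a^n-1}\le \frac{a^{n(R_{\rm c}+h(W)+\delta)}}{a^n-1}.
\]
The right-hand side is at most $2a^{-n\delta}$ for $n$ large, because $a^n-1\ge a^n/2$ when $a\ge2$ and $n\ge1$. On $\mathcal G_n^c$, the minimum is bounded by $1$.

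Combining the two pieces gives
\[
\E_{\C_n}\!\left[\Pe(\C_n,W;\pi_n^L)\right]\le 2a^{-n\delta}+\Prb_W(\mathcal G_n^c)\to0,
\]
which is the claim.

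The only delicate points are the edge cases. If $M_n=1$, which occurs when $R_{\rm c}=0$ and possibly for small $n$, the bound is trivially zero. If $L_n=+\infty$ on some event, that event lies in $\mathcal G_n^c$, and its probability still vanishes because \eqref{eq:lengthh} is convergence in probability of real-valued limits. Assumption~\ref{assump:aep} is used only to guarantee that $h(W)$ is well defined and lies in $[0,1]$, so that the condition $R_{\rm c}<1-h(W)$ is meaningful. The main step is the split into $\mathcal G_n$ and $\mathcal G_n^c$; the rest is direct. Theorem~\ref{thm:exactrank}'s posterior-universal assignments supply such $L_n=-\log Q_n$ via \eqref{eq:Qlengthconv}.
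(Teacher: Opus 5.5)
Your proof is correct and follows essentially the same route as the paper: you split \eqref{eq:lengthunion} on the event $\{L_n(E_U^n|R^n)\le n(h(W)+\delta)\}$, bound the complement by its probability, which vanishes by \eqref{eq:lengthh}, and bound the typical part by $a^{-n(1-R_{\rm c}-h(W)-\delta)}/(1-a^{-n})\to0$. The extra $2\delta$ margin and your remarks on $M_n=1$ and $L_n=+\infty$ are harmless refinements.
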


\begin{proof}
Choose $\delta>0$ such that $R_{\rm c}+h(W)+\delta<1$, and define $\mathcal A_n = \left\{ L_n(E_U^n|R^n) \le n\bigl(h(W)+\delta\bigr) \right\}$. By \eqref{eq:lengthunion},
\begin{align}
\E_{\C_n} \left[ \Pe(\C_n,W;\pi_n^L) \right] &\le
\Prb_W(\mathcal A_n^c) + \frac{(M_n-1) \left(a^{n(h(W)+\delta)}-1\right)}{a^n-1}.
\end{align}
The first term converges to zero by \eqref{eq:lengthh}. Moreover,
\begin{align}
\frac{(M_n-1) \left(a^{n(h(W)+\delta)}-1\right)}{a^n-1}
&\le \frac{a^{nR_{\rm c}}a^{n(h(W)+\delta)}}{a^n-1} \nonumber \\
&= \frac{a^{-n(1-R_{\rm c}-h(W)-\delta)}}{1-a^{-n}}
\longrightarrow0.
\end{align}
\end{proof}

\begin{corollary}[Memoryless uniform-input information rate]
\label{cor:uniformI}
Consider the stationary memoryless setting of Section~\ref{sec:firstorder} with uniform input and a symbolwise reference decision, and fix $W$. Let $\{Q_n\}$ be posterior-universal at $W$, and let $\{\pi_n^Q\}$ denote the induced orders. For every fixed $0\le R_{\rm c} < I_{\mathrm U,W}(X;R) = 1-H_{\mathrm U,W}(X|R)$, set $M_n=\lfloor a^{nR_{\rm c}}\rfloor$. Under the corresponding uniform-subset ensemble,
\begin{equation}
\E_{\C_n}\!\left[\Pe(\C_n,W;\pi_n^Q)\right] \longrightarrow 0.
\end{equation}
Moreover, define $C(W) \coloneqq \sup_{P_X} I_{P_X,W}(X;R)$, where the supremum is over all pmfs on $\A$. Then $I_{\mathrm U,W}(X;R)\le C(W)$, with equality if and only if the uniform input attains the supremum.
\end{corollary}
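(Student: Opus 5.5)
The statement has two parts. The first is the vanishing of the ensemble-average error for rates below the uniform-input information rate. The second is the comparison $I_{\mathrm U,W}(X;R)\le C(W)$.

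For the first part, I will reduce to Theorem~\ref{thm:ach} with $L_n=-\log Q_n$. This codelength satisfies \eqref{eq:condkraft} with equality, and its induced order $\pi_n^L$ coincides with $\pi_n^Q$, since both use nonincreasing $Q_n$ with the same tie-breaking rule. In the stationary memoryless setting with a symbolwise reference decision, Assumption~\ref{assump:aep} holds with $h(W)=H_{\mathrm U,W}(E|R)=H_{\mathrm U,W}(X|R)$ by \eqref{eq:hwmemoryless}. The AEP itself follows from the weak law of large numbers applied to the i.i.d. terms $-\log p_W(E_{U,i}|R_i)$. These terms are nonnegative with mean $H_{\mathrm U,W}(E|R)\le 1$, so the mean is finite and the weak law applies.

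Posterior universality at $W$ then gives \eqref{eq:Qlengthconv}: the negative tail is controlled by Lemma~\ref{lem:regretnegativetail}, and the positive tail is controlled by \eqref{eq:postuniv} together with the AEP. This is exactly hypothesis \eqref{eq:lengthh} of Theorem~\ref{thm:ach}. The identity $1-H_{\mathrm U,W}(X|R)=I_{\mathrm U,W}(X;R)$ holds because $H(X)=1$ in base-$a$ units for uniform $X$. Hence $R_{\rm c}<I_{\mathrm U,W}(X;R)$ is the same condition as $R_{\rm c}<1-h(W)$, and Theorem~\ref{thm:ach} yields $\E_{\C_n}[\Pe]\to0$. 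The case $R_{\rm c}=0$, where $M_n=1$, is trivial and is in any case covered by that theorem.

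For the capacity comparison, the inequality $I_{\mathrm U,W}\le C(W)$ holds because the uniform pmf is one candidate in the supremum. For the equivalence: if equality holds, then the uniform pmf achieves the supremum value, so it attains the supremum. Conversely, if the uniform pmf attains the supremum, equality holds by definition. There is no real subtlety here. The only point to mention is that mutual information with a general output alphabet is defined through the densities $w_W$ relative to $\mu$. The supremum is over a compact simplex, but attainment is not needed for the stated equivalence.

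The only step needing care is the first one: making sure that the i.i.d. factorization of $p_{W,n}$ holds almost surely, so that $\jmath_{W,n}$ is a sum of i.i.d. terms. This is given in the setup of Section~\ref{sec:firstorder}, so I expect no genuine obstacle.
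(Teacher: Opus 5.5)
Your proposal is correct and follows essentially the same route as the paper's proof. You take $L_n=-\log Q_n$, identify $\pi_n^L=\pi_n^Q$, use \eqref{eq:hwmemoryless} for the AEP with $h(W)=H_{\mathrm U,W}(X|R)$ and \eqref{eq:Qlengthconv} for hypothesis \eqref{eq:lengthh}, apply Theorem~\ref{thm:ach} via $1-h(W)=I_{\mathrm U,W}(X;R)$, and read the capacity comparison off the definition of $C(W)$.
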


\begin{proof}
By \eqref{eq:hwmemoryless}, the memoryless specialization satisfies
Assumption~\ref{assump:aep} with $h(W)=H_{\mathrm U,W}(X|R)$. Posterior universality and \eqref{eq:Qlengthconv} therefore imply
\begin{equation}
\frac{1}{n}\left[-\log Q_n(E_U^n|R^n)\right]
\xrightarrow{P} H_{\mathrm U,W}(X|R)
\quad\text{under }\Prb_W.
\end{equation}
With $L_n=-\log Q_n$, the sequence $\{L_n\}$ is conditionally Kraft-admissible, and, under the prescribed deterministic tie-breaking rule, $\pi_n^L=\pi_n^Q$. Since $R_{\rm c} < 1-H_{\mathrm U,W}(X|R) = 1-h(W)$, Theorem~\ref{thm:ach} gives the stated convergence. The inequality $I_{\mathrm U,W}(X;R)\le C(W)$ follows directly from the definition of $C(W)$, with equality precisely when the uniform input attains the supremum.
\end{proof}

\section{Matched Positive Rank Moments and Error Exponents}
\label{sec:matchedmoments}

Convergence in probability of the normalized logarithmic query rank is insufficient to determine the fixed-rate ensemble-average error exponent, as shown by Example~\ref{ex:firstordernotexponent}. Under the uniform-subset ensemble, the exact finite-block rank--collision identity makes the error probability depend on the distribution of the realized rank, including its right tail. We therefore study positive rank moments. For the matched nonincreasing-posterior order, finite-alphabet guessing inequalities determine the limiting positive rank-moment exponent. Under the differentiability and nondegeneracy conditions specified below, exposed-point right-tail bounds yield the matched uniform-subset error exponent. No complete rank large-deviation principle is assumed.

For a deterministic, measurable, codebook-independent order sequence $\boldsymbol\pi=\{\pi_n\}$ and channel $W$, define
\begin{equation}
\overline P_{e,\boldsymbol\pi}^{(n)}(W) \coloneqq \E_{\C_n}\!\left[
\Pe(\C_n,W;\pi_n) \right]
\label{eq:ensembleerrornotation}
\end{equation}
as the uniform-subset ensemble-average block error probability. For a fixed normalized code rate $R_{\rm c}\in[0,1]$ and a code-size sequence $\{M_n\}$ satisfying $\log_a M_n=nR_{\rm c}+o(n)$, define, with $-\log_a0=+\infty$,
\begin{align}
\underline E_{\boldsymbol\pi}(R_{\rm c};W) & \coloneqq 
\liminf_{n\to\infty} -\frac{1}{n} \log_a \overline P_{e,\boldsymbol\pi}^{(n)}(W), \notag\\
\overline E_{\boldsymbol\pi}(R_{\rm c};W) &  \coloneqq 
\limsup_{n\to\infty} -\frac{1}{n} \log_a \overline P_{e,\boldsymbol\pi}^{(n)}(W).
\label{eq:errorexponentdefinitions}
\end{align}
When these quantities are equal, denote their common value by $E_{\boldsymbol\pi}(R_{\rm c};W)$.

\subsection{Matched positive rank moments}
\label{subsec:matchedmoments}

We use the following dominated standard-Borel extension of Arimoto conditional R\'enyi entropy \cite{arimoto1977}.

\begin{definition}[Arimoto conditional R\'enyi entropy]
\label{def:arimotogeneral}
Let $X$ take values in a finite alphabet $\mathcal X$, and let $S$ take values in a standard Borel space. Suppose $P_{XS}$ is dominated by counting measure on $\mathcal X$ times a $\sigma$-finite measure $\mu$. For each $x\in \mathcal X$, define
\begin{equation}
p_x(s) \coloneqq \frac{\mathrm d P_{X,S}(x,\cdot)}{\mathrm d \mu}(s).
\end{equation}
For $\alpha\in(0,1)$, define
\begin{equation}
H_\alpha^{\rm A}(X|S) \coloneqq \frac{\alpha}{1-\alpha} \log_a \int
\left( \sum_{x\in \mathcal X}p_x(s)^\alpha \right)^{1/\alpha} \mu(\mathrm ds).
\label{eq:arimotogeneral}
\end{equation}
This quantity is independent of the choice of dominating measure.
\end{definition}

\begin{lemma}[Finite-block conditional guessing bounds]
\label{lem:continuousArikan}
Let $X$ take values in a finite alphabet $\mathcal X$ with $|\mathcal X|=M$, and let $S$ take values in a standard Borel space. Suppose the joint distribution is dominated as in Definition~\ref{def:arimotogeneral}. Let $G^{\rm m}(x|s)$ denote the rank of $x$ under a nonincreasing ordering of $P_{X|S}(\cdot|s)$, with the prescribed deterministic tie-breaking rule. For $\rho>0$, let $\alpha=\frac{1}{1+\rho}$ and define
\begin{equation*}
\mathcal J_\alpha \coloneqq \int \left( \sum_{x\in\mathcal X}p_x(s)^\alpha \right)^{1/\alpha}
\mu(\mathrm d s).
\end{equation*}
Then
\begin{equation}
(1+\ln M)^{-\rho}\mathcal J_\alpha \le \E \left[ \bigl(G^{\rm m}(X|S)\bigr)^\rho \right]
\le \mathcal J_\alpha.
\label{eq:continuousArikanSandwich}
\end{equation}
where $\ln$ denotes the natural logarithm.
\end{lemma}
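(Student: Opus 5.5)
The plan is to reduce to Arıkan's finite-alphabet argument pointwise in $s$ and then integrate against $\mu$. Fix $s$ with $q(s)\coloneqq\sum_x p_x(s)>0$; the complementary set is $P_S$-null and contributes zero to both sides. A version of the conditional pmf is $P_{X|S}(x|s)=p_x(s)/q(s)$, and the density of $P_S$ with respect to $\mu$ is $q$. Hence
\[
\E\bigl[(G^{\rm m}(X|S))^\rho\bigr]=\int\sum_x G^{\rm m}(x|s)^\rho\,p_x(s)\,\mu(\mathrm ds).
\]
Since $G^{\rm m}(\cdot|s)$ is the ordering of $P_{X|S}(\cdot|s)$, it is also a nonincreasing ordering of $x\mapsto p_x(s)$, with the same tie-breaking. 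It therefore suffices to prove, for each fixed $s$ and nonnegative weights $p_x=p_x(s)$,
\[
(1+\ln M)^{-\rho}\Bigl(\sum_x p_x^\alpha\Bigr)^{1/\alpha}\le\sum_x G(x)^\rho p_x\le\Bigl(\sum_x p_x^\alpha\Bigr)^{1/\alpha},
\]
where $G$ is a nonincreasing ordering of $(p_x)$. Integrating gives the claim. Measurability of the integrand in $s$ follows because the tie-broken ordering is a measurable function of the finitely many measurable densities.

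\textbf{Upper bound.} Because the order is nonincreasing, $G(x)\le\sum_{x'}(p_{x'}/p_x)^\alpha$ for every $x$ with $p_x>0$: each $x'$ ranked at or before $x$ has $p_{x'}\ge p_x$, so it contributes at least $1$. Raising to the power $\rho$ and multiplying by $p_x$ gives
\[
G(x)^\rho p_x\le p_x^{1-\alpha\rho}\Bigl(\sum_{x'}p_{x'}^\alpha\Bigr)^\rho.
\]
Since $1-\alpha\rho=\alpha$, summing over $x$ yields $(\sum_x p_x^\alpha)^{1+\rho}=(\sum_x p_x^\alpha)^{1/\alpha}$.

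\textbf{Lower bound.} We use Hölder's inequality with exponents $1+\rho$ and $(1+\rho)/\rho$:
\[
\sum_x p_x^\alpha=\sum_x\bigl(G(x)^\rho p_x\bigr)^{\alpha}\,G(x)^{-\rho\alpha}\le\Bigl(\sum_x G(x)^\rho p_x\Bigr)^{\alpha}\Bigl(\sum_x G(x)^{-1}\Bigr)^{\rho\alpha}.
\]
Because $G$ is a bijection onto $\{1,\ldots,M\}$, the last sum is $\mathsf H_M\le 1+\ln M$. Raising both sides to the power $1/\alpha$ gives
\[
\Bigl(\sum_x p_x^\alpha\Bigr)^{1/\alpha}\le(1+\ln M)^{\rho}\sum_x G(x)^\rho p_x.
\]
This lower bound holds for any ordering, not only the matched one.

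\textbf{Obstacle.} I expect no step to be a real obstacle. The only care needed is in the measure-theoretic reduction: choosing the conditional pmf version $p_x/q$, handling the set where $q=0$, and checking measurability of $s\mapsto G^{\rm m}(x|s)$. Both sides are nonnegative integrals, so Tonelli's theorem justifies interchanging the sum and the integral even when $\mathcal J_\alpha=\infty$. That case cannot arise here anyway, because $(\sum_x p_x^\alpha)^{1/\alpha}\le M^{\rho}\sum_x p_x$, whose integral is finite.
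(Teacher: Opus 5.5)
Your proposal is correct and follows the paper's route: apply Ar{\i}kan's finite-alphabet guessing bounds pointwise in $s$, integrate against $\mu$, and note that the set $\{\sum_x p_x(s)=0\}$ contributes nothing. The only differences are cosmetic. You reprove Ar{\i}kan's two inequalities inline (the counting bound $G(x)\le\sum_{x'}(p_{x'}/p_x)^\alpha$, and H\"older with $\mathsf H_M\le 1+\ln M$) and apply them directly to the unnormalized densities $p_x(s)$. The paper instead cites Ar{\i}kan for the normalized conditional pmf and then multiplies by $p_S(s)$, using the same degree-one homogeneity.
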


\begin{proof}
Define $p_S(s)=\sum_{x\in\mathcal X}p_x(s)$. For $\mu$-almost every $s$ with $p_S(s)>0$, define
\begin{equation}
P_{X|S}(x|s)=\frac{p_x(s)}{p_S(s)}, \qquad x\in\mathcal X.
\end{equation}
On $\{s:p_S(s)=0\}$, define the conditional pmf and its induced order arbitrarily. For brevity, set
\begin{equation*}
J_\alpha(s) \coloneqq \left( \sum_{x\in\mathcal X} P_{X|S}(x|s)^\alpha \right)^{1/\alpha}.
\end{equation*}
Applying Arikan's finite-alphabet guessing inequalities \cite{arikan1996} to this pmf gives
\begin{equation}
(1+\ln M)^{-\rho}J_\alpha(s) \le \E \left[ \bigl(G^{\rm m}(X|S)\bigr)^\rho \,\middle|\, S=s
\right] \le J_\alpha(s).
\end{equation}
Multiplying by $p_S(s)$ and integrating with respect to $\mu$ gives \eqref{eq:continuousArikanSandwich}, since
\begin{align}
p_S(s) \left( \sum_{x\in\mathcal X} \left( \frac{p_x(s)}{p_S(s)} \right)^\alpha \right)^{1/\alpha}  =
\left( \sum_{x\in\mathcal X} p_x(s)^\alpha \right)^{1/\alpha}
\end{align}
whenever $p_S(s)>0$. On the set where $p_S(s)=0$, all $p_x(s)$ vanish $\mu$-almost everywhere, so this set contributes zero to both sides after integration.
\end{proof}

\begin{lemma}[Continuity of Arimoto conditional entropy at order one]
\label{lem:arimotoalphaone}
Under the hypotheses of Definition~\ref{def:arimotogeneral}, with
$|\mathcal X|=a<\infty$,
\begin{equation}
\lim_{\alpha\uparrow1}H_\alpha^{\rm A}(X|S)=H(X|S).
\label{eq:arimotoalphaone}
\end{equation}
Moreover, if $\alpha_\rho=(1+\rho)^{-1}$ and $\Lambda(\rho)=\rho H_{\alpha_\rho}^{\rm A}(X|S)$ for $\rho>0$, with $\Lambda(0)=0$, then $\Lambda'(0^+)=H(X|S)$.
\end{lemma}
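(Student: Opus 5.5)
The plan is to reduce both claims to the behaviour of a single one-variable function and then differentiate it at $\alpha=1$. Write $P_{X|S}(x|s)=p_x(s)/p_S(s)$ as in the proof of Lemma~\ref{lem:continuousArikan}. The pointwise identity used there gives
\begin{equation*}
F(\alpha)\coloneqq\int\Bigl(\sum_{x}p_x(s)^\alpha\Bigr)^{1/\alpha}\mu(\mathrm ds)
=\E\Bigl[\Bigl(\sum_x P_{X|S}(x|S)^\alpha\Bigr)^{1/\alpha}\Bigr],
\end{equation*}
with the expectation taken over $S\sim P_S$. Then
\begin{equation*}
H_\alpha^{\rm A}(X|S)=\frac{\alpha}{1-\alpha}\log F(\alpha),
\end{equation*}
and $F(1)=1$. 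The first claim is therefore the statement that $\log F(\alpha)/(1-\alpha)\to H(X|S)$ as $\alpha\uparrow1$. I will prove it by computing the one-sided derivative $F'(1^-)=-H(X|S)\ln a$ (in natural-log units) and then using $\log_a F(\alpha)\sim (F(\alpha)-1)/\ln a$.

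For each fixed $s$, set $g_s(\alpha)=\bigl(\sum_x P(x|s)^\alpha\bigr)^{1/\alpha}$. This is a finite sum of powers, so it is smooth on $(0,1]$ once the zero-probability symbols are dropped. A direct computation gives $g_s(1)=1$ and
\begin{equation*}
g_s'(1)=\sum_x P\ln P-\Bigl(\sum_x P\Bigr)\ln\Bigl(\sum_x P\Bigr)=\sum_x P(x|s)\ln P(x|s),
\end{equation*}
which equals $-H(X|S=s)\ln a$. The main obstacle is exchanging the limit with the integral in $s$. For this I will use a uniform bound on the difference quotient $(g_s(\alpha)-1)/(1-\alpha)$ over $\alpha\in[1/2,1)$ and over $s$.

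To get that bound, note that $g_s$ is the Rényi-type quantity $a^{\frac{1-\alpha}{\alpha}H_\alpha(P(\cdot|s))}$, where $H_\alpha$ is the Rényi entropy and satisfies $H_\alpha\le\log|\mathcal X|=1$. Hence
\begin{equation*}
1\le g_s(\alpha)\le a^{(1-\alpha)/\alpha}\le a^{2(1-\alpha)}
\end{equation*}
for $\alpha\ge1/2$. The difference quotient is therefore bounded by $\sup_{t\in(0,1/2]}(a^{2t}-1)/t<\infty$, uniformly in $s$. Dominated convergence then gives
\begin{equation*}
\frac{F(\alpha)-1}{1-\alpha}\to\E[-g_S'(1)]=H(X|S)\ln a .
\end{equation*}
The quotient converges to $-g_S'(1)$ because we differentiate in the direction of decreasing $\alpha$. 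This fixes the sign: the limit is $+H(X|S)\ln a$, which is positive, as it should be. Since $F(\alpha)\to1$, we have $\log_a F(\alpha)=(F(\alpha)-1)(1+o(1))/\ln a$, and combining this with $\alpha/(1-\alpha)\cdot(1-\alpha)\to1$ yields \eqref{eq:arimotoalphaone}.

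For the derivative claim, $\Lambda(\rho)/\rho=H_{\alpha_\rho}^{\rm A}(X|S)$ and $\alpha_\rho\uparrow1$ as $\rho\downarrow0$. Together with $\Lambda(0)=0$, the first part immediately gives
\begin{equation*}
\Lambda'(0^+)=\lim_{\rho\downarrow0}\frac{\Lambda(\rho)}{\rho}=H(X|S).
\end{equation*}
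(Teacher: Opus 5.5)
Your proof is correct and follows essentially the same route as the paper. Both arguments apply dominated convergence to the per-$s$ difference quotient of $\bigl(\sum_x P(x|s)^\alpha\bigr)^{1/\alpha}$, dominate it uniformly using the R\'enyi bound $H_\alpha\le 1$, and finish with the linearization $\ln F\sim F-1$. The differences are cosmetic: you work in $\alpha$, get the pointwise limit by differentiating $g_s$ at $\alpha=1$, and prove the $\alpha\uparrow1$ limit before deducing $\Lambda'(0^+)$, whereas the paper works in $\rho$, uses $H_{\alpha_\rho}(q)\to H(q)$, and derives $\Lambda'(0^+)$ first.
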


\begin{proof}
Let $p_S(s)=\sum_x p_x(s)$. For $p_S(s)>0$, define $q_x(s)=p_x(s)/p_S(s)$, and define $q(\cdot|s)$ arbitrarily on $\{p_S=0\}$. For $\rho>0$, let $\alpha_\rho=(1+\rho)^{-1}$ and $A(\rho) \coloneqq \int \left(\sum_x p_x(s)^{\alpha_\rho}\right)^{1/\alpha_\rho} \mu(\mathrm ds)$. Then $A(\rho) = \int p_S(s)\, a^{\rho H_{\alpha_\rho}(q(\cdot|s))} \mu(\mathrm ds)$, where $H_\alpha(q)$ denotes the finite-alphabet R\'enyi entropy in base $a$. For $p_S(s)>0$,
\[
H_{\alpha_\rho}(q(\cdot|s)) \longrightarrow H(q(\cdot|s))
\qquad \text{as }\rho\downarrow0,
\]
and $0\le H_{\alpha_\rho}(q(\cdot|s))\le1$. Hence
\[
\frac{a^{\rho H_{\alpha_\rho}(q(\cdot|s))}-1}{\rho} \longrightarrow (\ln a)H(q(\cdot|s)).
\]
For any fixed $\rho_0>0$ and $0<\rho\le\rho_0$,
\[
0 \le \frac{a^{\rho H_{\alpha_\rho}(q(\cdot|s))}-1}{\rho}
\le \frac{a^\rho-1}{\rho} \le (\ln a)a^{\rho_0}.
\]
Since $p_S(s)\mu(\mathrm ds)$ is a probability measure, dominated convergence yields
\[
\lim_{\rho\downarrow0} \frac{A(\rho)-1}{\rho}
= (\ln a) \int p_S(s)H(q(\cdot|s))\mu(\mathrm ds) = (\ln a)H(X|S).
\]
Extend $A$ by $A(0)=1$. Since $A(\rho)\to1$ and $\ln A(\rho)\sim A(\rho)-1$ as $\rho\downarrow0$, $\Lambda(\rho)=\log_a A(\rho)$ for $\rho>0$, and $\Lambda(0)=0$, it follows that
\[
\Lambda'(0^+) = \frac{1}{\ln a} \lim_{\rho\downarrow0}\frac{A(\rho)-1}{\rho} = H(X|S).
\]
Finally, $H_{\alpha_\rho}^{\rm A}(X|S) = \frac{\Lambda(\rho)}{\rho}$, so \eqref{eq:arimotoalphaone} follows as $\rho\downarrow0$.
\end{proof}

\begin{theorem}[Matched positive rank-moment exponent]
\label{thm:continuousmatchedguesswork}
Let $(X_i,S_i)_{i\ge1}$ be i.i.d., where $X_i$ takes values in the finite alphabet $\A$ and $S_i$ takes values in a standard Borel space. Suppose the one-letter joint distribution is dominated as in Definition~\ref{def:arimotogeneral}, with $\mathcal X=\A$. Let $G_n^{\rm m}(\cdot|S^n)$ denote the rank induced by a nonincreasing ordering of $P_{X^n|S^n}(\cdot|S^n)$ with the prescribed deterministic tie-breaking rule, and define $G_n^{\rm m} = G_n^{\rm m}(X^n|S^n)$. Then, for every $\rho>0$, the limit
\begin{equation}
\Lambda_{\rm m}(\rho) \coloneqq \lim_{n\to\infty}
\frac{1}{n}\log_a \E\!\left[(G_n^{\rm m})^\rho\right]
\end{equation}
exists and satisfies
\begin{equation}
\Lambda_{\rm m}(\rho) = \rho H_{1/(1+\rho)}^{\rm A}(X|S).
\label{eq:continuousmatchedmoment}
\end{equation}
In particular,
\begin{equation}
\lim_{n\to\infty} \frac{1}{n}\log_a\E[G_n^{\rm m}] = H_{1/2}^{\rm A}(X|S).
\end{equation}
\end{theorem}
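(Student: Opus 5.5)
The plan is to apply Lemma~\ref{lem:continuousArikan} at blocklength $n$ to the pair $(X^n,S^n)$ and then use tensorization of the integral $\mathcal J_\alpha$. The block variable $X^n$ takes values in the finite alphabet $\A^n$ with $M=a^n$. The block observation $S^n$ lives in the standard Borel product space. The joint law is dominated by counting measure times $\mu^{\otimes n}$, with density $p_{x^n}(s^n)=\prod_{i=1}^n p_{x_i}(s_i)$ by the i.i.d.\ assumption. Hence Lemma~\ref{lem:continuousArikan}, with $\alpha=1/(1+\rho)$, gives
\begin{equation*}
(1+n\ln a)^{-\rho}\mathcal J_\alpha^{(n)} \le \E\!\left[(G_n^{\rm m})^\rho\right] \le \mathcal J_\alpha^{(n)},
\end{equation*}
where $\mathcal J_\alpha^{(n)}=\int\bigl(\sum_{x^n}p_{x^n}(s^n)^\alpha\bigr)^{1/\alpha}\mu^{\otimes n}(\mathrm ds^n)$.

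The key step is the product identity $\mathcal J_\alpha^{(n)}=(\mathcal J_\alpha^{(1)})^n$. For fixed $s^n$, the inner sum factorizes:
\begin{equation*}
\sum_{x^n}\prod_i p_{x_i}(s_i)^\alpha=\prod_i\sum_{x}p_x(s_i)^\alpha .
\end{equation*}
Raising to the power $1/\alpha$ preserves the product. The integrand is a nonnegative product of functions of separate coordinates, so Tonelli's theorem on the $\sigma$-finite product measure turns the integral into the product of the one-letter integrals.

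Taking $\frac1n\log_a$ of the sandwich then gives
\begin{equation*}
\log_a\mathcal J_\alpha^{(1)}-\frac{\rho}{n}\log_a(1+n\ln a)\le\frac1n\log_a\E\!\left[(G_n^{\rm m})^\rho\right]\le\log_a\mathcal J_\alpha^{(1)}.
\end{equation*}
The correction term is $\mathcal O(\log n/n)\to0$, so the limit exists and equals $\log_a\mathcal J_\alpha^{(1)}$.

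It remains to match this with \eqref{eq:arimotogeneral}. With $\alpha=1/(1+\rho)$ we have $\alpha/(1-\alpha)=1/\rho$. Therefore $H_\alpha^{\rm A}(X|S)=\rho^{-1}\log_a\mathcal J_\alpha^{(1)}$, which gives $\Lambda_{\rm m}(\rho)=\rho H_{1/(1+\rho)}^{\rm A}(X|S)$. Setting $\rho=1$ yields the $H_{1/2}^{\rm A}$ statement.

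The only point needing care is that the conditional pmf used to define the matched order in the lemma agrees with the nonincreasing ordering of $P_{X^n|S^n}$ with the prescribed tie-breaking. This holds because $P_{X^n|S^n}(\cdot|s^n)=p_{x^n}(s^n)/p_{S^n}(s^n)$ wherever $p_{S^n}(s^n)>0$. The null set where $p_{S^n}(s^n)=0$ contributes nothing to either side, so the choice of version there does not matter. Finiteness of $\mathcal J_\alpha^{(1)}$ also needs checking. It holds because $\mathcal J_\alpha^{(1)}\le a^{\rho}$, which follows from $J_\alpha(s)\le a^{(1-\alpha)/\alpha}$ integrated against $p_S$. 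I expect no real obstacle beyond this bookkeeping; the tensorization step is the heart of the argument.
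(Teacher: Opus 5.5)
Your proposal is correct and follows essentially the same route as the paper: apply Lemma~\ref{lem:continuousArikan} to $(X^n,S^n)$ with alphabet size $a^n$, use the product factorization $\mathcal J_\alpha^{(n)}=(\mathcal J_\alpha^{(1)})^n$ (your Tonelli step), and squeeze using $\frac{1}{n}\log_a(1+n\ln a)\to0$. Your finiteness bound $\mathcal J_\alpha^{(1)}\le a^{\rho}$ also matches the paper's, and your remarks on product-space domination and the choice of conditional-pmf version are slightly more explicit bookkeeping than the paper gives.
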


\begin{proof}
Fix $\rho>0$ and let $\alpha=(1+\rho)^{-1}$. Since $|\A|=a$ and $\alpha\in(0,1)$,
\begin{equation}
 \left(
 \sum_{x\in\A}p_x(s)^\alpha
 \right)^{1/\alpha}
 \le
 a^{1/\alpha-1}
 \sum_{x\in\A}p_x(s).
\end{equation}
Hence the Arimoto integral is finite and is at most $a^\rho$.
Apply Lemma~\ref{lem:continuousArikan} to $(X^n,S^n)$. The alphabet of $X^n$ has cardinality $a^n$. Under the i.i.d. product distribution, the Arimoto integral factorizes, equivalently $H_\alpha^{\rm A}(X^n|S^n) = nH_\alpha^{\rm A}(X|S)$. Therefore,
\begin{align}
-\frac{\rho}{n}\log_a(1+n\ln a) +\rho H_\alpha^{\rm A}(X|S) \le
\frac{1}{n}\log_a \E \left[(G_n^{\rm m})^\rho\right] \le
\rho H_\alpha^{\rm A}(X|S).
\end{align}
Since $\frac{1}{n}\log_a(1+n\ln a)\longrightarrow0$, the squeeze theorem gives \eqref{eq:continuousmatchedmoment}.
\end{proof}

\begin{remark}[Classical conditional-guesswork relation]
The relation between positive moments of matched guesswork and Arimoto conditional R\'enyi entropy is classical \cite{arikan1996}. Theorem~\ref{thm:continuousmatchedguesswork} gives the corresponding form for dominated joint distributions with standard-Borel side information.
\end{remark}

\begin{proposition}[Matched rank-moment spectrum and Gallager's $E_0$ function]
\label{prop:gallageridentity}
Let $W$ be a one-letter channel with input alphabet $\A$, $|\A|=a$, and standard-Borel output $S$. Suppose $W(\mathrm ds|x)=w(s|x)\mu(\mathrm ds)$ for a $\sigma$-finite measure $\mu$, and let $X\sim\operatorname{Unif}(\A)$. Let $(X_i,S_i)_{i\ge1}$ be i.i.d. copies of this one-letter law, and let $\Lambda_{\rm m}$ denote the corresponding matched rank-moment spectrum in Theorem~\ref{thm:continuousmatchedguesswork}, extended by $\Lambda_{\rm m}(0)=0$. For $\rho\ge0$, let $\alpha=(1+\rho)^{-1}$ and define Gallager's uniform-input $E_0$ function \cite{gallager1968}
\begin{equation}
E_0^{\rm U}(\rho) = -\log_a \int \left[ \frac{1}{a} \sum_{x\in\A}w(s|x)^\alpha \right]^{1/\alpha} \mu(\mathrm d s).
\label{eq:gallagerE0uniform}
\end{equation}
Then
\begin{equation}
\Lambda_{\rm m}(\rho) = \rho-E_0^{\rm U}(\rho), \qquad \rho\ge0.
\label{eq:gallageridentity}
\end{equation}
Consequently, for every $R_{\rm c}\in[0,1]$,
\begin{equation}
\sup_{0\le\rho\le1} \left\{ \rho(1-R_{\rm c})-\Lambda_{\rm m}(\rho) \right\} =
\sup_{0\le\rho\le1} \left\{ E_0^{\rm U}(\rho)-\rho R_{\rm c} \right\}.
\label{eq:gallagervariational}
\end{equation}
\end{proposition}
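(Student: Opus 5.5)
The identity \eqref{eq:gallageridentity} comes from computing the Arimoto integral for the uniform-input joint law explicitly; the variational identity \eqref{eq:gallagervariational} is then algebra.

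\textbf{Step 1: the joint density.} Under $X\sim\operatorname{Unif}(\A)$ the joint law of $(X,S)$ is dominated by counting measure times $\mu$, with density
\[
p_x(s)=\tfrac1a\,w(s|x).
\]
So Definition~\ref{def:arimotogeneral} and Theorem~\ref{thm:continuousmatchedguesswork} apply with $\mathcal X=\A$.

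\textbf{Step 2: the case $\rho>0$.} Fix $\rho>0$ and set $\alpha=(1+\rho)^{-1}$, so that $\alpha/(1-\alpha)=1/\rho$. Theorem~\ref{thm:continuousmatchedguesswork} gives
\[
\Lambda_{\rm m}(\rho)=\rho H_\alpha^{\rm A}(X|S)=\log_a\int\Bigl(\sum_x p_x(s)^\alpha\Bigr)^{1/\alpha}\mu(\mathrm ds).
\]
Substituting $p_x=w(\cdot|x)/a$, write
\[
\sum_x p_x^\alpha = a^{1-\alpha}\cdot\tfrac1a\sum_x w(\cdot|x)^\alpha .
\]
Raising to the power $1/\alpha$ pulls out the constant factor $a^{(1-\alpha)/\alpha}=a^{\rho}$. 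The remaining integral is $a^{-E_0^{\rm U}(\rho)}$ by \eqref{eq:gallagerE0uniform}, so $\Lambda_{\rm m}(\rho)=\rho-E_0^{\rm U}(\rho)$.

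\textbf{Step 3: the case $\rho=0$.} Here $\alpha=1$, and the $E_0^{\rm U}$ integral is $\int \frac1a\sum_x w(s|x)\,\mu(\mathrm ds)=1$. Hence $E_0^{\rm U}(0)=0$, which matches the convention $\Lambda_{\rm m}(0)=0$.

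\textbf{Step 4: finiteness.} The $E_0^{\rm U}$ integral is finite and positive for every $\rho\ge0$. Finiteness follows from the bound already used in the proof of Theorem~\ref{thm:continuousmatchedguesswork}, namely that the Arimoto integral is at most $a^\rho$. Positivity holds because the integrand is not $\mu$-a.e.\ zero, as the $w(\cdot|x)$ are probability densities. So $E_0^{\rm U}$ is real-valued and the logarithms are legitimate.

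\textbf{Step 5: the variational identity.} Substituting \eqref{eq:gallageridentity} gives, for each $\rho\in[0,1]$,
\[
\rho(1-R_{\rm c})-\Lambda_{\rm m}(\rho)=\rho-\rho R_{\rm c}-\rho+E_0^{\rm U}(\rho)=E_0^{\rm U}(\rho)-\rho R_{\rm c}.
\]
Taking the supremum over $\rho\in[0,1]$ on both sides yields \eqref{eq:gallagervariational}.

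The argument is essentially a direct computation. The only point needing care is the exponent bookkeeping in Step 2: one must check $(1-\alpha)/\alpha=\rho$ and confirm that the prefactor $\alpha/(1-\alpha)$ of $H_\alpha^{\rm A}$ turns $\rho H_\alpha^{\rm A}$ exactly into $\log_a$ of the Arimoto integral.
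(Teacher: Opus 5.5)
Your proposal is correct and follows the paper's proof essentially step for step. Like the paper, you substitute $p_x(s)=w(s|x)/a$, pull out the factor $a^{1/\alpha-1}=a^{\rho}$, treat $\rho=0$ separately, and substitute the identity into the variational expression. Your Step~4 finiteness and positivity check is a minor addition that the paper leaves implicit.
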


\begin{proof}
For $\rho>0$, uniform input gives $p_x(s)=\frac{1}{a}w(s|x)$. Hence
\begin{align}
\int \left( \sum_{x\in\A}p_x(s)^\alpha \right)^{1/\alpha} \mu(\mathrm d s)
&= a^{-1} \int \left( \sum_{x\in\A}w(s|x)^\alpha \right)^{1/\alpha} \mu(\mathrm d s) \notag\\
&= a^{1/\alpha-1}  \int \left[ \frac{1}{a} \sum_{x\in\A}w(s|x)^\alpha \right]^{1/\alpha}  \mu(\mathrm d s).
\end{align}
Since $1/\alpha-1=\rho$, Theorem~\ref{thm:continuousmatchedguesswork} and \eqref{eq:gallagerE0uniform} give $\Lambda_{\rm m}(\rho) = \rho-E_0^{\rm U}(\rho)$. For $\rho=0$, both sides are zero. Substitution into the first variational expression gives \eqref{eq:gallagervariational}.
\end{proof}

Let $h_n(S^n)$ denote the deterministic, measurable, codebook-independent reference decision. For every $s^n$ at which the uniform-input posterior is defined, the map $x^n\mapsto x^n\ominusA h_n(s^n)$ is a bijection on $\A^n$. Hence the matched correction posterior is obtained from $P_{X^n|S^n}(\cdot|s^n)$ by relabeling its outcomes. It follows that, conditional on $S^n=s^n$, the matched correction rank has the same distribution as $G_n^{\rm m}(X^n|s^n)$. This distribution is unchanged by deterministic tie-breaking within posterior-tie classes.

\begin{theorem}[Matched exposed-point right-tail deviations and uniform-subset error exponent]
\label{thm:continuousmatchedrightldp}
Under the hypotheses of Theorem~\ref{thm:continuousmatchedguesswork}, extend the matched rank-moment spectrum by $\Lambda_{\rm m}(0)=0$. Suppose that, for some $\epsilon>0$, $\Lambda_{\rm m}$ is continuously differentiable on $[0,1+\epsilon]$. Let $h\coloneqq H(X|S)$ and assume $0<h<1$. Define
\[
Z_n = \frac{1}{n}\log_a G_n^{\rm m}(X^n|S^n).
\]
Then $Z_n\xrightarrow{P}h$. For every $\rho\in(0,1]$, define $x_\rho=\Lambda_{\rm m}'(\rho)$, $I_\rho=\rho x_\rho-\Lambda_{\rm m}(\rho)$. Then $h\le x_\rho\le1$, $\rho\in(0,1]$, and
\begin{align}
\lim_{\delta\downarrow0} \liminf_{n\to\infty} \frac{1}{n}\log_a \Prb\{|Z_n-x_\rho|<\delta\} =
\lim_{\delta\downarrow0} \limsup_{n\to\infty} \frac{1}{n}\log_a
\Prb\{|Z_n-x_\rho|<\delta\}
= -I_\rho.
\label{eq:exposedrightlocalld}
\end{align}
Moreover, with $I_+(x) = \sup_{t\ge0}\{tx-\Lambda_{\rm m}(t)\}$, we have $I_+(x_\rho)=I_\rho$.
Suppose in addition that $(X,S)$ is induced by a memoryless channel $W$ with input alphabet $\A$, output density $w(s|x)$ with respect to $\mu$, and uniform input $X\sim\operatorname{Unif}(\A)$. Consider the uniform-subset ensemble with $\log_a M_n=nR_{\rm c}+o(n)$ for a fixed $0<R_{\rm c}<1$; then $M_n \geq 2$ for all sufficiently large $n$. For the matched correction-order sequence $\boldsymbol\pi^{\rm m}$, the ensemble-average error exponent exists and satisfies
\begin{align}
E_{\rm m}(R_{\rm c}) &\coloneqq E_{\boldsymbol\pi^{\rm m}}(R_{\rm c};W) \notag\\
&= \sup_{0\le\rho\le1} \left\{ \rho(1-R_{\rm c})-\Lambda_{\rm m}(\rho) \right\} \notag\\
&= \sup_{0\le\rho\le1} \left\{ E_0^{\rm U}(\rho)-\rho R_{\rm c} \right\} \notag\\
&= E_r^{\rm U}(R_{\rm c}),
\label{eq:matchedmomenterrortheorem}
\end{align}
where $E_r^{\rm U}(R_{\rm c}) \coloneqq \sup_{0\le\rho\le1} \left\{ E_0^{\rm U}(\rho)-\rho R_{\rm c} \right\}$. If $1-R_{\rm c}\le h$, then $E_{\rm m}(R_{\rm c})=0$.
\end{theorem}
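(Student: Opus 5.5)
The plan has three parts: the convergence $Z_n\to h$, the local large-deviation statement at exposed points via a G\"artner--Ellis-type tilting argument on moments, and the error exponent via the rank--collision identity of Theorem~\ref{thm:hyper}.

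\emph{Convergence and range of $x_\rho$.} For $Z_n\xrightarrow{P}h$, the i.i.d.\ pairs satisfy Assumption~\ref{assump:aep} with $h(W)=H(X|S)$. Since the matched order is induced by the posterior itself, it is posterior-universal with zero regret, so Theorem~\ref{thm:exactrank} applies. By Theorem~\ref{thm:continuousmatchedguesswork}, $\Lambda_{\rm m}(\rho)=\lim\frac1n\log_a\E[a^{n\rho Z_n}]$. This is a limit of convex functions, hence convex, so $\Lambda_{\rm m}'$ is nondecreasing. Lemma~\ref{lem:arimotoalphaone} gives $\Lambda_{\rm m}'(0)=h$, so $x_\rho\ge h$. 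Since $Z_n\le1$, $\Lambda_{\rm m}(t)-\Lambda_{\rm m}(\rho)\le t-\rho$, which gives $x_\rho\le1$.

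\emph{Local large deviations at $x_\rho$.} The upper bound $\limsup\frac1n\log_a\Prb\{|Z_n-x_\rho|<\delta\}\le-\sup_{t\ge0}\{t(x_\rho-\delta)-\Lambda_{\rm m}(t)\}$ follows from Markov's inequality on $a^{ntZ_n}$. As $\delta\downarrow0$ this tends to $-I_+(x_\rho)$. By convexity and differentiability, the supremum defining $I_+(x_\rho)$ is attained at $t=\rho$, so $I_+(x_\rho)=I_\rho$.

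For the lower bound, tilt by $a^{n\rho Z_n}/\E[a^{n\rho Z_n}]$. Under the tilted law $\widetilde\Prb_n$, the log-moment generating function of $nZ_n$ has limit $\Lambda_{\rm m}(\rho+s)-\Lambda_{\rm m}(\rho)$ for $s\in(-\rho,\epsilon)$, using $\Lambda_{\rm m}$ on $[0,1+\epsilon]$. This limit is differentiable at $s=0$ with derivative $x_\rho$. The standard exponential Chebyshev bounds on both sides, with the convex limit differentiable at $0$, give $\widetilde\Prb_n\{|Z_n-x_\rho|<\delta\}\to1$. Changing measure back then yields the liminf bound $\ge-(\rho x_\rho-\Lambda_{\rm m}(\rho))-\rho\delta$. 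The main obstacle is here: the two-sided tilted concentration needs $\Lambda_{\rm m}$ finite and differentiable in a neighborhood of $\rho$. This is exactly why $C^1$ on $[0,1+\epsilon]$ is assumed; at $\rho=1$ the margin $\epsilon$ is essential. The convexity needed is inherited from the finite-$n$ log-moment functions.

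\emph{Error exponent.} Apply Theorem~\ref{thm:hyper} with $N=a^n$. The matched correction rank has the law of $G_n^{\rm m}$, as noted before the statement. The conditional error probability is $\min\{1,\Theta((M_n-1)(D-1)/a^n)\}$ up to constants. For the upper bound on the error probability, use Corollary~\ref{cor:rankunion} together with $\min\{1,y\}\le y^\rho$ for $\rho\in[0,1]$:
\[
\overline P_e^{(n)}\le \Bigl(\tfrac{M_n}{a^n}\Bigr)^{\rho}\E[(G_n^{\rm m})^\rho]=a^{-n[\rho(1-R_{\rm c})-\Lambda_{\rm m}(\rho)]+o(n)}.
\]
Optimizing over $\rho$ gives $\underline E\ge\sup_{0\le\rho\le1}\{\rho(1-R_{\rm c})-\Lambda_{\rm m}(\rho)\}$.

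For the lower bound on the error probability, bound the hypergeometric collision probability below by $1-(1-\frac{d-1}{a^n-1})^{M_n-1}\ge c\min\{1,(M_n-1)(d-1)/a^n\}$. Then restrict to the event $\{|Z_n-x_\rho|<\delta\}$ and choose $\rho$ to be the maximizer of the concave function $\rho(1-R_{\rm c})-\Lambda_{\rm m}(\rho)$ on $[0,1]$. Consider the cases separately.
\begin{itemize}
\item Interior maximizer: $x_\rho=1-R_{\rm c}$, the collision factor is $a^{o(n)}$, and the exponent equals $I_\rho$, which is the sup value.
\item Maximizer at $\rho=1$, so $x_1\le1-R_{\rm c}$: use the exposed point $x_1$, where the collision factor is $a^{-n(1-R_{\rm c}-x_1)}$, giving $I_1+(1-R_{\rm c}-x_1)$, which is the value at $\rho=1$.
\item $1-R_{\rm c}\le h$: use $Z_n\to h$ with probability near one, giving exponent zero.
\end{itemize}
Since the relevant values of $x$ lie in $[h,x_1]$, which is covered by exposed points by continuity of $\Lambda_{\rm m}'$, the upper and lower bounds match. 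Finally, Proposition~\ref{prop:gallageridentity} converts the expression to $E_0^{\rm U}$ form and hence to $E_r^{\rm U}(R_{\rm c})$.
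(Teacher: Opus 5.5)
Your proposal is correct and follows essentially the paper's route. The matching steps are:
\begin{itemize}
\item convexity of $\Lambda_{\rm m}$ together with Lemma~\ref{lem:arimotoalphaone}, which gives $h\le x_\rho\le 1$;
\item the $\rho$-tilted measure with two-sided exponential Chebyshev bounds, which needs differentiability at $\rho$, including the margin $\epsilon$ at $\rho=1$;
\item a Chernoff bound for the upper half of the local deviations;
\item the two-sided hypergeometric collision-kernel bound combined with $\min\{1,u\}\le u^\rho$;
\item the interior/endpoint/$1-R_{\rm c}\le h$ case split on the maximizer.
\end{itemize}
The only difference is in how you get $Z_n\xrightarrow{P}h$. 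You cite Theorem~\ref{thm:exactrank} with the matched posterior as a zero-regret assignment, whereas the paper reruns the conditional-AEP upper bound and the converse counting argument directly. Strictly, Theorem~\ref{thm:exactrank} is stated for the uniform-input correction experiment, while the first part of this statement concerns a general i.i.d.\ pair $(X,S)$. Its proof nonetheless transfers verbatim, so this is not a gap.
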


\begin{proof}
For $\rho>0$, Theorem~\ref{thm:continuousmatchedguesswork} gives
\begin{equation}
\lim_{n\to\infty} \frac{1}{n}\log_a \E \left[a^{n\rho Z_n}\right] = \Lambda_{\rm m}(\rho).
\label{eq:rightmgflimit}
\end{equation}
The same equality holds at $\rho=0$ because both sides are zero. Define $\Lambda_n(\rho) = \frac{1}{n}\log_a\E\!\left[a^{n\rho Z_n}\right]$, $\rho\ge0$. Because $0\le Z_n\le1$, each $\Lambda_n$ is convex, nondecreasing, and $1$-Lipschitz on $[0,\infty)$. The pointwise limit $\Lambda_{\rm m}$ has the same properties. Moreover, Lemma~\ref{lem:arimotoalphaone} gives $\Lambda_{\rm m}'(0^+)=H(X|S)=h$. Convexity and the $1$-Lipschitz property then give
\[
h\le \Lambda_{\rm m}'(\rho)=x_\rho\le1, \qquad \rho\in(0,1].
\]
The conditional self-information $-\log_a P_{X^n|S^n}(X^n|S^n)$ is a sum of i.i.d. integrable random variables and therefore, after normalization by $n$, converges in probability to $h$. The pointwise bound
\begin{equation}
G_n^{\rm m}(X^n|S^n) \le \frac{1}{P_{X^n|S^n}(X^n|S^n)}
\end{equation}
and the conditional AEP give $\plimsup_{n\to\infty} Z_n\le h$.
For the reverse inequality, fix $0<\delta<h$ and choose
$0<\eta<\delta$. Let $t_{n,\delta}=\left\lceil a^{n(h-\delta)}\right\rceil$ and let $A_n(s^n)$ denote the first $t_{n,\delta}$ sequences under the matched ordering. Define
\[
B_n(s^n) = \left\{ x^n: P_{X^n|S^n}(x^n|s^n) > a^{-n(h-\eta)} \right\}.
\]
Then
\begin{align}
\sum_{x^n\in A_n(s^n)} P_{X^n|S^n}(x^n|s^n) \le \sum_{x^n\in B_n(s^n)} P_{X^n|S^n}(x^n|s^n)
+ t_{n,\delta} a^{-n(h-\eta)}.
\end{align}
Averaging over $S^n$ and using the conditional AEP gives $\Prb\{G_n^{\rm m}\le t_{n,\delta}\}\longrightarrow0$. Hence $\pliminf_{n\to\infty}Z_n\ge h$. Therefore $Z_n\xrightarrow{P}h$.
Fix $\rho\in(0,1]$ and, for $\delta>0$, define $A_{n,\delta} = \{|Z_n-x_\rho|<\delta\}$. Define
\begin{equation}
\frac{\mathrm d \Prb_{n,\rho}}{\mathrm d \Prb} = \frac{a^{n\rho Z_n}}{\E[a^{n\rho Z_n}]}.
\label{eq:ranktiltmeasure}
\end{equation}
For $0<t<\min\{\rho,1+\epsilon-\rho\}$, Markov's inequality and \eqref{eq:rightmgflimit} give
\begin{align}
\limsup_{n\to\infty} \frac{1}{n}\log_a \Prb_{n,\rho}\{Z_n\ge x_\rho+\delta\}
&\le -t(x_\rho+\delta)+ \Lambda_{\rm m}(\rho+t)-\Lambda_{\rm m}(\rho), \notag\\
\limsup_{n\to\infty} \frac{1}{n}\log_a \Prb_{n,\rho}\{Z_n\le x_\rho-\delta\}
&\le t(x_\rho-\delta) + \Lambda_{\rm m}(\rho-t)-\Lambda_{\rm m}(\rho).
\label{eq:ranktiltconcentration}
\end{align}
For every fixed $\delta>0$, differentiability of
$\Lambda_{\rm m}$ at $\rho$ makes both bounds strictly negative for
sufficiently small $t>0$. 
Thus $\Prb_{n,\rho}(A_{n,\delta})\to1$, and consequently $\frac{1}{n} \log_a \Prb_{n,\rho}(A_{n,\delta}) \longrightarrow0$. Consequently,
\begin{align}
\Prb(A_{n,\delta}) = \E_{n,\rho} \left[ \id_{A_{n,\delta}} a^{-n\rho Z_n} \right]
\E[a^{n\rho Z_n}] \ge \Prb_{n,\rho}(A_{n,\delta}) a^{-n\rho(x_\rho+\delta)} \E[a^{n\rho Z_n}].
\end{align}
Using \eqref{eq:rightmgflimit} gives
\begin{equation}
\lim_{\delta\downarrow0} \liminf_{n\to\infty} \frac{1}{n}\log_a \Prb(A_{n,\delta}) \ge -I_\rho.
\end{equation}
Conversely, $A_{n,\delta} \subseteq \{Z_n\ge x_\rho-\delta\}$, and Chernoff's inequality with parameter $\rho$ gives
\begin{equation}
\limsup_{n\to\infty} \frac{1}{n}\log_a \Prb(A_{n,\delta}) \le -\rho(x_\rho-\delta) + \Lambda_{\rm m}(\rho).
\end{equation}
Letting $\delta\downarrow0$ proves \eqref{eq:exposedrightlocalld}. Since $\Lambda_{\rm m}$ is convex,
\begin{equation}
I_+(x_\rho) = \rho x_\rho-\Lambda_{\rm m}(\rho) = I_\rho.
\end{equation}
For the coding result, write $P_{e,{\rm m}}^{(n)}=\overline P_{e,\boldsymbol\pi^{\rm m}}^{(n)}(W)$, $N_n=a^n$, $J_n=M_n-1$, $c=1-R_{\rm c}$, and $V_n^{\rm m} =G_n^{\rm m}-1$. By the input-rank/correction-rank equivalence above and Theorem~\ref{thm:hyper}, conditional on $V_n^{\rm m}=k\in\{0,\ldots,N_n-1\}$, the collision probability is
\begin{equation}
f_n(k) = 1- \frac{\binom{N_n-1-k}{J_n}}{\binom{N_n-1}{J_n}},
\label{eq:matchedhyperkernel}
\end{equation}
where the numerator is zero if $N_n-1-k<J_n$. For every $n$ with $M_n\ge2$ and every $k\in\{0,\ldots,N_n-1\}$,
\begin{align}
(1-e^{-1}) \min\left\{ 1,\frac{J_nk}{N_n-1}\right\}
\le f_n(k) \le
\min\left\{ 1,\frac{J_nk}{N_n-1} \right\}.
\label{eq:hyperkerneltwosided}
\end{align}
If $N_n-1-k<J_n$, then $f_n(k)=1$, and the lower bound in \eqref{eq:hyperkerneltwosided} is immediate. Otherwise, $0\le k/(N_n-1-j)\le1$ for $0\le j<J_n$, and
\begin{align}
1-f_n(k) &= \prod_{j=0}^{J_n-1}
\left( 1-\frac{k}{N_n-1-j} \right) \notag\\
& \le \exp\left( -\sum_{j=0}^{J_n-1}\frac{k}{N_n-1-j} \right) \le
\exp\left( -\frac{J_nk}{N_n-1} \right).
\end{align}
The inequality $1-e^{-v}\ge(1-e^{-1})\min\{1,v\}$ for $v\ge0$ gives the lower bound. The upper bound follows from the union bound. Hence, for every integer sequence $k_n\in\{1,\ldots,N_n-1\}$ satisfying
\begin{equation}
\frac{1}{n}\log_a k_n\longrightarrow x>0,
\end{equation}
we have
\begin{equation}
-\frac{1}{n}\log_a f_n(k_n) \longrightarrow (c-x)^+.
\label{eq:collisionexponent}
\end{equation}
For every $0<\rho\le1$, the inequality $\min\{1,u\}\le u^\rho$ and \eqref{eq:hyperkerneltwosided} give
\begin{equation}
P_{e,{\rm m}}^{(n)} \le a^{-n\rho c+o(n)} \, \E[(G_n^{\rm m})^\rho].
\end{equation}
For $\rho=0$, the corresponding exponent lower bound is zero because $\Lambda_{\rm m}(0)=0$ and $P_{e,{\rm m}}^{(n)}\le1$. Therefore,
\begin{equation}
\liminf_{n\to\infty} -\frac{1}{n}\log_aP_{e,{\rm m}}^{(n)} \ge
\sup_{0\le\rho\le1} \{\rho c-\Lambda_{\rm m}(\rho)\}.
\label{eq:matchedexpach}
\end{equation}
Suppose first that $c\le h$. Since $Z_n\xrightarrow{P}h$, for every $0<\delta<h$, $\Prb\{Z_n\ge h-\delta\}\longrightarrow1$. Using the lower bound in \eqref{eq:hyperkerneltwosided} on this event and then letting $\delta\downarrow0$ gives
\begin{equation}
\limsup_{n\to\infty} -\frac{1}{n}\log_aP_{e,{\rm m}}^{(n)} \le0.
\end{equation}
Convexity of $\Lambda_{\rm m}$ and $\Lambda_{\rm m}'(0^+)=h$ imply $\Lambda_{\rm m}(\rho)\ge\rho h$, $\rho\ge0$, so
\begin{equation}
\sup_{0\le\rho\le1} \{\rho c-\Lambda_{\rm m}(\rho)\} = 0.
\end{equation}
Now suppose $c>h$. Let $\rho^\star$ maximize $F(\rho)=\rho c-\Lambda_{\rm m}(\rho)$ over $[0,1]$. Since $F'(0^+)=c-h>0$, every maximizer is positive.
Fix $\rho\in(0,1]$, write $x=x_\rho$, and choose $0<\delta<x$. Define
\begin{equation}
A_{n,\delta}(x)=\{|Z_n-x|<\delta\}, \qquad
k_{n,\delta}(x) = \left\lceil a^{n(x-\delta)}\right\rceil-1.
\end{equation}
On $A_{n,\delta}(x)$, $V_n^{\rm m}=G_n^{\rm m}-1\ge k_{n,\delta}(x)$. Since $f_n(k)$ is nondecreasing in $k$,
\begin{equation}
P_{e,{\rm m}}^{(n)} \ge \Prb\{A_{n,\delta}(x)\} f_n\!\left(k_{n,\delta}(x)\right).
\label{eq:eventwisecollisionlower}
\end{equation}
For each fixed $\delta\in(0,x)$, \eqref{eq:collisionexponent} applied to the deterministic sequence $k_{n,\delta}(x)$ gives
\begin{equation}
-\frac1n\log_a f_n\!\left(k_{n,\delta}(x)\right) \longrightarrow (c-x+\delta)^+.
\end{equation}
Combining this limit with \eqref{eq:exposedrightlocalld} and then letting $\delta\downarrow0$ yields
\begin{equation}
\limsup_{n\to\infty} -\frac1n\log_a P_{e,{\rm m}}^{(n)} \le I_\rho+(c-x_\rho)^+.
\label{eq:eventwisecollisionexponent}
\end{equation}
If $F$ has an interior maximizer, let $\rho^\star\in(0,1)$ be one. Then $x_{\rho^\star}=\Lambda_{\rm m}'(\rho^\star)=c$, and \eqref{eq:eventwisecollisionexponent} gives
\begin{equation}
\limsup_{n\to\infty} -\frac1n\log_a P_{e,{\rm m}}^{(n)} \le
I_{\rho^\star} = \rho^\star c-\Lambda_{\rm m}(\rho^\star).
\end{equation}
If no interior maximizer exists, $\rho^\star=1$. The endpoint optimality condition gives $x_1=\Lambda_{\rm m}'(1)\le c$, and \eqref{eq:eventwisecollisionexponent} gives
\begin{align}
\limsup_{n\to\infty} -\frac1n\log_a P_{e,{\rm m}}^{(n)}
&\le I_1+c-x_1 \notag\\
&= c-\Lambda_{\rm m}(1).
\end{align}
Combining these upper bounds with \eqref{eq:matchedexpach} proves \eqref{eq:matchedmomenterrortheorem}. The final equality follows from Proposition~\ref{prop:gallageridentity}.
\end{proof}

\begin{example}[First-order rank convergence does not determine the error exponent]
\label{ex:firstordernotexponent}
Fix a memoryless channel $W_0$ under uniform input satisfying the hypotheses of Theorem~\ref{thm:continuousmatchedrightldp}, and choose $R_{\rm c}\in(0,1)$ such that $\mathcal E_{\rm m} \coloneqq E_{\boldsymbol\pi^{\rm m}}(R_{\rm c};W_0)>0$. Let $0<\gamma<\mathcal E_{\rm m}$. Let $V_1,V_2,\ldots$ be i.i.d., independent of $\{(X_i,S_i)\}_{i\ge1}$, with $\Prb\{V_i=1\}=a^{-\gamma}$. Define the augmented memoryless channel $\widetilde W_0$ with output $\widetilde S_i=(S_i,V_i)$. Define $\widetilde h_n(s^n,v^n)=h_n(s^n)$. Since $V^n$ is independent of $(X^n,S^n)$, $P_{X^n|\widetilde S^n}(\cdot|s^n,v^n) = P_{X^n|S^n}(\cdot|s^n)$. Thus, under the prescribed tie-breaking rule, the matched correction-rank distribution is unchanged by the augmentation. Consequently, the matched uniform-subset ensemble-average error probability, and hence its exponent $\mathcal E_{\rm m}$, are the same under $\widetilde W_0$ and $W_0$. Let $\mathcal B_n=\{V_1=\cdots=V_n=1\}$, $\Prb(\mathcal B_n)=a^{-n\gamma}$. Define a deterministic, measurable, codebook-independent order $\widetilde\pi_n$ that equals the matched correction order on $\mathcal B_n^{\rm c}$ and uses its reverse permutation on $\mathcal B_n$. Let $D_n^{\rm m}$ and $\widetilde D_n$ denote the realized ranks under the matched and modified orders, respectively. Since the two ranks differ only on $\mathcal B_n$ and $\Prb(\mathcal B_n)\to0$, $\frac1n\log_a\widetilde D_n\xrightarrow{P}h$.

On $\mathcal B_n$, $\widetilde D_n=a^n+1-D_n^{\rm m}$. Moreover, $D_n^{\rm m}$ is independent of $\mathcal B_n$, and its distribution equals that of $G_n^{\rm m}(X^n|S^n)$. Since $n^{-1}\log_aG_n^{\rm m}(X^n|S^n)\xrightarrow{P}h<1$,
\[
\frac{\widetilde D_n}{a^n}\xrightarrow{P}1
\qquad
\text{conditionally on }\mathcal B_n.
\]
For all sufficiently large $n$, let $N_n=a^n$ and $J_n=M_n-1\ge1$. Conditional on $\widetilde D_n$, the no-collision probability satisfies
\[
\frac{\binom{N_n-\widetilde D_n}{J_n}}
{\binom{N_n-1}{J_n}} \le \frac{N_n-\widetilde D_n}{N_n-1},
\]
with the numerator interpreted as zero when $N_n-\widetilde D_n<J_n$. Hence the conditional collision probability on $\mathcal B_n$ converges to one in probability and, since it takes values in $[0,1]$, also in mean. Theorem~\ref{thm:hyper} therefore gives $\overline P_{e,\widetilde{\boldsymbol\pi}}^{(n)} (\widetilde W_0) \ge a^{-n\gamma}(1-o(1))$. Conversely, on $\mathcal B_n^{\rm c}$ the modified order is matched, whereas the conditional error probability on $\mathcal B_n$ is at most one. Hence
$$
\overline P_{e,\widetilde{\boldsymbol\pi}}^{(n)}
(\widetilde W_0)
\le
a^{-n\gamma}
+
\overline P_{e,\boldsymbol\pi^{\rm m}}^{(n)}(W_0).
$$

Since the second term has exponent
$\mathcal E_{\rm m}>\gamma$,

$$
\lim_{n\to\infty}
-\frac1n
\log_a
\overline P_{e,\widetilde{\boldsymbol\pi}}^{(n)}
(\widetilde W_0)
=
\gamma
<
\mathcal E_{\rm m}.
$$

Thus $\widetilde{\boldsymbol\pi}$ has the same first-order normalized log-rank limit as the matched order but a strictly smaller fixed-rate ensemble-average error exponent. No sublinear worst-case posterior-normalized rank-regret condition is imposed on $\widetilde{\boldsymbol\pi}$.
\end{example}

\begin{remark}
\label{rem:subsetiidexponent}
Under the uniform-subset ensemble, the conditional no-collision probability at realized rank $d$ is the hypergeometric quantity in \eqref{eq:hypercond}. Under the i.i.d. uniform random-coding ensemble, the other $M_n-1$ codewords are sampled independently, and the corresponding no-collision probability is $\left(1-\frac{d-1}{a^n}\right)^{M_n-1}$. Thus the two ensembles have different finite-block collision kernels. Theorem~\ref{thm:continuousmatchedrightldp} shows that the matched uniform-subset exponent equals $E_r^{\rm U}(R_{\rm c})$, the classical uniform-input random-coding exponent expression~\cite{gallager1968}. Thus the uniform-subset ensemble has this exponent value although its finite-block collision kernel differs from the i.i.d. kernel above.
\end{remark}

\subsection{Conditional tilt closure and Gallager exponent preservation}
\label{subsec:tiltclosure}

The matched error exponent in Theorem~\ref{thm:continuousmatchedrightldp} depends on
$\Lambda_{\rm m}(\rho)$ for $0\le\rho\le1$. Posterior-regret control yields first-order rank convergence, but Example~\ref{ex:firstordernotexponent} shows that this conclusion alone does not imply preservation of the rank-moment spectrum or the error exponent. We therefore impose a sufficient condition requiring the decoder posterior family to contain the
conditional power tilts associated with the matched moments. Fix a stationary memoryless channel $W_\theta$ in the dominated standard-Borel setting of Section~\ref{sec:firstorder}, under uniform input, and let the reference decision be symbolwise $h_n(r^n)=(h(r_1),\ldots,h(r_n))$. Let $E=X\ominusA h(R)$ denote the one-letter correction, and let $p_\theta(e|r)$ be a fixed measurable version of $P_\theta(E=e|R=r)$. Let $(E_i,R_i)_{i\ge1}$ be i.i.d.\ copies under this one-letter uniform-input law, and write $E^n=(E_1,\ldots,E_n)$ and $R^n=(R_1,\ldots,R_n)$. We use $P_\theta$, $\E_\theta$, and $H_\theta$ for probability, expectation, and
entropy under this law and its product distributions. Let $\mathcal V = \{p_\nu(\cdot|r):\nu\in\mathsf V\}$ be a family of fixed measurable representatives of conditional correction pmfs. For $\rho\ge0$, define $\alpha_\rho = \frac{1}{1+\rho}$ and
\begin{equation}
p_{\theta,\rho}^{\star}(e|r) = \frac{p_\theta(e|r)^{\alpha_\rho}}{Z_{\theta,\rho}(r)},
\qquad Z_{\theta,\rho}(r) = \sum_{u\in\A} p_\theta(u|r)^{\alpha_\rho}.
\label{eq:conditionalpowertilt}
\end{equation}

\begin{definition}[Conditional tilt closure]
\label{def:tiltclosure}
For $\bar\rho>0$, the family $\mathcal V$ is $\bar\rho$-tilt closed at $\theta$ if, for every $\rho\in[0,\bar\rho]$, there exists $\nu_{\theta,\rho}\in\mathsf V$ such that
\begin{equation}
p_{\nu_{\theta,\rho}}(\cdot|r) = p_{\theta,\rho}^{\star}(\cdot|r)
\quad \text{for }P_{\theta,R}\text{-a.e. }r.
\label{eq:tiltclosuredef}
\end{equation}
\end{definition}

Let $H_{\alpha,\theta}^{\rm A}(E|R)$ denote the Arimoto conditional R\'enyi entropy of $(E,R)$ under $P_\theta$, as defined in Definition~\ref{def:arimotogeneral}.
For $n\ge1$, define the product-family posterior envelope
\begin{equation}
m_n^{\mathcal V}(e^n|r^n) = \sup_{\nu\in\mathsf V} \prod_{i=1}^n p_\nu(e_i|r_i)
\label{eq:tiltclosureenvelope}
\end{equation}
and its conditional Shtarkov normalizer
\begin{equation}
C_n^{\mathcal V}(r^n) = \sum_{e^n\in\A^n} m_n^{\mathcal V}(e^n|r^n).
\label{eq:tiltclosureShtarkov}
\end{equation}
Assume that $r^n\mapsto m_n^{\mathcal V}(e^n|r^n)$ is measurable for every $e^n\in\A^n$. Let $G_n^{\mathcal V}(e^n|r^n)$ denote the rank induced by a nonincreasing ordering of
$m_n^{\mathcal V}(\cdot|r^n)$ with the prescribed deterministic tie-breaking rule, and define
\begin{equation}
G_n^{\mathcal V} = G_n^{\mathcal V}(E^n|R^n).
\end{equation}

\begin{theorem}[Rank-moment spectrum preservation under conditional tilt closure]
\label{thm:tiltclosuremoments}
Assume that $\mathcal V$ is $\bar\rho$-tilt closed at $\theta$. Suppose there exist $\kappa>\bar\rho$ and measurable sets $\mathcal T_n$ such that
\begin{equation}
s_n = \sup_{r^n\in\mathcal T_n} \log_a C_n^{\mathcal V}(r^n) = o(n)
\label{eq:tiltclosuresubexpCn}
\end{equation}
and
\begin{equation}
\limsup_{n\to\infty} \frac{1}{n} \log_a P_\theta\{R^n\notin\mathcal T_n\}
\le -\kappa.
\label{eq:tiltclosurebadset}
\end{equation}
For $\rho>0$, define
\begin{equation}
\Lambda_{{\rm m},\theta}(\rho) = \rho H_{1/(1+\rho),\theta}^{\rm A}(E|R),
\end{equation}
and set $\Lambda_{{\rm m},\theta}(0)=0$. Then, for every $\rho\in[0,\bar\rho]$,
\begin{equation}
\lim_{n\to\infty} \frac{1}{n}\log_a \E_\theta
\left[ (G_n^{\mathcal V})^\rho \right] = \Lambda_{{\rm m},\theta}(\rho).
\label{eq:tiltclosuremomentexact}
\end{equation}
\end{theorem}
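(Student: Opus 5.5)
The proof proceeds by a matching lower and upper bound. For the lower bound, fix $\rho\in(0,\bar\rho]$; the case $\rho=0$ is trivial. Since $G_n^{\mathcal V}(\cdot|r^n)$ is a bijection onto $\{1,\ldots,a^n\}$ for each $r^n$, it is one admissible ordering. Arikan's finite-alphabet lower bound applies to every ordering, not only the matched one, so it gives $\E[(G_n^{\mathcal V})^\rho\mid R^n=r^n]\ge(1+n\ln a)^{-\rho}J_\alpha(r^n)$. Integrating over $r^n$ exactly as in Lemma~\ref{lem:continuousArikan} and using the product factorization from Theorem~\ref{thm:continuousmatchedguesswork} yields
\[
\liminf_{n\to\infty}\frac1n\log_a\E_\theta[(G_n^{\mathcal V})^\rho]\ge\Lambda_{{\rm m},\theta}(\rho).
\]

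For the upper bound, split the expectation over $\{R^n\in\mathcal T_n\}$ and its complement. On the complement, use $G_n^{\mathcal V}\le a^n$, so that part contributes at most $a^{n\rho}P_\theta\{R^n\notin\mathcal T_n\}\le a^{-n(\kappa-\rho)+o(n)}$. Since $\kappa>\bar\rho\ge\rho$, this exponent is strictly negative. On the other hand $\Lambda_{{\rm m},\theta}(\rho)\ge\rho h\ge0>-(\kappa-\rho)$, because $\Lambda_{{\rm m},\theta}$ is convex with $\Lambda'(0^+)=h$. Hence this term is negligible on the exponential scale. On $\mathcal T_n$, use the Kraft-type bound from Proposition~\ref{prop:domination} applied to $Q_n=m_n^{\mathcal V}/C_n^{\mathcal V}$:
\[
G_n^{\mathcal V}(e^n|r^n)\le\frac{C_n^{\mathcal V}(r^n)}{m_n^{\mathcal V}(e^n|r^n)}\le\frac{a^{s_n}}{m_n^{\mathcal V}(e^n|r^n)}.
\]
Tilt closure gives $m_n^{\mathcal V}(e^n|r^n)\ge\prod_i p^\star_{\theta,\rho}(e_i|r_i)$ for $P_{\theta,R}$-a.e.\ $r_i$. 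This requires choosing the tilt parameter of order $\rho$ (not a different one), which is exactly why closure up to $\bar\rho$ is needed.

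It remains to evaluate the resulting bound. It equals
\[
\E_\theta[(G_n^{\mathcal V})^\rho\id_{\mathcal T_n}]\le a^{\rho s_n}\,\E_\theta\!\Bigl[\prod_{i=1}^n p^\star_{\theta,\rho}(E_i|R_i)^{-\rho}\Bigr]
= a^{\rho s_n}\Bigl(\E_\theta\bigl[p_\theta(E|R)^{-\rho\alpha}Z_{\theta,\rho}(R)^{\rho}\bigr]\Bigr)^n .
\]
Conditioning on $R=r$ gives
\[
\sum_e p_\theta(e|r)^{1-\rho\alpha}Z^\rho=\sum_e p_\theta(e|r)^{\alpha}\,Z^\rho=Z^{1+\rho}=\Bigl(\sum_e p_\theta(e|r)^\alpha\Bigr)^{1/\alpha},
\]
using $1-\rho\alpha=\alpha$. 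Integrating against $P_{\theta,R}$ gives the one-letter Arimoto integral, by the same normalization identity as in the proof of Lemma~\ref{lem:continuousArikan}. This is $a^{\Lambda_{{\rm m},\theta}(\rho)}$. Hence
\[
\limsup_{n\to\infty}\frac1n\log_a\E_\theta[(G_n^{\mathcal V})^\rho]\le\Lambda_{{\rm m},\theta}(\rho),
\]
because $s_n=o(n)$ and the bad-set term is negligible.

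The main obstacle is making the tilt-closure substitution rigorous, since it only holds $P_{\theta,R}$-a.e.\ while the envelope is a pointwise supremum over representatives. I would handle this by working on the product of the full-measure sets where the identity holds; these sets have full $P_\theta$ product measure, so expectations are unaffected. One should also check that the essential-support issues, namely $p_\theta(E|R)>0$ almost surely, keep the inverse-power integrand finite.
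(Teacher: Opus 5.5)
Your proposal is correct and follows the paper's proof essentially step for step: the same rank--mass bound $G_n^{\mathcal V}m_n^{\mathcal V}\le C_n^{\mathcal V}$ on $\mathcal T_n$, the same substitution of the order-$\rho$ tilt $p^\star_{\theta,\rho}$ into the envelope (valid almost surely), the same one-letter identity producing $Z_{\theta,\rho}(r)^{1+\rho}$, and the same $a^{n\rho}P_\theta\{R^n\notin\mathcal T_n\}$ bound off $\mathcal T_n$. The only minor difference is in the lower bound: you apply Arikan's inequality, which holds for any guessing order, directly to the envelope order, whereas the paper first compares with the matched order (which minimizes every conditional positive rank moment) and then invokes Theorem~\ref{thm:continuousmatchedguesswork}; the two routes are equivalent in substance.
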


\begin{proof}
For every $(e^n,r^n)$,
\begin{equation}
G_n^{\mathcal V}(e^n|r^n) m_n^{\mathcal V}(e^n|r^n)
\le C_n^{\mathcal V}(r^n).
\label{eq:tiltclosurerankmass}
\end{equation}
Fix $\rho\in(0,\bar\rho]$. By Definition~\ref{def:tiltclosure}, for $P_{\theta,R^n}$-almost every $r^n$,
\begin{equation}
m_n^{\mathcal V}(e^n|r^n) \ge \prod_{i=1}^n p_{\theta,\rho}^{\star}(e_i|r_i),
\qquad e^n\in\A^n.
\end{equation}
Hence, on $\{R^n\in\mathcal T_n\}$,
\begin{equation}
(G_n^{\mathcal V})^\rho \le a^{\rho s_n} \prod_{i=1}^n p_{\theta,\rho}^{\star}(E_i|R_i)^{-\rho}
\label{eq:tiltclosuremomentupperpointwise}
\end{equation}
almost surely. Since $p_\theta(E_i|R_i)>0$ almost surely for every $i$, $p_{\theta,\rho}^{\star}(E_i|R_i)>0$ almost surely. For $P_{\theta,R}$-almost every $r$, 
\begin{align}
\E_\theta \left[ p_{\theta,\rho}^{\star}(E|r)^{-\rho} \,\middle|\, R=r \right]
&  = \sum_{\substack{e\in\A\\p_\theta(e|r)>0}} p_\theta(e|r) p_{\theta,\rho}^{\star}(e|r)^{-\rho} \notag\\
&  = Z_{\theta,\rho}(r)^\rho \sum_{e\in\A}p_\theta(e|r)^{1-\alpha_\rho\rho}
= Z_{\theta,\rho}(r)^{1+\rho},
\label{eq:tiltclosureoneletteridentity}
\end{align}
because $1-\alpha_\rho\rho=\alpha_\rho$. The i.i.d. product distribution therefore yields
\begin{align}
\E_\theta \left[ (G_n^{\mathcal V})^\rho \id\{R^n\in\mathcal T_n\} \right]
\le a^{\rho s_n} \left( \E_\theta \left[ Z_{\theta,\rho}(R)^{1+\rho} \right] \right)^n.
\end{align}
By Definition~\ref{def:arimotogeneral},
\begin{equation}
\log_a \E_\theta \left[ Z_{\theta,\rho}(R)^{1+\rho} \right]
= \Lambda_{{\rm m},\theta}(\rho).
\end{equation}
Consequently,
\begin{equation}
\limsup_{n\to\infty} \frac{1}{n}\log_a \E_\theta
\left[ (G_n^{\mathcal V})^\rho \id\{R^n\in\mathcal T_n\} \right] \le
\Lambda_{{\rm m},\theta}(\rho).
\label{eq:tiltclosuretypicalupper}
\end{equation}
Since $G_n^{\mathcal V}\le a^n$,
\begin{align}
\E_\theta \left[ (G_n^{\mathcal V})^\rho \id\{R^n\notin\mathcal T_n\} \right]
\le a^{n\rho} P_\theta\{R^n\notin\mathcal T_n\}.
\end{align}
Hence
\begin{equation}
\limsup_{n\to\infty} \frac{1}{n}\log_a \E_\theta \left[ (G_n^{\mathcal V})^\rho \id\{R^n\notin\mathcal T_n\} \right] \le \rho-\kappa<0.
\end{equation}
Since $\Lambda_{{\rm m},\theta}(\rho)\ge0$, the preceding two bounds give
\begin{equation}
\limsup_{n\to\infty} \frac{1}{n}\log_a \E_\theta
\left[ (G_n^{\mathcal V})^\rho \right] \le \Lambda_{{\rm m},\theta}(\rho).
\end{equation}
For $P_{\theta,R^n}$-almost every $r^n$, a nonincreasing ordering of the matched posterior minimizes the conditional $\rho$th rank moment over deterministic permutations. Therefore,
\begin{equation}
\E_\theta \left[ (G_n^{\mathcal V})^\rho \right] \ge
\E_\theta \left[ (G_n^{\rm m})^\rho \right].
\end{equation}
Theorem~\ref{thm:continuousmatchedguesswork} gives
\begin{equation}
\lim_{n\to\infty} \frac{1}{n}\log_a \E_\theta
\left[ (G_n^{\rm m})^\rho \right]
= \Lambda_{{\rm m},\theta}(\rho),
\end{equation}
which gives the matching lower bound. For $\rho=0$, both expectations equal one.
\end{proof}

For the channel $W_\theta$, let $E_{0,\theta}^{\rm U}(\rho)$ denote Gallager's uniform-input $E_0$ function in \eqref{eq:gallagerE0uniform}, evaluated for $W_\theta$. For every $\alpha\in(0,1)$, $H_{\alpha,\theta}^{\rm A}(E|R) = H_{\alpha,\theta}^{\rm A}(X|R)$, because, for $P_{\theta,R}$-almost every $r$, the map $e\mapsto h(r)\oplusA e$ permutes the conditional probabilities. The same argument gives $H_\theta(E|R)=H_\theta(X|R)$. Consequently, Proposition~\ref{prop:gallageridentity} gives
\begin{equation}
\Lambda_{{\rm m},\theta}(\rho) = \rho-E_{0,\theta}^{\rm U}(\rho),
\qquad \rho\ge0.
\label{eq:thetaGallagerIdentity}
\end{equation}
Define
\begin{equation}
E_{r,\theta}^{\rm U}(R_{\rm c}) \coloneqq \sup_{0\le\rho\le1} \left\{ E_{0,\theta}^{\rm U}(\rho)-\rho R_{\rm c} \right\}.
\label{eq:gallagerrandomcodingtheta}
\end{equation}

\begin{theorem}[Gallager-exponent preservation under conditional $1$-tilt closure]
\label{thm:tiltclosureerror}
Assume that $\mathcal V$ is $1$-tilt closed at $\theta$ and that there exist $\kappa>1$ and measurable sets $\mathcal T_n$ satisfying \eqref{eq:tiltclosuresubexpCn} and \eqref{eq:tiltclosurebadset}. Suppose that the matched rank-moment spectrum $\Lambda_{{\rm m},\theta}$ is continuously differentiable on $[0,1+\epsilon]$ for some $\epsilon>0$ and that $0<H_\theta(E|R)<1$. Let $\{M_n\}$ satisfy $\log_a M_n=nR_{\rm c}+o(n)$ for a fixed $0<R_{\rm c}<1$. Let $\boldsymbol\pi^{\mathcal V}$ denote the envelope-order sequence and $\boldsymbol\pi_\theta^{\rm m}$ the matched-order sequence. Then their ensemble-average error exponents exist and satisfy
\begin{align}
E_{\mathcal V}(R_{\rm c};\theta) &= E_{{\rm m},\theta}(R_{\rm c}) \notag\\
&= \sup_{0\le\rho\le1} \left\{ \rho(1-R_{\rm c}) - \Lambda_{{\rm m},\theta}(\rho) \right\} \notag\\
&=  E_{r,\theta}^{\rm U}(R_{\rm c}),
\label{eq:tiltclosureerrorexact}
\end{align}
where $E_{\mathcal V}(R_{\rm c};\theta) = E_{\boldsymbol\pi^{\mathcal V}}(R_{\rm c};W_\theta)$, $E_{{\rm m},\theta}(R_{\rm c}) = E_{\boldsymbol\pi_\theta^{\rm m}}(R_{\rm c};W_\theta)$.
\end{theorem}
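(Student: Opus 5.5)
The proof has two halves. The matched half is already available: Theorem~\ref{thm:continuousmatchedrightldp}, applied to $(X,S)=(E,R)$ under $P_\theta$ together with the rank-invariance remark that follows Proposition~\ref{prop:gallageridentity}, gives $E_{{\rm m},\theta}(R_{\rm c})=\sup_{0\le\rho\le1}\{\rho(1-R_{\rm c})-\Lambda_{{\rm m},\theta}(\rho)\}=E_{r,\theta}^{\rm U}(R_{\rm c})$ via \eqref{eq:thetaGallagerIdentity}. The work is therefore to show that the envelope order $\boldsymbol\pi^{\mathcal V}$ has the same exponent. We prove matching achievability and converse bounds on $-\frac1n\log_a\overline P^{(n)}_{e,\boldsymbol\pi^{\mathcal V}}(W_\theta)$.

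\emph{Achievability.} We repeat the moment argument from the proof of Theorem~\ref{thm:continuousmatchedrightldp}. Theorem~\ref{thm:hyper} holds for any codebook-independent order, so $f_n$ and the two-sided kernel bound \eqref{eq:hyperkerneltwosided} apply to $V_n=G_n^{\mathcal V}-1$. Using $\min\{1,u\}\le u^\rho$, for every $\rho\in(0,1]$ we get
\[
\overline P^{(n)}_{e,\boldsymbol\pi^{\mathcal V}}\le a^{-n\rho(1-R_{\rm c})+o(n)}\,\E_\theta[(G_n^{\mathcal V})^\rho].
\]
Theorem~\ref{thm:tiltclosuremoments} with $\bar\rho=1$ and $\kappa>1$ then gives $\E_\theta[(G_n^{\mathcal V})^\rho]=a^{n\Lambda_{{\rm m},\theta}(\rho)+o(n)}$. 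Optimizing over $\rho$ yields $\liminf_n -\frac1n\log_a\overline P^{(n)}\ge \sup_{\rho\in[0,1]}\{\rho(1-R_{\rm c})-\Lambda_{{\rm m},\theta}(\rho)\}$.

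\emph{Converse.} The simplest route is a pointwise comparison: for each realization $(E^n,R^n)$ the collision probability $f_n(G-1)$ is nondecreasing in $G$, so it suffices to dominate the distribution of $G_n^{\mathcal V}$ stochastically from below by that of $G_n^{\rm m}$. A pointwise inequality $G_n^{\mathcal V}\ge G_n^{\rm m}$ is false in general. However, the conditional law of the rank under the matched order is stochastically smallest among all permutations: for each $r^n$ and $d$, the matched order maximizes $\Prb\{G\le d\mid R^n=r^n\}$, because its first $d$ corrections carry the largest posterior mass. Hence $\Prb_\theta\{G_n^{\mathcal V}\le d\mid R^n\}\le\Prb_\theta\{G_n^{\rm m}\le d\mid R^n\}$ for every $d$. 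Since $f_n$ is nondecreasing, integrating against this stochastic order gives
\[
\E_\theta[f_n(G_n^{\mathcal V}-1)]\ge\E_\theta[f_n(G_n^{\rm m}-1)],
\]
that is, $\overline P^{(n)}_{e,\boldsymbol\pi^{\mathcal V}}\ge\overline P^{(n)}_{e,\boldsymbol\pi^{\rm m}_\theta}$. The limsup of the exponent is therefore bounded by $E_{{\rm m},\theta}(R_{\rm c})$, which Theorem~\ref{thm:continuousmatchedrightldp} already identifies; this is where the differentiability and $0<H_\theta(E|R)<1$ hypotheses are used. Combining the two bounds shows that the limit exists and gives \eqref{eq:tiltclosureerrorexact}.

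The main point needing care is the stochastic-dominance step: we must write $\E[f_n(G-1)\mid R^n]=\sum_d \Prb\{G\ge d+1\mid R^n\}\,\bigl(f_n(d)-f_n(d-1)\bigr)$ using the increments of $f_n$, and check that the ties and null sets, together with the version of the posterior used, cause no problem. The other point to verify is that the $o(n)$ in the achievability bound is uniform in $\rho$. This holds because $J_n/(N_n-1)=a^{-n(1-R_{\rm c})+o(n)}$ with the $o(n)$ independent of $\rho$, so the supremum over $\rho$ may be taken after the limit: we take the limit at each fixed $\rho$ and then the supremum.
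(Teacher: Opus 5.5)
Your proof is correct. The achievability half and the identification of the matched exponent via Theorem~\ref{thm:continuousmatchedrightldp} and \eqref{eq:thetaGallagerIdentity} coincide with the paper's, but your converse takes a different route.

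\textbf{The paper's converse.} It argues codebook by codebook. Complete enumeration in matched posterior order returns an ML codeword, and with equiprobable messages ML minimizes the block error probability for every fixed $\C_n$. Hence $\Pe(\C_n,W_\theta;\pi_n^{\mathcal V})\ge\Pe(\C_n,W_\theta;\pi_{\theta,n}^{\rm m})$ holds before any averaging over codebooks.

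\textbf{Your converse.} You stay at the ensemble level. You combine the rank--collision identity of Theorem~\ref{thm:hyper} with two facts: conditionally on $R^n$, the matched rank is stochastically smallest, because its first $d$ corrections carry the largest posterior mass; and the kernel $f_n$ is nondecreasing. The summation-by-parts step you flag is routine. It holds because $f_n(0)=0$, and because under the auxiliary experiment the conditional law of $E^n$ given $R^n$ is $\prod_i p_\theta(\cdot|R_i)$ almost surely, with ties not affecting the top-$d$ mass.

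\textbf{What each buys.} The paper's argument gives the stronger codebook-wise comparison and needs no property of the hypergeometric kernel. Yours yields only the ensemble-average inequality. On the other hand, it is purely distributional and does not invoke MAP optimality. It also applies verbatim to any nondecreasing functional of the rank, for instance the i.i.d.-ensemble kernel of Remark~\ref{rem:subsetiidexponent} or the moment lower bound used in Theorem~\ref{thm:tiltclosuremoments}.

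\textbf{A small formal point.} The coding part of Theorem~\ref{thm:continuousmatchedrightldp} is stated for the channel input--output pair. You should therefore apply it to $(X,R)$ and transfer to corrections by the relabeling remark, rather than to $(E,R)$ directly. Since the spectra and the conditional entropies coincide, this changes nothing of substance.
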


\begin{proof}
Write $P_{e,\mathcal V}^{(n)} = \overline P_{e,\boldsymbol\pi^{\mathcal V}}^{(n)}(W_\theta)$, $P_{e,{\rm m}}^{(n)} = \overline P_{e,\boldsymbol\pi_\theta^{\rm m}}^{(n)}(W_\theta)$. Under the uniform-subset ensemble, the ensemble-averaged distribution of $(E^n,R^n)$ is the uniform-input distribution $P_\theta$. For every $0 < \rho\le1$, the rank union bound and $\min\{1,t\}\le t^\rho$ give
\begin{equation}
P_{e,\mathcal V}^{(n)} \le a^{-n\rho(1-R_{\rm c})+o(n)}
\E_\theta \left[ (G_n^{\mathcal V})^\rho \right].
\end{equation}
For $\rho=0$, the corresponding exponent lower bound is zero because $\Lambda_{{\rm m},\theta}(0)=0$ and $P_{e,\mathcal V}^{(n)}\le1$.  Theorem~\ref{thm:tiltclosuremoments}, applied with $\bar\rho=1$, therefore gives
\begin{equation}
\underline E_{\boldsymbol\pi^{\mathcal V}} (R_{\rm c};W_\theta)
\ge \sup_{0\le\rho\le1} \left\{ \rho(1-R_{\rm c}) - \Lambda_{{\rm m},\theta}(\rho) \right\}.
\label{eq:tiltclosureerrorlower}
\end{equation}
For the matched exponent, apply Theorem~\ref{thm:continuousmatchedrightldp} to $W_\theta$ with uniform input $X$ and output $R$. For $P_{\theta,R}$-almost every $r$, the map $x\mapsto x\ominusA h(r)$ is a bijection on $\A$. Hence, conditional on $R^n=r^n$, the matched correction rank has the same distribution as the matched input rank. Moreover, $H_{\alpha,\theta}^{\rm A}(E|R)=H_{\alpha,\theta}^{\rm A}(X|R)$ for every $\alpha\in(0,1)$ and $H_\theta(E|R)=H_\theta(X|R)$. Thus the matched rank-moment spectrum in Theorem~\ref{thm:continuousmatchedrightldp} equals $\Lambda_{{\rm m},\theta}$ defined above. The coding conclusion of that theorem therefore gives
\begin{equation}
E_{{\rm m},\theta}(R_{\rm c}) = \sup_{0\le\rho\le1}
\left\{ \rho(1-R_{\rm c}) - \Lambda_{{\rm m},\theta}(\rho) \right\}
= E_{r,\theta}^{\rm U}(R_{\rm c}).
\label{eq:tiltclosurematchedexponent}
\end{equation}
For every codebook $\C_n$, complete enumeration in matched posterior order returns an ML codeword. Since messages are equiprobable, ML minimizes the block error probability conditional on the fixed codebook. Therefore, $\Pe(\C_n,W_\theta;\pi_n^{\mathcal V}) \ge \Pe(\C_n,W_\theta;\pi_{\theta,n}^{\rm m})$. Averaging over $\C_n$ gives $P_{e,\mathcal V}^{(n)} \ge P_{e,{\rm m}}^{(n)}$, which implies
\begin{equation}
\overline E_{\boldsymbol\pi^{\mathcal V}} (R_{\rm c};W_\theta)
\le E_{{\rm m},\theta}(R_{\rm c}).
\end{equation}
Together with \eqref{eq:tiltclosureerrorlower} and $\underline E_{\boldsymbol\pi^{\mathcal V}} \le \overline E_{\boldsymbol\pi^{\mathcal V}}$, this proves \eqref{eq:tiltclosureerrorexact}.
\end{proof}

\begin{remark}
\label{rem:tiltlimits}
Theorem~\ref{thm:tiltclosureerror} does not require a complete rank large-deviation principle and does not establish lower-tail rank asymptotics. In particular, it does not imply error-exponent preservation for a finite posterior-grid order without a separate exponential-scale
approximation argument.
\end{remark}

\section{Generalized-Gaussian BPSK with Order Reversal and Tilt Closure}
\label{sec:signalexamples}

\subsection{BPSK-AWGN with unknown variance}

Let $\A=\{0,1\}$, $\phi(x)=(-1)^x$, and $R_i=\phi(X_i)+Z_i$, $Z_i\stackrel{\mathrm{i.i.d.}}{\sim}\mathcal N(0,\sigma^2)$. With $h(r)=\id\{r<0\}$, the matched log-likelihood ratio is
\begin{equation}
 \lambda_\sigma(r)
 =
 \ln\frac{w_\sigma(r|0)}{w_\sigma(r|1)}
 =
 \frac{2r}{\sigma^2},
\end{equation}
and the correction-posterior odds satisfy
\begin{equation}
 \frac{p_\sigma(E_i=1|r_i)}
      {p_\sigma(E_i=0|r_i)}
 =
 \exp \left(-|\lambda_\sigma(r_i)|\right).
\end{equation}
Hence the matched order ranks $e^n\in\A^n$ by nondecreasing
\begin{equation}
 \sum_{i=1}^n e_i|\lambda_\sigma(r_i)|
 =
 \frac{2}{\sigma^2}
 \sum_{i=1}^n e_i|r_i|.
\label{eq:sgrandscore}
\end{equation}

The score in \eqref{eq:sgrandscore} induces the matched SGRAND order \cite{solomon2020soft}. If $\sigma^2$ is unknown but constant over the block, the positive factor $2/\sigma^2$ does not affect the weak ordering of the correction patterns. Hence, under the prescribed deterministic tie-breaking rule, the resulting query permutation is independent of $\sigma^2$.

\subsection{Order reversal under unknown generalized-Gaussian shape}

Retain the BPSK map and use the sign reference decision $h(y)=\id\{y<0\}$. Write $Y_i=(-1)^{X_i}+Z_i$, where the additive noise samples are i.i.d.\ with density \cite{song2006ggestimation}
\begin{equation}
f_{\beta,\sigma}(z) = \frac{\beta}{2\sigma\Gamma(1/\beta)}
\exp\!\left[ -\left(\frac{|z|}{\sigma}\right)^\beta \right],
\label{eq:ggnormalized}
\end{equation}
with $\beta\in[1,2]$ and $\sigma>0$, where $\sigma$ is the generalized-Gaussian scale parameter. When $\beta=2$, $Z_i\sim\mathcal N(0,\sigma^2/2)$. For $r\ge0$, define $\ell_\beta(r) = (r+1)^\beta-|r-1|^\beta$. Under the sign reference decision, with $r=|y|$,
\begin{equation}
\ln \frac{p_{\beta,\sigma}(E=0|Y=y)}{p_{\beta,\sigma}(E=1|Y=y)}
= \sigma^{-\beta}\ell_\beta(r).
\end{equation}
Hence, for fixed $(\beta,\sigma)$, the matched correction order is obtained by ordering $e^n\in\{0,1\}^n$ in nondecreasing $\sum_{i=1}^n e_i\ell_\beta(|y_i|)$, with the prescribed deterministic tie-breaking rule, because $\sigma^{-\beta}$ is common to all correction patterns.

Consider $(|y_1|,|y_2|,|y_3|)=(2,0.6,0.6)$, $e_A=(1,0,0)$, and $e_B=(0,1,1)$. For $\beta=1$, $\sum_{i=1}^3(e_A)_i\ell_1(|y_i|) = 2 < 2.4 = \sum_{i=1}^3(e_B)_i\ell_1(|y_i|)$. For $\beta=2$, $\sum_{i=1}^3(e_A)_i\ell_2(|y_i|) = 8 > 4.8 = \sum_{i=1}^3(e_B)_i\ell_2(|y_i|)$. Thus $e_A$ precedes $e_B$ for $\beta=1$, whereas $e_B$ precedes $e_A$ for $\beta=2$. Therefore, changing the shape parameter can reverse the relative order of two correction patterns for the same received block.

\begin{theorem}[Posterior-grid and rank-regret bounds for generalized-Gaussian BPSK]
\label{thm:gguniversal}
Let $\Theta = [\beta_-,\beta_+]\times[\sigma_-,\sigma_+]$, $1\le\beta_-\le\beta_+\le2$, $0<\sigma_-<\sigma_+<\infty$. Let $X_i\stackrel{\mathrm{i.i.d.}}{\sim} \operatorname{Unif}\{0,1\}$ and $Y_i=(-1)^{X_i}+Z_i$, where, for $\theta=(\beta,\sigma)\in\Theta$, the $Z_i$ are i.i.d. with density $f_{\beta,\sigma}$ and are independent of $X^n$. Define $E_i=X_i\oplus h(Y_i)$, $h(y)=\id\{y<0\}$. For $\theta=(\beta,\sigma)\in\Theta$, let $p_\theta(e|y)$ be the fixed measurable conditional-posterior version
\begin{equation}
p_\theta(e|y) = \frac{f_{\beta,\sigma}\!\left(y-(-1)^{e\oplus h(y)}\right)}{
f_{\beta,\sigma}(y-1)+f_{\beta,\sigma}(y+1)},
\qquad e\in\{0,1\},
\label{eq:ggposteriorversion}
\end{equation}
and define $p_{\theta,n}(e^n|y^n) = \prod_{i=1}^n p_\theta(e_i|y_i)$. For every $\kappa>0$, there exists $c_\kappa>0$ such that, with
\begin{equation}
\mathcal T_{n,\kappa} = \left\{ y^n\in\mathbb R^n: \max_{1\le i\le n}|y_i| \le B_{n,\kappa}
\right\}, \qquad
B_{n,\kappa} = 1+c_\kappa n^{1/\beta_-},
\label{eq:ggtypicalset}
\end{equation}
\begin{equation}
\limsup_{n\to\infty} \frac1n \log_2 \sup_{\theta\in\Theta}
P_\theta\{Y^n\notin\mathcal T_{n,\kappa}\} \le-\kappa.
\label{eq:ggbadprob}
\end{equation}
Define $\vartheta(\theta)\coloneqq(\beta,\ln\sigma)$, $\Theta_\eta\coloneqq\vartheta(\Theta)$. For every $y^n\in\mathcal T_{n,\kappa}$,
\begin{equation}
\left| \log_2 p_{\theta,n}(e^n|y^n) - \log_2 p_{\theta',n}(e^n|y^n) \right|
\le \mathcal L_{n,\kappa}^{\mathcal T} \left\| \vartheta(\theta)-\vartheta(\theta') \right\|_2
\label{eq:ggposteriorLip}
\end{equation}
for all $\theta,\theta'\in\Theta$ and $e^n\in\{0,1\}^n$, where
\begin{equation}
\mathcal L_{n,\kappa}^{\mathcal T} \le Cn \left[ 1+ B_{n,\kappa}^{\beta_+} \bigl(1+\ln(2+B_{n,\kappa})\bigr) \right]
\label{eq:ggLip}
\end{equation}
for a finite constant $C$ depending only on $\Theta$.
Consequently, there exist a finite constant $C_{\Theta,\kappa}$ and finite posterior-grid mixtures $Q_{n,\kappa}^{\rm gg}$ such that, for all sufficiently large $n$,
\begin{equation}
\sup_{\substack{ y^n\in\mathcal T_{n,\kappa}\\ \theta\in\Theta,\, e^n\in\{0,1\}^n}}
\log_2 \frac{p_{\theta,n}(e^n|y^n)}{Q_{n,\kappa}^{\rm gg}(e^n|y^n)}
\le C_{\Theta,\kappa}\log_2 n.
\label{eq:ggregret}
\end{equation}
Let $\pi_{n,\kappa}^{\rm gg}(y^n)$ order corrections in nonincreasing $Q_{n,\kappa}^{\rm gg}(\cdot|y^n)$, using the prescribed deterministic tie-breaking rule. Its induced posterior-normalized rank regret on $\mathcal T_{n,\kappa}$ satisfies, for all sufficiently large $n$,
\begin{equation}
\sup_{\substack{ y^n\in\mathcal T_{n,\kappa}\\ \theta\in\Theta,\, e^n\in\{0,1\}^n}}
\log_2\!\left[ G_{\pi_{n,\kappa}^{\rm gg}}(e^n|y^n) p_{\theta,n}(e^n|y^n) \right]
\le C_{\Theta,\kappa}\log_2 n.
\label{eq:ggrankregret}
\end{equation}
Then, for every fixed $\theta\in\Theta$,
\begin{equation}
\frac1n \log_2 G_{\pi_{n,\kappa}^{\rm gg}}(E^n|Y^n) \xrightarrow{P}
H_\theta(E|Y) = H_\theta(X|Y) \quad\text{under }P_\theta.
\label{eq:ggfirstorder}
\end{equation}
\end{theorem}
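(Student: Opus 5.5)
The plan has four parts, in this order: the bad-set bound \eqref{eq:ggbadprob}, the log-Lipschitz bound \eqref{eq:ggposteriorLip}--\eqref{eq:ggLip}, the grid regret \eqref{eq:ggregret}--\eqref{eq:ggrankregret}, and the first-order convergence \eqref{eq:ggfirstorder}.

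\textbf{Bad set.} Since $|Y_i|\le 1+|Z_i|$, a union bound gives
\[
P_\theta\{Y^n\notin\mathcal T_{n,\kappa}\}\le n\,P_\theta\{|Z|>c_\kappa n^{1/\beta_-}\}.
\]
The generalized-Gaussian tail satisfies
\[
P\{|Z|>t\}\le C'\exp\bigl[-(t/\sigma_+)^{\beta}\bigr]\le C'\exp\bigl[-(t/\sigma_+)^{\beta_-}\bigr]
\]
for $t\ge\sigma_+$, uniformly over $\Theta$. This uses $\beta\ge\beta_-$ and a prefactor bound $\beta/(2\sigma\Gamma(1/\beta))$ uniform on the compact $\Theta$, together with a standard incomplete-Gamma tail estimate. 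With $t=c_\kappa n^{1/\beta_-}$ the exponent is $-(c_\kappa/\sigma_+)^{\beta_-}n$. Choosing $c_\kappa$ so that $(c_\kappa/\sigma_+)^{\beta_-}\log_2 e>\kappa$ gives \eqref{eq:ggbadprob}, because the factor $n$ is subexponential.

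\textbf{Lipschitz bound.} Work one letter at a time. With $r=|y|$ and $L=\sigma^{-\beta}\ell_\beta(r)\ge0$, we have $\ln p_\theta(0|y)=-\ln(1+e^{-L})$ and $\ln p_\theta(1|y)=-L-\ln(1+e^{-L})$. Both are $1$-Lipschitz in $L$, since the derivatives have magnitude at most $1$. It therefore suffices to bound the gradient of $L$ in $\eta=(\beta,\ln\sigma)$. We have $\partial_{\ln\sigma}L=-\beta L$. Also
\[
\partial_\beta L=-\ln\sigma\,L+\sigma^{-\beta}\bigl[(r+1)^\beta\ln(r+1)-|r-1|^\beta\ln|r-1|\bigr],
\]
and $x^\beta\ln x$ is bounded near $0$ uniformly for $\beta\ge1$. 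For $r\le B_{n,\kappa}$ both partials are $O\bigl(1+B^{\beta_+}(1+\ln(2+B))\bigr)$, uniformly on $\Theta$. Convexity of $\Theta_\eta$ (a rectangle) lets the mean value theorem apply. Summing the $n$ letters gives \eqref{eq:ggLip}; note that this bound is $\mathrm{poly}(n)$ and applies to all $e^n$.

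\textbf{Grid.} I repeat the proof of Theorem~\ref{thm:coverregret}, restricted to $y^n\in\mathcal T_{n,\kappa}$ and working in the $\eta$-coordinates. Because $\Theta_\eta$ is a compact two-dimensional rectangle, $N(\Theta_\eta,\epsilon)\le(A/\epsilon)^2$. Choose $\epsilon_n=1/\mathcal L_{n,\kappa}^{\mathcal T}$, which is $n^{-O(1)}$ because $B_{n,\kappa}^{\beta_+}\lesssim n^{\beta_+/\beta_-}\le n^2$. Take $Q^{\rm gg}_{n,\kappa}$ to be the uniform mixture over the net, extended by a fixed pmf off $\mathcal T_{n,\kappa}$. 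Then the regret is at most $1+2\log_2(A\mathcal L_{n,\kappa}^{\mathcal T})=O(\log_2 n)$, which gives \eqref{eq:ggregret}. Since $Q^{\rm gg}_{n,\kappa}$ is a pmf, $G Q\le1$ as in Proposition~\ref{prop:domination}, and \eqref{eq:ggrankregret} follows.

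\textbf{First order.} The upper bound comes from Theorem~\ref{thm:oneshotrank}(ii), used pointwise on $\mathcal T_{n,\kappa}$:
\[
\log_2 G\le \jmath_{\theta,n}+C\log_2 n .
\]
By \eqref{eq:ggbadprob}, $P_\theta(\mathcal T^c_{n,\kappa})\to0$, and the memoryless AEP \eqref{eq:hwmemoryless} gives $\jmath_{\theta,n}/n\to H_\theta(E|Y)$. Together these yield $\plimsup\le H_\theta$. The lower bound is Lemma~\ref{lem:converse}. Finally, $H_\theta(E|Y)=H_\theta(X|Y)$ by Proposition~\ref{prop:href}, which gives \eqref{eq:ggfirstorder}.

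The main obstacle is the uniform derivative bound in $\beta$, specifically the $|r-1|^\beta\ln|r-1|$ term near $r=1$ and its uniformity over $\beta\in[1,2]$. I would handle it with $\sup_{0<x\le1,\beta\ge1}|x^\beta\ln x|\le e^{-1}$.
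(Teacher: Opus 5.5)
Your proposal is correct and follows the paper's proof in all four steps: an incomplete-Gamma tail estimate plus a union bound for the bad set, a mean-value-theorem Lipschitz bound on the rectangle $\Theta_\eta$, a uniform grid mixture with $\mathcal L^{\mathcal T}_{n,\kappa}\epsilon_n+\log_2 K_n=\mathcal O(\log n)$, and first-order convergence from $G Q\le 1$ on $\mathcal T_{n,\kappa}$ combined with Lemma~\ref{lem:converse}. The differences are only in technical details. You differentiate the scalar log-odds $\sigma^{-\beta}\ell_\beta(r)$, so the normalizing constant and its digamma term cancel, whereas the paper bounds the log-density gradients and uses a convex-combination argument for the denominator. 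You also take $\epsilon_n=1/\mathcal L^{\mathcal T}_{n,\kappa}$, and you unpack Theorem~\ref{thm:exactrank} directly instead of first establishing posterior universality.
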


\begin{proof}
For uniform binary input,
\begin{equation}
P_\theta(X=x|Y=y) = \frac{f_{\beta,\sigma}(y-(-1)^x)}{f_{\beta,\sigma}(y-1)+f_{\beta,\sigma}(y+1)}.
\label{eq:gginputposterior}
\end{equation}
Equation~\eqref{eq:ggposteriorversion} follows because, for each fixed $y$, the map $x\mapsto x\oplus h(y)$ is a permutation of $\{0,1\}$.
Let $\eta=\ln\sigma$ and $t=|z|e^{-\eta}$. Then
\begin{align}
\partial_\eta\ln f_{\beta,e^\eta}(z) &= -1+\beta t^\beta, \notag\\
\partial_\beta\ln f_{\beta,e^\eta}(z)
&= \frac1\beta + \frac{\psi(1/\beta)}{\beta^2} - t^\beta\ln t,
\label{eq:ggdensitygradients}
\end{align}
where $t^\beta\ln t\coloneqq0$ at $t=0$. Compactness of $\Theta_\eta$ implies that, for every $B\ge1$,
\begin{equation}
\sup_{\substack{\vartheta\in\Theta_\eta\\ |y|\le B,\,x\in\{0,1\}}}
\left\| \nabla_\vartheta \ln f_{\beta,\sigma}(y-(-1)^x) \right\|_2
\le C_0 \left[ 1+ B^{\beta_+} \bigl(1+\ln(2+B)\bigr) \right]
\label{eq:ggdensitygradientbound}
\end{equation}
for a finite $C_0$ depending only on $\Theta$. Indeed, $|y-(-1)^x|\le B+1$; the parameter-only terms in \eqref{eq:ggdensitygradients} are uniformly bounded on $\Theta_\eta$; $t^\beta|\ln t|$ is uniformly bounded for $0\le t\le1$; and for $t\ge1$ it is bounded by a constant multiple of $B^{\beta_+}(1+\ln(2+B))$.
The gradient of the logarithm of the denominator in \eqref{eq:gginputposterior} is a convex combination of the two corresponding log-density gradients. Hence the norm of the one-letter log-posterior gradient is at most twice the right-hand side of \eqref{eq:ggdensitygradientbound}. Since $\Theta_\eta = [\beta_-,\beta_+]\times[\ln\sigma_-,\ln\sigma_+]$ is convex, summing over coordinates and applying the mean-value theorem
gives \eqref{eq:ggposteriorLip}--\eqref{eq:ggLip} after conversion to base-two logarithms.
For $u\ge\sigma_+$, let $s=1/\beta\in[1/2,1]$. Since $1\le\beta\le2$, we have $s\in[1/2,1]$. The generalized-Gaussian tail satisfies
\begin{equation}
P_\theta\{|Z|>u\} = \frac{ \Gamma\!\left( s,(u/\sigma)^\beta \right)}{\Gamma(s)}.
\end{equation}
For $x\ge1$ and $s\in[1/2,1]$, $\Gamma(s,x) = \int_x^\infty t^{s-1}e^{-t}\, \mathrm{d} t \le e^{-x}$. Since $\inf_{s\in[1/2,1]}\Gamma(s)>0$, there exists $C_1<\infty$, depending only on $\Theta$, such that, for sufficiently large $B$,
\begin{equation}
\sup_{\theta\in\Theta} P_\theta\{|Y|>B\} \le
C_1 \exp\!\left[ -\left( \frac{B-1}{\sigma_+} \right)^{\beta_-} \right].
\label{eq:gguniformtail}
\end{equation}
The union bound with $B=B_{n,\kappa}$ therefore gives
\begin{equation}
\log_2 \sup_{\theta\in\Theta} P_\theta\{Y^n\notin\mathcal T_{n,\kappa}\} \le
-\frac1{\ln2} \left( \frac{c_\kappa}{\sigma_+} \right)^{\beta_-}n + \mathcal O(\log n).
\end{equation}
Choosing $c_\kappa$ so that $\frac1{\ln2} \left( \frac{c_\kappa}{\sigma_+} \right)^{\beta_-} > \kappa$ proves \eqref{eq:ggbadprob}.
From \eqref{eq:ggLip} and $B_{n,\kappa}=\mathcal O(n^{1/\beta_-})$,
\begin{equation}
\mathcal L_{n,\kappa}^{\mathcal T} = \mathcal O\!\left( n^{1+\beta_+/\beta_-}\log n \right).
\end{equation}
Fix $q>1+\frac{\beta_+}{\beta_-}$ and set $\epsilon_n=n^{-(q+1)}$. For all sufficiently large $n$, $\mathcal L_{n,\kappa}^{\mathcal T}\le n^q$. Since $\Theta_\eta\subset\mathbb R^2$ is compact, it admits an $\epsilon_n$-net $\{\vartheta_{n,j}\}_{j=1}^{K_n}\subset\Theta_\eta$ with $K_n=\mathcal O(\epsilon_n^{-2})$. Let $\theta_{n,j}\in\Theta$ correspond to $\vartheta_{n,j}$ and define
\begin{equation}
Q_{n,\kappa}^{\rm gg}(e^n|y^n) = \frac1{K_n} \sum_{j=1}^{K_n} p_{\theta_{n,j},n}(e^n|y^n).
\label{eq:gggridmixture}
\end{equation}
For every $\theta\in\Theta$, choose $j$ such that $\|\vartheta(\theta)-\vartheta_{n,j}\|_2\le\epsilon_n$. Then, on $\mathcal T_{n,\kappa}$,
\begin{align}
\log_2 \frac{p_{\theta,n}(e^n|y^n)}{Q_{n,\kappa}^{\rm gg}(e^n|y^n)}
\le \mathcal L_{n,\kappa}^{\mathcal T}\epsilon_n + \log_2 K_n 
= \mathcal O(\log n),
\end{align}
uniformly over $\theta$ and $e^n$, proving \eqref{eq:ggregret}. Because $\pi_{n,\kappa}^{\rm gg}$ orders corrections in nonincreasing $Q_{n,\kappa}^{\rm gg}(\cdot|y^n)$,
\[
G_{\pi_{n,\kappa}^{\rm gg}}(e^n|y^n) Q_{n,\kappa}^{\rm gg}(e^n|y^n) \le1.
\]
Combining this inequality with \eqref{eq:ggregret} gives \eqref{eq:ggrankregret}.
Fix $\theta\in\Theta$ and $\epsilon>0$. For all sufficiently large $n$, the right-hand side of \eqref{eq:ggregret} is less than $n\epsilon$. Hence
\begin{align}
P_\theta\!\left\{ \frac1n \log_2 \frac{p_{\theta,n}(E^n|Y^n)}{Q_{n,\kappa}^{\rm gg}(E^n|Y^n)} >\epsilon \right\} \le P_\theta\{Y^n\notin\mathcal T_{n,\kappa}\} \longrightarrow0.
\end{align}
Lemma~\ref{lem:regretnegativetail} controls the negative tail, so $Q_{n,\kappa}^{\rm gg}$ is posterior-universal at $\theta$. Moreover, $\E_\theta[-\log_2p_\theta(E|Y)] = H_\theta(E|Y) \le1$, so the i.i.d. conditional AEP holds. Theorem~\ref{thm:exactrank} therefore gives \eqref{eq:ggfirstorder}. Finally, $x\mapsto x\oplus h(y)$ is bijective for every $y$, so $H_\theta(E|Y)=H_\theta(X|Y)$.
\end{proof}

\begin{remark}
The incomplete-gamma tail bound in the proof of Theorem~\ref{thm:gguniversal} uses $1/\beta\le1$, and therefore requires $\beta_-\ge1$. The theorem does not cover $\beta<1$.
\end{remark}

\subsection{Tilt closure for generalized-Gaussian BPSK}
\label{subsec:ggtiltclosure}

For $y\in\mathbb R$, let $r=|y|$. Under the sign reference decision,
the correction-posterior log-odds satisfy
\begin{equation}
\lambda_{\beta,\sigma}(r) \coloneqq  \ln\frac{p_{\beta,\sigma}(E=0|Y=y)}{p_{\beta,\sigma}(E=1|Y=y)} = \sigma^{-\beta}\ell_\beta(r), \qquad \ell_\beta(r) = (r+1)^\beta-|r-1|^\beta.
\label{eq:gglogit}
\end{equation}

\begin{theorem}[Gallager-exponent preservation for generalized-Gaussian BPSK]
\label{thm:ggtiltcomplete}
Let the prescribed channel parameter set be $\Theta_0 = [\beta_-,\beta_+]\times[\sigma_-,\sigma_+]$, $1\le\beta_-\le\beta_+\le2$, $0<\sigma_-<\sigma_+<\infty$, and let the decoder parameter set be $\widetilde\Theta = [\beta_-,\beta_+]\times[\sigma_-,\widetilde\sigma_+]$, $\widetilde\sigma_+ \ge 2^{1/\beta_-}\sigma_+$. For $\nu=(\beta,\sigma)\in\widetilde\Theta$, let $p_\nu$ denote the conditional correction-posterior kernel induced by the generalized-Gaussian BPSK channel under uniform input and the sign reference decision $h(y)=\id\{y<0\}$. Explicitly,
\begin{equation}
p_\nu(e|y) = \frac{ f_{\beta,\sigma} \bigl(y-(-1)^{e\oplus h(y)}\bigr)}{f_{\beta,\sigma}(y-1)
+ f_{\beta,\sigma}(y+1)},
\qquad e\in\{0,1\}.
\label{eq:ggcorrectionposterior}
\end{equation}
Define $\mathcal V_{\widetilde\Theta} \coloneqq \{p_\nu:\nu\in\widetilde\Theta\}$.
For $n\ge1$, define the product-posterior envelope
\begin{equation}
m_n^{\mathcal V_{\widetilde\Theta}}(e^n|y^n) \coloneqq \sup_{\nu\in\widetilde\Theta}
\prod_{i=1}^n p_\nu(e_i|y_i),
\label{eq:ggenvelope}
\end{equation}
and let $\pi_n^{\rm tc}(y^n)$ order corrections in nonincreasing $m_n^{\mathcal V_{\widetilde\Theta}}(\cdot|y^n)$, using the prescribed deterministic tie-breaking rule. Define $G_n^{\rm tc} \coloneqq G_{\pi_n^{\rm tc}}(E^n|Y^n)$.
For every $\theta\in\Theta_0$, the decoder posterior family $\mathcal V_{\widetilde\Theta}$ is $1$-tilt closed at $\theta$. Moreover, for every $\rho\in[0,1]$,
\begin{equation}
\lim_{n\to\infty} \frac{1}{n} \log_2 \E_\theta \left[ (G_n^{\rm tc})^\rho \right]
= \Lambda_{{\rm m},\theta}(\rho).
\label{eq:ggtiltcompletemoment}
\end{equation}
For every finite $K>0$, the matched spectrum $\Lambda_{{\rm m},\theta}$ is continuously differentiable on $[0,K]$, and $0 < H_\theta(E|Y) = H_\theta(X|Y) < 1$.
Hence, under the uniform-subset code ensemble, for every fixed $0<R_{\rm c}<1$ and every code-size sequence satisfying $\log_2 M_n=nR_{\rm c}+o(n)$, the ensemble-average error exponent $E_{\rm tc}(R_{\rm c};\theta) \coloneqq E_{\boldsymbol\pi^{\rm tc}}(R_{\rm c};W_\theta)$, $\boldsymbol\pi^{\rm tc}\coloneqq\{\pi_n^{\rm tc}\}_{n\ge1}$, exists and satisfies
\begin{align}
E_{\rm tc}(R_{\rm c};\theta) &= E_{{\rm m},\theta}(R_{\rm c}) \notag\\
&= \sup_{0\le\rho\le1} \left\{ \rho(1-R_{\rm c}) - \Lambda_{{\rm m},\theta}(\rho) \right\} \notag\\
&= E_{r,\theta}^{\rm U}(R_{\rm c}).
\label{eq:ggtiltcompleteerror}
\end{align}
\end{theorem}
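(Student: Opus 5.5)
The plan is to reduce everything to Theorem~\ref{thm:tiltclosureerror} (via Theorem~\ref{thm:tiltclosuremoments} with $\bar\rho=1$), applied with decoder family $\mathcal V=\mathcal V_{\widetilde\Theta}$ at each fixed $\theta=(\beta,\sigma)\in\Theta_0$. Four hypotheses must be verified:
\begin{enumerate}
\item $1$-tilt closure;
\item sets $\mathcal T_n$ with $\sup_{\mathcal T_n}\log_2 C_n^{\mathcal V}=o(n)$ and bad-set exponent $\kappa>1$;
\item continuous differentiability of $\Lambda_{{\rm m},\theta}$;
\item $0<H_\theta(E|Y)<1$.
\end{enumerate}
Measurability of the envelope \eqref{eq:ggenvelope} is handled by the separable-parameter argument in Section~\ref{sec:model}. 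On the common domain $\mathbb R^n$, $\nu\mapsto\prod_i p_\nu(e_i|y_i)$ is continuous and $y^n\mapsto\prod_i p_\nu(e_i|y_i)$ is measurable. Hence the supremum may be taken over a countable dense subset of $\widetilde\Theta$.

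\emph{Tilt closure.} A pmf on $\{0,1\}$ is determined by its log-odds. By \eqref{eq:gglogit}, the power tilt $p_\theta^{\alpha_\rho}/Z_{\theta,\rho}$ has log-odds $\alpha_\rho\sigma^{-\beta}\ell_\beta(|y|)$. This equals $\tilde\sigma^{-\beta}\ell_\beta(|y|)$ with $\tilde\sigma=\sigma(1+\rho)^{1/\beta}$. Therefore $p^\star_{\theta,\rho}=p_{(\beta,\tilde\sigma)}$ holds for every $y$, not only almost everywhere. For $\rho\in[0,1]$ the new scale satisfies
\[
\sigma_-\le\tilde\sigma\le\sigma_+2^{1/\beta}\le\sigma_+2^{1/\beta_-}\le\widetilde\sigma_+,
\]
so $(\beta,\tilde\sigma)\in\widetilde\Theta$. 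This gives $1$-tilt closure at every $\theta\in\Theta_0$.

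\emph{Shtarkov complexity and bad sets.} I would apply Theorem~\ref{thm:gguniversal} with $\Theta$ replaced by the compact rectangle $\widetilde\Theta$, which has the same shape range $[\beta_-,\beta_+]\subseteq[1,2]$, and with some fixed $\kappa>1$, say $\kappa=2$.
\begin{itemize}
\item The bad-set bound \eqref{eq:ggbadprob} holds uniformly over $\widetilde\Theta\supseteq\Theta_0$, so in particular under $P_\theta$. This gives \eqref{eq:tiltclosurebadset}.
\item On $\mathcal T_{n,\kappa}$, \eqref{eq:ggregret} gives $p_{\nu,n}(e^n|y^n)\le n^{C}Q^{\rm gg}_{n,\kappa}(e^n|y^n)$ uniformly in $\nu\in\widetilde\Theta$. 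Taking the supremum over $\nu$ and summing over $e^n$ yields $C_n^{\mathcal V}(y^n)\le n^{C}$. Hence $s_n=\mathcal O(\log n)=o(n)$, which is \eqref{eq:tiltclosuresubexpCn}.
\end{itemize}
Theorem~\ref{thm:tiltclosuremoments} with $\bar\rho=1$ then gives \eqref{eq:ggtiltcompletemoment}.

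\emph{Smoothness of the spectrum.} For $\rho\ge0$ write
\[
\Lambda_{{\rm m},\theta}(\rho)=\log_2\E_\theta\bigl[Z_{\theta,\rho}(Y)^{1+\rho}\bigr],
\qquad
Z_{\theta,\rho}(y)=\sum_e p_\theta(e|y)^{1/(1+\rho)},
\]
using the one-letter identity \eqref{eq:tiltclosureoneletteridentity}. The value at $\rho=0$ is $0$, consistent with the extension $\Lambda_{{\rm m},\theta}(0)=0$.
\begin{itemize}
\item Since $f_{\beta,\sigma}>0$ everywhere, $p_\theta(e|y)\in(0,1)$. Moreover $1\le Z_{\theta,\rho}(y)\le 2^{\rho/(1+\rho)}\le 2$ for $\rho\ge 0$.
\item The $\rho$-derivative of $p^{1/(1+\rho)}$ is $-(1+\rho)^{-2}p^{1/(1+\rho)}\ln p$.
\item For $\rho\in[0,K]$ the exponent $1/(1+\rho)$ is bounded below by $1/(1+K)$, so $|p^{\alpha}\ln p|$ is bounded uniformly in $p\in(0,1)$.
\end{itemize}
These bounds give uniformly bounded first derivatives of the integrand, with continuous dependence on $\rho$. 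Dominated convergence then makes the expectation, and hence $\Lambda_{{\rm m},\theta}$, continuously differentiable on $[0,K]$ (one-sided at $0$). The $\rho$-derivative of the integrand is a product of bounded continuous factors, so it is dominated by a constant, which is integrable against the probability measure.

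\emph{Entropy nondegeneracy.}
\begin{itemize}
\item $H_\theta(E|Y)>0$ because the posterior is nondegenerate for every $y$.
\item $H_\theta(E|Y)<1$ because $\ell_\beta(r)>0$ for $r>0$. Thus the log-odds \eqref{eq:gglogit} are nonzero, and the posterior is non-uniform, on a set of positive $P_{\theta,Y}$-probability.
\item $H_\theta(E|Y)=H_\theta(X|Y)$ by the bijection $x\mapsto x\oplus h(y)$.
\end{itemize}
Theorem~\ref{thm:tiltclosureerror} with $\epsilon=1$, together with \eqref{eq:thetaGallagerIdentity}, then yields \eqref{eq:ggtiltcompleteerror}.

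I expect the main obstacle to be the complexity step: one must check that Theorem~\ref{thm:gguniversal} legitimately transfers to the enlarged decoder rectangle $\widetilde\Theta$, and that its pointwise regret bound dominates the \emph{supremum} envelope (not just each $p_\nu$) uniformly on $\mathcal T_{n,\kappa}$. The transfer should be routine, since the constants there depend only on the compact parameter set, but this is where care is needed.
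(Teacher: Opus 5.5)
Your proposal is correct, and its overall structure matches the paper's proof. The shared steps are: the log-odds scaling $\sigma\mapsto\sigma(1+\rho)^{1/\beta}$ for $1$-tilt closure; the countable-dense-subset argument for envelope measurability; and Theorem~\ref{thm:gguniversal} applied to the rectangle $\widetilde\Theta$ with some $\kappa>1$, which gives $C_n^{\mathcal V_{\widetilde\Theta}}\le 2^{\mathcal O(\log n)}$ on $\mathcal T_n$. That application is admissible because $\widetilde\sigma_+>\sigma_-$ and $[\beta_-,\beta_+]\subseteq[1,2]$. The proof then concludes with Theorem~\ref{thm:tiltclosuremoments} (with $\bar\rho=1$) and Theorem~\ref{thm:tiltclosureerror}. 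The obstacle you flag is handled exactly as you propose. The grid-mixture regret bound is uniform in $\nu$, so it dominates the supremum envelope pointwise on $\mathcal T_n$, and the bad-set bound is uniform over $\widetilde\Theta\supseteq\Theta_0$.

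The one genuinely different step is the differentiability of $\Lambda_{{\rm m},\theta}$.

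\emph{Paper's route.} The paper differentiates the joint-density integral $A_\theta(\rho)=\int_{\mathbb R}\bigl[(f_{\beta,\sigma}(y-1)/2)^{\alpha}+(f_{\beta,\sigma}(y+1)/2)^{\alpha}\bigr]^{1/\alpha}\,\mathrm dy$ against Lebesgue measure. Because this integrand involves unnormalized densities, the dominating function must be assembled from generalized-Gaussian-specific facts. These are growth of $|\ln f_{\beta,\sigma}|$ like $1+|y|^\beta$ and tail decay like $\exp(-c|y|^\beta)$.

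\emph{Your route.} You normalize by the output density and write $\Lambda_{{\rm m},\theta}(\rho)=\log_2\E_\theta[Z_{\theta,\rho}(Y)^{1+\rho}]$ under the probability law of $Y$. Then $1\le Z_{\theta,\rho}\le 2$ and $|\partial_\rho p^{\alpha_\rho}|=\alpha_\rho^2p^{\alpha_\rho}|\ln p|\le\alpha_\rho/e$. Hence a constant dominates both the integrand and its $\rho$-derivative on $[0,K]$.

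\emph{Comparison.} Your argument is shorter and uses only that posteriors over a finite alphabet lie in $[0,1]$, with $0^{\alpha}=0$ for $\alpha>0$. The same reasoning therefore shows that the $C^1$ hypothesis of Theorem~\ref{thm:tiltclosureerror} holds automatically for every finite-input channel under uniform input. The paper's route is tied to the generalized-Gaussian density and gains nothing extra here.

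A minor citation point: the representation $\Lambda_{{\rm m},\theta}(\rho)=\log_2\E_\theta[Z_{\theta,\rho}(Y)^{1+\rho}]$ follows directly from Definition~\ref{def:arimotogeneral} after dividing $p_x(s)$ by the output density. It does not come from the conditional one-letter identity you cite, although both expressions involve the same quantity.
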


\begin{proof}
Fix $\theta=(\beta,\sigma)\in\Theta_0$ and $\rho\in[0,1]$, and set $\alpha_\rho=\frac{1}{1+\rho}$. For a binary pmf $(p_0,p_1)$ with positive entries, the normalized $\alpha_\rho$-power transform has odds ratio $\left(\frac{p_0}{p_1}\right)^{\alpha_\rho}$. Hence its log-odds equal $\alpha_\rho$ times the original log-odds. By \eqref{eq:gglogit}, the conditional power tilt $p_{\theta,\rho}^\star(\cdot|y)$ therefore has log-odds $\alpha_\rho \sigma^{-\beta} \ell_\beta(|y|)$. Define
\begin{equation}
\sigma_\rho \coloneqq \sigma\alpha_\rho^{-1/\beta} = \sigma(1+\rho)^{1/\beta}.
\label{eq:ggtiltscale}
\end{equation}
Since $\sigma_\rho^{-\beta} = \alpha_\rho\sigma^{-\beta}$, the two strictly positive binary pmfs have the same odds. Hence $p_{\theta,\rho}^\star(\cdot|y) = p_{(\beta,\sigma_\rho)}(\cdot|y)$, $y\in\mathbb R$. Moreover,
\begin{align}
\sigma_- \le \sigma \le \sigma_\rho \le
\sigma\,2^{1/\beta} \le \sigma_+\,2^{1/\beta_-}
\le \widetilde\sigma_+.
\end{align}
Thus $(\beta,\sigma_\rho)\in\widetilde\Theta$ for every $\rho\in[0,1]$, proving that $\mathcal V_{\widetilde\Theta}$ is $1$-tilt closed at every $\theta\in\Theta_0$.
We next verify measurability of the envelope. For each fixed $(e^n,y^n)$, the map $\nu \longmapsto \prod_{i=1}^n p_\nu(e_i|y_i)$ is continuous on the compact set $\widetilde\Theta$. Let $D\subset\widetilde\Theta$ be a countable dense subset. Continuity in $\nu$ gives
\begin{equation}
m_n^{\mathcal V_{\widetilde\Theta}}(e^n|y^n) = \sup_{\nu\in D} \prod_{i=1}^n p_\nu(e_i|y_i).
\label{eq:ggenvelopemeasurable}
\end{equation}
For every $\nu\in D$, the function of $y^n$ on the right-hand side is measurable. Hence $y^n\mapsto m_n^{\mathcal V_{\widetilde\Theta}}(e^n|y^n)$ is measurable as a countable supremum of measurable functions.

Fix $\kappa>1$. Apply Theorem~\ref{thm:gguniversal} to the compact parameter set $\widetilde\Theta$. It yields measurable sets $\mathcal T_n$ and finite posterior-grid mixtures $Q_n$ such that
\begin{equation}
\limsup_{n\to\infty} \frac{1}{n} \log_2 \sup_{\nu\in\widetilde\Theta}
P_\nu\{Y^n\notin\mathcal T_n\} \le -\kappa
\label{eq:ggtiltbadset}
\end{equation}
and
\begin{equation}
\Gamma_n \coloneqq \sup_{\substack{ y^n\in\mathcal T_n\\ \nu\in\widetilde\Theta,\, e^n\in\{0,1\}^n}} \log_2 \frac{\prod_{i=1}^n p_\nu(e_i|y_i)}{Q_n(e^n|y^n)}
= \mathcal O(\log n).
\label{eq:ggtiltgridregret}
\end{equation}
Therefore, for every $y^n\in\mathcal T_n$, $m_n^{\mathcal V_{\widetilde\Theta}}(e^n|y^n) \le 2^{\Gamma_n}Q_n(e^n|y^n)$. Summing over $e^n\in\{0,1\}^n$ gives
\begin{equation}
C_n^{\mathcal V_{\widetilde\Theta}}(y^n) \coloneqq \sum_{e^n\in\{0,1\}^n} m_n^{\mathcal V_{\widetilde\Theta}}(e^n|y^n)
\le 2^{\Gamma_n},
\end{equation}
and hence
\begin{equation}
\sup_{y^n\in\mathcal T_n} \log_2 C_n^{\mathcal V_{\widetilde\Theta}}(y^n)
= \mathcal O(\log n) = o(n).
\label{eq:ggtiltShtarkov}
\end{equation}
Because $\theta\in\Theta_0\subset\widetilde\Theta$, \eqref{eq:ggtiltbadset} also yields
\[
\limsup_{n\to\infty} \frac{1}{n} \log_2 P_\theta\{Y^n\notin\mathcal T_n\} \le -\kappa.
\]
Theorem~\ref{thm:tiltclosuremoments}, with $\bar\rho=1$, therefore gives \eqref{eq:ggtiltcompletemoment}. It remains to verify the differentiability and nondegeneracy conditions of Theorem~\ref{thm:tiltclosureerror}. For $\rho\ge0$, set $\alpha=\frac{1}{1+\rho}$ and define
\begin{equation}
A_\theta(\rho) \coloneqq \int_{\mathbb R}
\left[ \left( \frac{f_{\beta,\sigma}(y-1)}{2} \right)^\alpha
+ \left( \frac{f_{\beta,\sigma}(y+1)}{2} \right)^\alpha \right]^{1/\alpha} \mathrm dy.
\label{eq:ggArimotoA}
\end{equation}
For each fixed $y$, the map $x\mapsto x\oplus h(y)$ is a permutation of $\{0,1\}$. Hence, for $\rho>0$, $H_{\alpha,\theta}^{\rm A}(E|Y) = H_{\alpha,\theta}^{\rm A}(X|Y)$, and the definition of the matched spectrum gives
\begin{equation}
\Lambda_{{\rm m},\theta}(\rho)
=
\log_2 A_\theta(\rho).
\label{eq:ggLambdaA}
\end{equation}
The same identity holds at $\rho=0$, because $A_\theta(0)=1$ and $\Lambda_{{\rm m},\theta}(0)=0$.
Fix $K>0$. For $\rho\in[0,K]$, $\alpha\in[(1+K)^{-1},1]$. Let $a(y)=\frac12 f_{\beta,\sigma}(y-1)$, $b(y)=\frac12 f_{\beta,\sigma}(y+1)$, and define $S_\rho(y)=a(y)^\alpha+b(y)^\alpha$, $g_\rho(y)=S_\rho(y)^{1/\alpha}$. By concavity of $u\mapsto u^\alpha$ on $[0,\infty)$,
\begin{equation}
g_\rho(y) \le 2^{1/\alpha-1}\bigl(a(y)+b(y)\bigr)
\le 2^K\bigl(a(y)+b(y)\bigr).
\label{eq:ggintegrandbound}
\end{equation}
Because $a(y),b(y)>0$, $g_\rho(y)$ is continuously differentiable in $\rho$. Define
\[
w_a(y) = \frac{a(y)^\alpha}{S_\rho(y)}, \qquad w_b(y) = \frac{b(y)^\alpha}{S_\rho(y)}.
\]
Since $\partial_\rho\alpha=-\alpha^2$,
\begin{equation}
\partial_\rho\ln g_\rho(y) = \ln S_\rho(y) - \alpha
\left[ w_a(y)\ln a(y) + w_b(y)\ln b(y) \right].
\label{eq:gglogderivative}
\end{equation}
Let $m(y)=\max\{a(y),b(y)\}$. Then
\[
m(y)^\alpha \le S_\rho(y) \le 2m(y)^\alpha.
\]
Consequently,
\begin{equation}
\left|
\partial_\rho g_\rho(y)
\right|
\le
g_\rho(y)
\left[
\ln2
+
2|\ln a(y)|
+
2|\ln b(y)|
\right].
\label{eq:ggderivativebound}
\end{equation}
For the generalized-Gaussian density, there exist finite constants $C_\theta,C_\theta',c_\theta>0$ such that, for all $y\in\mathbb R$,
\begin{align}
|\ln a(y)|+|\ln b(y)| &\le C_\theta(1+|y|^\beta),
\label{eq:gglogdensitybound}\\
a(y)+b(y) &\le C_\theta' \exp(-c_\theta|y|^\beta).
\label{eq:ggdensitytailbound}
\end{align}
Equations~\eqref{eq:ggintegrandbound}, \eqref{eq:ggderivativebound}, \eqref{eq:gglogdensitybound}, and \eqref{eq:ggdensitytailbound} provide an integrable dominating function, independent of $\rho\in[0,K]$, for both $g_\rho$ and $\partial_\rho g_\rho$. Differentiation under the integral is therefore valid, and dominated convergence applied to $\partial_\rho g_\rho$ shows that the derivative is continuous on $[0,K]$. Thus $A_\theta\in C^1([0,K])$. Since $A_\theta(\rho)>0$, \eqref{eq:ggLambdaA} yields $\Lambda_{{\rm m},\theta}\in C^1([0,K])$. As $K>0$ was arbitrary, $\Lambda_{{\rm m},\theta}$ is continuously differentiable on every compact interval in $[0,\infty)$.
Finally, $f_{\beta,\sigma}(z)>0$ for every $z\in\mathbb R$, so $0<P_\theta(X=0|Y=y)<1$ for every $y$, which implies $H_\theta(X|Y)>0$. Moreover, $P_\theta(X=0|Y=y)=\frac12$ if and only if $f_{\beta,\sigma}(y-1)=f_{\beta,\sigma}(y+1)$. By the generalized-Gaussian density formula, this equality is equivalent to $|y-1|=|y+1|$, and hence to $y=0$. Since $Y$ has a density, $P_\theta\{Y=0\}=0$. Therefore $h_2\!\left(P_\theta(X=0|Y)\right)<1$ almost surely, where $h_2$ is the binary entropy function, and hence $H_\theta(X|Y)<1$. Finally, conditional on $Y$, the map $X\mapsto E=X\oplus h(Y)$ is bijective, so $H_\theta(E|Y)=H_\theta(X|Y)$. All hypotheses of Theorem~\ref{thm:tiltclosureerror} are therefore satisfied, and \eqref{eq:ggtiltcompleteerror} follows.
\end{proof}

Parameter values in $\widetilde\Theta\setminus\Theta_0$ are auxiliary decoder parameters and do not belong to the prescribed channel parameter set $\Theta_0$.

\section{Finite-Block Numerical Illustrations and Gallager Benchmark}
\label{sec:numerical}

The numerical studies examine finite-block order reversal and posterior-normalized rank regret, compare the matched order with an auxiliary posterior-envelope order containing the conditional power tilts required by the sufficient exponent-preservation condition for $\rho\in[0,1]$, and estimate the channel-averaged uniform-subset error probability relative to Gallager's uniform-input random-coding exponent.

Figure~\ref{fig:ggn_numerical}(a) plots the generalized-Gaussian correction costs from Section~\ref{sec:signalexamples}. For $e_A=(1,0,0)$ and $e_B=(0,1,1)$ at received magnitudes $(2,0.6,0.6)$, define $c_\beta(e_A)=\ell_\beta(2)$, $c_\beta(e_B)=2\ell_\beta(0.6)$. Numerically, the costs are equal at $\beta\approx1.257$. Since $c_1(e_A)<c_1(e_B)$ and $c_2(e_A)>c_2(e_B)$, the two corrections have opposite relative orders at $\beta=1$ and $\beta=2$.

For Fig.~\ref{fig:ggn_numerical}(b), let $\Theta_{\rm f} = \{1,1.25,1.5,1.75,2\} \times \{0.8,1,1.2\}$ be a set of $15$ generalized-Gaussian parameter pairs. For $n\in\{4,8,12,16\}$, all $2^n$ correction patterns are enumerated for the magnitude block obtained by truncating or periodically repeating $(3.92,0.86,1.86,0.90,0.73,1.12,1.37,0.76)$. Let $r^{(n)}$ denote the resulting length-$n$ magnitude block. Under the sign reference decision, the generalized-Gaussian correction posterior depends on $y_i$ only through $|y_i|$; throughout the deterministic-block calculations below, $p_{\theta,n}(\cdot|r^{(n)})$ denotes this posterior evaluated at any received block with magnitudes $r^{(n)}$. All such posterior probabilities are strictly positive. For an ordering $\pi_n$, define
\begin{align}
\mathfrak r_n \bigl( \pi_n,\Theta_{\rm f}\mid r^{(n)} \bigr)
&= \max_{\theta\in\Theta_{\rm f}} \max_{e^n\in\{0,1\}^n} \log_2 \left[ G_{\pi_n} \bigl( e^n\mid r^{(n)} \bigr) p_{\theta,n} \bigl( e^n\mid r^{(n)} \bigr) \right],
\label{eq:numpointregret}\\
\mathfrak r_n^\star \bigl( \Theta_{\rm f}\mid r^{(n)} \bigr)
&= \min_{\pi_n(r^{(n)})} \mathfrak r_n \bigl( \pi_n,\Theta_{\rm f}\mid r^{(n)} \bigr),
\label{eq:numpointminimax}
\end{align}
and
\begin{equation}
\Delta \mathfrak r_n(\pi_n) = \mathfrak r_n \bigl( \pi_n,\Theta_{\rm f}\mid r^{(n)} \bigr)
- \mathfrak r_n^\star \bigl( \Theta_{\rm f}\mid r^{(n)} \bigr).
\label{eq:numexcessregret}
\end{equation}
By Theorem~\ref{thm:exactregret}, the posterior-envelope order satisfies $\Delta \mathfrak r_n=0$. For each displayed block, the computed equal-weight posterior-mixture order over $\Theta_{\rm f}$ also attains $\Delta \mathfrak r_n=0$. Each fixed plug-in order shown in Fig.~\ref{fig:ggn_numerical}(b) has $\Delta \mathfrak r_n>0$ at every displayed blocklength.

Exact ties arising from repeated magnitudes are preserved and resolved only by the prescribed deterministic pattern tie-breaking rule. Corrections having identical ordering scores are therefore treated as members of the same score-tie class. For $\beta\in \{1,2\}$, the identities $\ell_1(r)=2\min \{r,1\}$ and $\ell_2(r)=4r$ permit exact rational evaluation of the corresponding subset-sum scores for the decimal-valued observation blocks, so no numerical tolerance is required to determine score equality. This posterior-mixture equality holds for the displayed observation blocks and the finite parameter set $\Theta_{\rm f}$; no approximation result for the continuous parameter family follows.

\begin{figure*}[t]
\centering
\subfloat[\label{fig:ggn-order-reversal}]{%
    \includegraphics[width=0.45\textwidth]{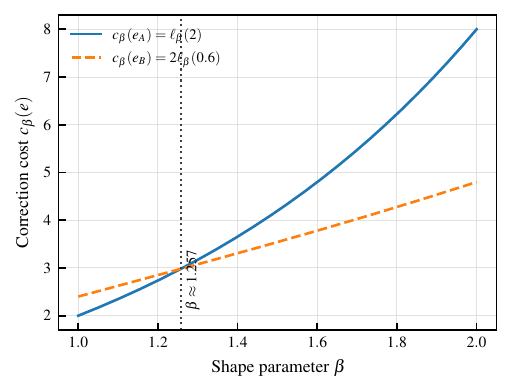}%
}
~
\subfloat[\label{fig:ggn-finite-regret}]{%
    \includegraphics[width=0.45\textwidth]{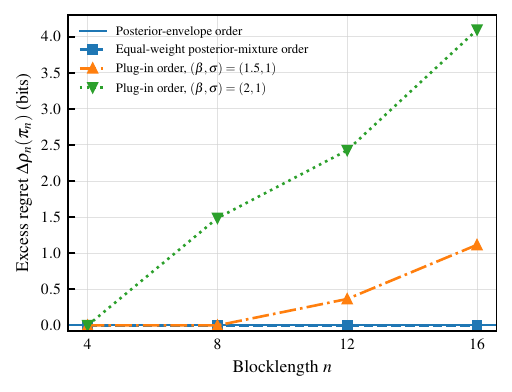}%
}
\vspace{-7pt}
\caption{Finite-block generalized-Gaussian illustrations.
(a) Correction costs for $e_A$ and $e_B$.
(b) Excess posterior-normalized rank regret relative to the finite-class minimax value for $\Theta_{\rm f}$.}
\label{fig:ggn_numerical}
\vspace{-5pt}
\end{figure*}

For a finite-block illustration of the tilt-containing construction, fix $\theta_0=(\beta,\sigma)=(1.5,1)$ and define the auxiliary posterior family
\begin{equation}
\mathcal V_{\rm tc} = \left\{ p_{\beta,\sigma'} \,\middle|\,
\beta\in\{1,1.25,1.5,1.75,2\},
\quad 0.8\le\sigma'\le2.4 \right\}.
\label{eq:numtiltclosedfamily}
\end{equation}
For $\theta_0$, $\sigma_\rho=(1+\rho)^{2/3}$, $0\le\rho\le1$, so the full conditional power-tilt path for $\rho\in[0,1]$ is contained in $\mathcal V_{\rm tc}$.

For $n\in\{4,6,8,10,12\}$, the magnitude block is obtained by truncating or periodically repeating $(3.92,0.86,$ $1.86,0.90,0.73,1.12,1.37,0.76)$. All $2^n$ corrections are enumerated, and $M_n=2^{n/2}$. For each listed $\beta$, the supremum over $\sigma'\in[0.8,2.4]$ is evaluated separately. Equivalently, $t=(\sigma')^{-\beta} \in[2.4^{-\beta},0.8^{-\beta}]$. For fixed $e^n$ and $r^n$,
\begin{align}
\ln p_{\beta,\sigma'}(e^n|r^n) &= t T_\beta(e^n;r^n)
- \sum_{i=1}^n \ln\!\left( 1+\exp[t\ell_\beta(r_i)] \right),
\label{eq:numscaleobjective}\\
T_\beta(e^n;r^n) &\coloneqq \sum_{i:e_i=0}\ell_\beta(r_i).
\end{align}
Moreover,
\begin{equation}
\frac{\partial^2}{\partial t^2} \ln p_{\beta,\sigma'}(e^n|r^n) =
- \sum_{i=1}^n \ell_\beta(r_i)^2 \frac{\exp[t\ell_\beta(r_i)]}{\left(1+\exp[t\ell_\beta(r_i)]\right)^2} \le0.
\end{equation}
Thus the conditional log-posterior is concave in $t$. Its maximum over the compact scale interval is attained either at an endpoint or at an interior stationary point. For each fixed $\beta$, corrections with the same $T_\beta(e^n;r^n)$ have the same scale-optimized posterior value. Exact ties are retained before applying the prescribed deterministic tie-breaking rule.

For a deterministic order $\pi_n$, define the normalized logarithm of the conditional mean rank
\begin{equation}
\mu_n(\pi_n|r^n) = \frac1n \log_2 \sum_{e^n} p_{\theta_0,n}(e^n|r^n) G_{\pi_n}(e^n|r^n),
\label{eq:numconditionalmoment}
\end{equation}
and, for codebook size $M$, the conditional uniform-subset ensemble error probability
\begin{align}
\overline P_{e,\pi_n}^{(n)}(r^n;M) &= \sum_{e^n} p_{\theta_0,n}(e^n|r^n)
\left[ 1- \frac{ \binom{ 2^n-G_{\pi_n}(e^n|r^n)}{M-1}}{\binom{2^n-1}{M-1}} \right].
\label{eq:numconditionalerror}
\end{align}
with the convention that $\binom{m}{k}=0$ for $m<k$.

Figure~\ref{fig:ggn_tilt_finite} compares the matched posterior order with the envelope order induced by \eqref{eq:numtiltclosedfamily}. In \eqref{eq:numconditionalerror}, this study uses $M=M_n=2^{n/2}$. The displayed conditional mean-rank and ensemble-error quantities differ at finite blocklength. These finite-block computations use an auxiliary family containing the full conditional power-tilt path for $\rho\in[0,1]$ at $\theta_0$; no asymptotic exponent equality or blocklength monotonicity follows from them.

\begin{figure*}[t]
\centering
\subfloat[\label{fig:ggn-tilt-rank}]{%
    \includegraphics[width=0.45\textwidth]{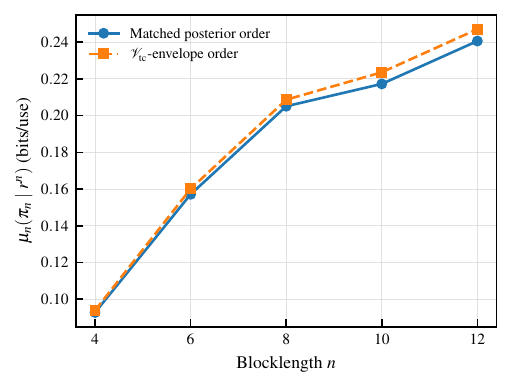}%
}
~
\subfloat[\label{fig:ggn-tilt-error}]{%
    \includegraphics[width=0.45\textwidth]{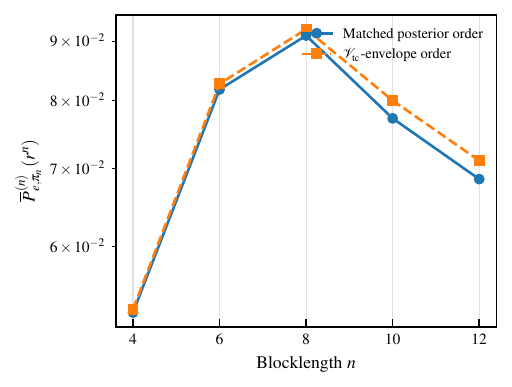}%
}
\vspace{-7pt}
\caption{Finite-block matched and tilt-containing envelope orders for
$(\beta,\sigma)=(1.5,1)$.
(a) Normalized logarithm of the conditional mean rank.
(b) Conditional uniform-subset ensemble error probability at $R_{\rm c}=1/2$.}
\label{fig:ggn_tilt_finite}
\vspace{-5pt}
\end{figure*}

\subsection{Channel-averaged finite-block error estimates and Gallager benchmark}
\label{subsec:numgallager}

Fix $\theta_0=(\beta,\sigma)=(1.5,1)$, $R_{\rm c}=\frac14$, and let $f=f_{1.5,1}$ and $\alpha_\rho=(1+\rho)^{-1}$. The matched rank-moment spectrum is $\Lambda_{\rm m}(\rho) = \log_2 \int_{\mathbb R} \frac12 \left[ f(y-1)^{\alpha_\rho} + f(y+1)^{\alpha_\rho} \right]^{1/\alpha_\rho} \, \mathrm{d}y$. The integral is truncated to $[-16,16]$ and evaluated using a $600$-node Gauss--Legendre rule separately on $[-16,-1]$, $[-1,1]$, and $[1,16]$. As a numerical stability check, repeating the computation with $300$ and $600$ nodes and truncation limits between $12$ and $20$ changes the maximized exponent by less than $2\times10^{-11}$ bits/use. A separate numerical evaluation of $E_0^{\rm U}(\rho)$ from its defining integral agrees with $E_0^{\rm U}(\rho)=\rho-\Lambda_{\rm m}(\rho)$ to the reported numerical precision. Numerical maximization over $0\le\rho\le1$ gives $E_r^{\rm U}(1/4)=0.14448327\ \text{bits/use}$, $\rho^\star\approx0.99882$. For the envelope order, use
\[
\mathcal V_{\rm exp} = \left\{ p_{\beta,\sigma'} \,\middle|\, \beta\in\{1,1.5,2\},
\quad 0.8\le\sigma'\le2.4 \right\}.
\]
For $\theta_0$, the conditional power-tilt scales satisfy
\[
\sigma_\rho=(1+\rho)^{2/3} \in[1,2^{2/3}],
\qquad 0\le\rho\le1,
\]
so the full conditional power-tilt path for $\rho\in[0,1]$ is contained in $\mathcal V_{\rm exp}$.

For $n\in\{4,8,12\}$, set $M_n=2^{n/4}$. The supremum over the scale parameter in the envelope is evaluated using the concave $t$-optimization in \eqref{eq:numscaleobjective}. For an ordering $\pi_n$, define
\begin{equation}
\overline P_{e,\pi_n}^{(n)} = \E_{\theta_0} \left[ \overline P_{e,\pi_n}^{(n)}(|Y^n|;M_n) \right],
\label{eq:numchannelaverage}
\end{equation}
where $|Y^n|=(|Y_1|,\ldots,|Y_n|)$ and $\overline P_{e,\pi_n}^{(n)}(r^n;M_n)$ is given by \eqref{eq:numconditionalerror}. The expectation is estimated from $N_{{\rm MC},4}=2000$, $N_{{\rm MC},8}=1000$, and $N_{{\rm MC},12}=300$ independent output blocks. For each sampled block, all $2^n$ corrections are enumerated and the conditional error probability is computed from the exact hypergeometric expression. Monte Carlo sampling is therefore used only for the expectation over $Y^n$.

\begin{figure*}[t]
\centering
\subfloat[\label{fig:ggn-gallager-functions}]{%
    \includegraphics[width=0.445\textwidth]{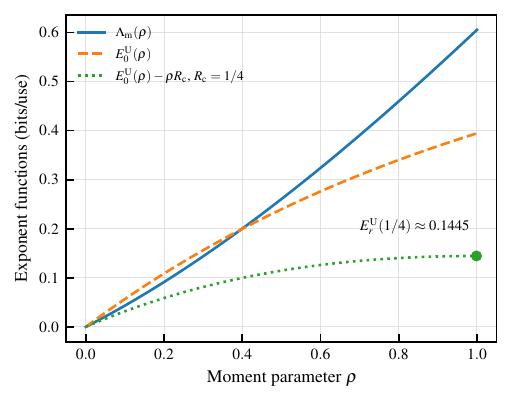}%
}
~
\subfloat[\label{fig:ggn-channel-averaged}]{%
    \includegraphics[width=0.445\textwidth]{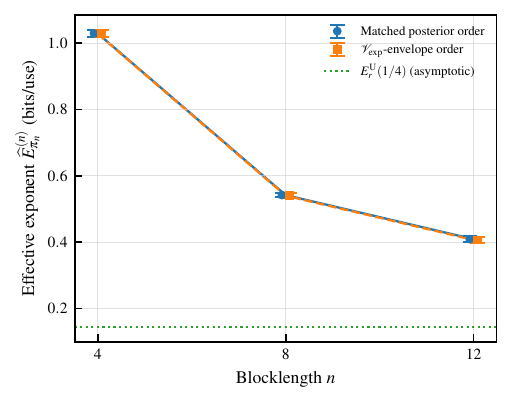}%
}
\vspace{-7pt}
\caption{Gallager benchmark and channel-averaged finite-block comparison for generalized-Gaussian BPSK with $(\beta,\sigma)=(1.5,1)$ and $R_{\rm c}=1/4$.
(a) $\Lambda_{\rm m}(\rho)$, $E_0^{\rm U}(\rho)$, and $E_0^{\rm U}(\rho)-\rho R_{\rm c}$.
(b) Matched and envelope effective-exponent estimates; error bars are monotone transforms of pointwise nominal $95\%$ Student-$t$ intervals for the mean conditional error probability over channel outputs.}
\label{fig:ggn_gallager}
\end{figure*}

Let
\begin{equation}
\widehat P_{e,\pi_n}^{(n)} = \frac1{N_{{\rm MC},n}} \sum_{j=1}^{N_{{\rm MC},n}} \overline P_{e,\pi_n}^{(n)}(|Y_j^n|;M_n)
\end{equation}
and define the corresponding Monte Carlo plug-in effective-exponent estimate
\begin{equation}
\widehat E_{\pi_n}^{(n)} = -\frac1n \log_2 \widehat P_{e,\pi_n}^{(n)}.
\label{eq:numeffectiveexponent}
\end{equation}
Figure~\ref{fig:ggn_gallager}(b) reports \eqref{eq:numeffectiveexponent} for the matched and envelope orders. The markers and error bars are horizontally offset by $\mp0.08$ for visibility; the connecting curves use the actual blocklengths.

All Monte Carlo experiments use the fixed random seed $20260819$, with separate pseudorandom streams across blocklengths. For each order and blocklength, let $[L_n,U_n]$ be the pointwise nominal $95\%$ Student-$t$ interval for the mean in \eqref{eq:numchannelaverage}. Whenever $L_n>0$, $p\mapsto-n^{-1}\log_2p$ is strictly decreasing on $(0,\infty)$, so the monotone transform of this interval is $\left[ -\frac1n\log_2 U_n,\, -\frac1n\log_2 L_n \right]$. These transformed intervals quantify only Monte Carlo uncertainty in the estimated channel-output expectation; they do not include numerical quadrature or scale-optimization error. They are nominal Student-$t$ intervals for a bounded, potentially skewed conditional-error variable and are not interpreted as having exact finite-sample $95\%$ coverage. Because the conditional-error variable is bounded but potentially skewed, these are nominal Student-$t$ intervals rather than intervals with exact finite-sample $95\%$ coverage. They quantify Monte Carlo uncertainty in the estimated channel-output expectation. The smallest displayed effective-exponent lower endpoint is $0.39828$ bits/use, attained at $n=12$ for the envelope order. No convergence conclusion follows from its separation from $E_r^{\rm U}(1/4)$ at this blocklength. Writing $\widehat P_{e,{\rm m}}^{(n)}$ and $\widehat P_{e,{\rm env}}^{(n)}$ for the matched and envelope point estimates, respectively, $100\left( \frac{\widehat P_{e,{\rm env}}^{(n)}}{\widehat P_{e,{\rm m}}^{(n)}} -1 \right)\%$ equals $0.294\%$, $0.930\%$, and $1.751\%$ at $n=4$, $8$, and $12$, respectively. The three finite-block values do not establish monotonicity with blocklength. Under the full hypotheses of Theorem~\ref{thm:tiltclosureerror}, the corresponding asymptotic exponent equality holds.

\section{Conclusion}

Under channel uncertainty, admissible channels may induce different matched correction orders for the same observation. For each admissible observation, the posterior-envelope order minimizes the worst-case posterior-normalized rank regret over admissible channel--correction pairs with positive posterior probability. Under the uniform-subset code ensemble, this pointwise minimax result does not by itself determine the ensemble-average error exponent, which depends on the distribution of the realized correction rank, including its right tail. For a fixed memoryless channel under uniform input, an auxiliary posterior-envelope order preserves the matched rank-moment spectrum on $[0,1]$ when the decoder posterior family is $1$-tilt closed at the channel parameter and has subexponential conditional Shtarkov complexity on observation sets whose complement probabilities decay exponentially faster than the reciprocal correction-space cardinality. If the matched spectrum is continuously differentiable on $[0,1+\epsilon]$ for some $\epsilon>0$ and $0<H_\theta(E|R)<1$, then, under the uniform-subset code ensemble, for every fixed $0<R_{\rm c}<1$ and every code-size sequence satisfying $\log_a M_n=nR_{\rm c}+o(n)$, the auxiliary order attains the matched ensemble-average error exponent, which equals Gallager's uniform-input random-coding exponent. For generalized-Gaussian BPSK with uncertain shape and scale, the posterior-grid first-order conditions and, separately, the tilt-containing product-envelope exponent-preservation conditions hold. Changing the scale at fixed shape leaves the correction order unchanged, whereas changing the shape can reverse the relative order of two correction patterns for the same received block. Conditional power tilting preserves the shape parameter and changes only the scale parameter. The prescribed channel class determines the finite-block minimax envelope, whereas the decoder posterior family determines the product-posterior envelope used for exponent preservation. The decoder posterior family may coincide with the posterior family induced by the prescribed channel class, or it may include auxiliary posteriors not induced by channels in that class.

\section*{Acknowledgment}

This work was supported by the Swiss National Science Foundation (SNSF)
under Grant No.~222339.

\bibliographystyle{IEEEtran}
\bibliography{references}

\end{document}

%% file: preamble_arXiv.tex
\usepackage[letterpaper,margin=1in]{geometry}
\usepackage[utf8]{inputenc}
\usepackage[T1]{fontenc}
\usepackage{microtype}

\usepackage{amsmath}
\usepackage{amssymb}
\usepackage{amsthm}
\usepackage{mathtools}

\usepackage{graphicx}
\usepackage{subfig}
\DeclareGraphicsExtensions{.pdf,.png,.jpg,.jpeg}

\usepackage{enumitem}
\usepackage{cite}
\usepackage{url}

\usepackage[
    colorlinks=true,
    linkcolor=red,
    citecolor=blue,
    urlcolor=blue
]{hyperref}
\usepackage{orcidlink}

\providecommand{\keywords}[1]{%
  \par\vspace{0.5em}%
  \noindent\textbf{\textit{Keywords---}} #1\par
}

\theoremstyle{plain}
\newtheorem{theorem}{Theorem}
\newtheorem{lemma}{Lemma}
\newtheorem{proposition}{Proposition}
\newtheorem{corollary}{Corollary}

\theoremstyle{definition}
\newtheorem{definition}{Definition}
\newtheorem{assumption}{Assumption}
\newtheorem{example}{Example}

\theoremstyle{remark}
\newtheorem{remark}{Remark}

\newcommand{\A}{\mathcal{A}}
\providecommand{\C}{}
\renewcommand{\C}{\mathcal{C}}
\newcommand{\Wcal}{\mathcal{W}}
\newcommand{\Scal}{\mathcal{S}}
\newcommand{\Rcal}{\mathcal{R}}

\newcommand{\E}{\mathbb{E}}
\newcommand{\Prb}{\mathbb{P}}

\newcommand{\oplusA}{\oplus}
\newcommand{\ominusA}{\ominus}

\newcommand{\ceil}[1]{\left\lceil #1\right\rceil}

\DeclareMathOperator*{\argmax}{arg\,max}
\newcommand{\plimsup}{\operatorname*{p-lim\,sup}}
\newcommand{\pliminf}{\operatorname*{p-lim\,inf}}
\newcommand{\esssup}{\operatorname*{ess\,sup}}

\newcommand{\id}{\mathbf{1}}
\newcommand{\Pe}{P_{\mathrm e}}